\newtheorem{assumption}{Assumption} 
\newtheorem{lemma}{Lemma}
\newtheorem{theorem}{Theorem}
\DeclareMathOperator*{\argmax}{arg\,max}
\DeclareMathOperator*{\argmin}{arg\,min}
\def\ps@titlepagestyle{%
  \def\@oddhead{}%
  \def\@evenhead{}%
  \def\@oddfoot{\hfil\thepage\hfil}%
  \def\@evenfoot{\hfil\thepage\hfil}%
}
\def\ps@headings{%
  \def\@oddhead{}%
  \def\@evenhead{}%
  \def\@oddfoot{\hfil\thepage\hfil}%
  \def\@evenfoot{\hfil\thepage\hfil}%
}
\def\section{\@startsection{section}{1}{\z@}{3.0ex plus 1.5ex minus 1.5ex}%
{0.7ex plus 1ex minus 0ex}{\color{black}\centering
   \fontencoding{T1}%
   \fontfamily{phv}%
   \fontseries{m}%
   \fontshape{n}%
   \fontsize{10}{12}%
   \selectfont
   \scshape}}%
\begin{document}
\title{Event-triggered Dual Gradient Tracking for Distributed Resource Allocation}
\author{Xiayan Xu, Xiaomeng Chen, Dawei Shi, and Ling Shi
\thanks{Xiayan Xu, Xiaomeng Chen and Ling Shi are with the Department of Electronic and Computer Engineering, the Hong Kong University of Science and Technology, Hong Kong, China. Ling Shi is also with the Department of Chemical and Biological Engineering, the Hong Kong University of Science and Technology, Hong Kong, China. (e-mail: xxucj@connect.ust.hk;~eexchen@ust,hk;~eesling@ust.hk). }
\thanks{Dawei Shi is with School of Automation, Beijing Institute of Technology, Beijing 100081, China (e-mail: daweishi@bit.edu.cn).}
\thanks{The work by X. Xu, X. Chen, and L. Shi is supported by the Hong Kong RGC Research Fund CRS\textunderscore HKUST601/22.}
}
\maketitle

\begin{abstract}
High communication costs create a major bottleneck for distributed resource allocation over unbalanced directed networks. Conventional dual gradient tracking methods, while effective for problems on unbalanced digraphs, rely on periodic communication that creates significant overhead in resource-constrained networks. This paper introduces a novel event-triggered dual gradient tracking algorithm to mitigate this limitation, wherein agents communicate only when local state deviations surpass a predefined threshold. We establish comprehensive convergence guarantees for this approach. First, we prove sublinear convergence for non-convex dual objectives and linear convergence under the Polyak-Łojasiewicz condition. Building on this, we demonstrate that the proposed algorithm achieves sublinear convergence for general strongly convex cost functions and linear convergence for those that are also Lipschitz-smooth. Numerical experiments confirm that our event-triggered method significantly reduces communication events compared to periodic schemes while preserving comparable convergence performance.
\end{abstract}

\begin{IEEEkeywords}
Directed graphs, Distributed resource allocation, Event-triggered scheme, Polyak-Łojasiewicz condition
\end{IEEEkeywords}

\section{Introduction}
\label{sec:introduction}
\IEEEPARstart{D}{istributed} resource allocation (DRA) provides a mathematical framework for agents in a networked system to collaboratively optimize shared resources distribution under global and local constraints. This paradigm is critical to large-scale systems like the economic dispatch of power in smart grids~\cite{zhang2012convergence}, bandwidth allocation in wireless sensor networks~\cite{xiao2004simultaneous}, and cooperative task allocation in robotic swarms~\cite{camisa2022multi}. In these scenarios, agents exchange local information over a communication network to collectively find optimal solutions that satisfy system-wide constraints.

\subsection{Related Work}
Developing efficient DRA algorithms has been a research focus for decades. A central issue is designing methods that operate over realistic networks with minimal communication overhead. Existing research has thus evolved along two main axes: robustness to complex network topologies and communication efficiency.

The first major challenge pertains to the underlying network topology. Discrete-time DRA algorithms are broadly categorized into primal~\cite{lakshmanan2008decentralized, nedic2018improved,wu2021new} and dual methods~\cite{xu2017distributed,zhang2012convergence, falsone2020tracking, jiang2022distributed, yang2016distributed, zhang2020distributed}. While primal (sub)gradient-based methods can converge faster according to empirical studies~\cite{nedic2018improved}, they typically require undirected or weight-balanced graphs, which is impractical for systems with asymmetric links. Dual methods, however, are well-suited for such topologies as they can transform the primal DRA problem into a dual distributed optimization (DO) problem and then leverage advanced consensus-based DO algorithms developed for digraphs~\cite{yang2019survey}.

Several distinct strategies have emerged within the dual framework to handle network asymmetry. Although Xu et al.~\cite{xu2017distributed} proposed the non-negative surplus-based averaging (NN-SURPLUS) algorithm for time-varying unbalanced digraphs, empirical analysis reveals slow convergence and sensitivity to initialization and parameter settings. Another prominent strategy is based on the alternate direction method of multipliers (ADMM)~\cite{falsone2020tracking, jiang2022distributed} which exhibits strong empirical performance~\cite{boyd2011distributed}. Researchers have made significant progress in extending ADMM to directed graphs. For example, the D-ADMM-FTERC algorithm~\cite{jiang2022distributed} achieves a provable $\mathcal{O}(1/k)$ convergence rate for convex and non-differentiable objectives over directed topologies, but its reliance on global network information and heavy communication for exact consensus limits scalability.

An alternative major approach and central focus of our research is gradient tracking methods~\cite{yang2016distributed, zhang2020distributed}. Early approaches addressed asymmetry using push-sum protocols~\cite{nedic2014distributed}; for instance, Yang et al.~\cite{yang2016distributed} introduced push-sum gradient techniques for strictly convex objectives over digraphs, though they require asymptotic eigenvector computation of the weight matrix. This line of research led to foundational algorithms like Push-DIGing~\cite{nedic2017achieving}, ExtraPush~\cite{zeng2017extrapush}, and DEXTRA~\cite{xi2017dextra}, which all established theoretical convergence guarantees for DO over digraphs. More recently, the distributed dual gradient tracking (DDGT) algorithm~\cite{zhang2020distributed}, based on push-pull gradients (PPG)~\cite{pu2020push}, achieved linear convergence for strongly convex and Lipschitz smooth objectives, and sublinear for general strongly convex problems over unbalanced digraphs, without requiring global information.

The second challenge is communication efficiency. The aforementioned algorithms are typically periodic and time-triggered, incurring prohibitive overhead in resource-constrained networks. Event-triggered (ET) mechanisms address this by restricting agents to communicate only when local state deviations exceed predefined thresholds. The ``communicate-when-necessary'' paradigm effectively reduces network usage while maintaining performance across applications such as remote state estimation~\cite{wu2012event}, decentralized control~\cite{deng2021distributed}, and parallel machine learning~\cite{solowjow2020event}, motivating its integration into DO~\cite{huang2024distributed, chen2024efficient, liu2019distributed} and DRA~\cite{wan2021adaptive, li2022distributed, guo2022distributed}. Dai et al.~\cite{dai2020distributed} proposed a discrete-time ET algorithm for DRA over balanced digraphs, requiring careful initialization and assuming no local constraints. Li et al.~\cite{li2022distributed} later introduced an initialization-free Mirror-P-EXTRA algorithm for undirected graphs with competitive convergence rates~\cite{nedic2018improved}. To improve adaptability and enlarge triggering intervals, dynamic ET mechanisms have been incorporated into discrete-time DRA algorithms~\cite{dong2022distributed, yuan2023distributed, chen2025distributed}, though all these methods assume undirected or balanced networks. For unbalanced digraphs, Dong et al.~\cite{dong2024distributed} proposed an asynchronous event-triggered push-pull gradient tracking method (DAAET-PPG), which focuses on asynchrony---a distinct research direction---and is limited to DO problems with strongly convex and Lipschitz smooth objectives. To date, no event-triggered discrete-time DRA solution with rigorous convergence guarantees has been established for unbalanced networks.

The current frontier of DRA research lies in developing communication-efficient algorithms with provable convergence over general unbalanced digraphs. While recent efforts have touched on this intersection, they either adopt different algorithmic frameworks or, like DAAET-PPG~\cite{dong2024distributed}, are restricted to strongly convex dual objectives. This highlights a critical gap, since strong convexity in the primal DRA problem does not guarantee the same in the dual DO problem. This necessitates a more general theoretical foundation for the convergence of ET-PPG algorithms.

\subsection{Contributions}
This paper develops a novel event-triggered distributed dual gradient tracking (ET-DGT) framework for DRA problems over unbalanced digraphs. We present a comprehensive theoretical analysis, establishing convergence rates that match state-of-the-art periodic algorithms while significantly reducing communication overhead. Our main contributions are threefold:
\begin{enumerate}
	\item \textbf{Communication-efficient algorithm for DRA on unbalanced digraphs:} We propose a novel event-triggered dual gradient tracking algorithm (ET-DGT; Algorithm~\ref{alg:etdgt}) for solving DRA problems with strongly convex cost functions over unbalanced digraphs. By integrating a state-dependent triggering mechanism with a push-pull consensus protocol, agents asynchronously exchange information only when local updates are significant. This design substantially reduces communication compared to periodic methods such as DDGT~\cite{zhang2020distributed} and NN-SURPLUS~\cite{xu2017distributed}.\\
	
	\item \textbf{Theoretical foundation for ET-PPG with weaker assumptions:} We establish a general theoretical foundation for event-triggered push-pull gradient (ET-PPG) methods under weaker assumptions. Specifically, we provide sufficient conditions for sublinear convergence with general non-convex objectives (Theorem~\ref{thm1:ET-PPG_nonconvex}), and linear convergence under the Polyak-Łojasiewicz condition (Theorem~\ref{thm2:ET-PPG_PL}). These results extend ET-PPG beyond the restrictive strong convexity setting, addressing a key theoretical gap in recent non-convex push-pull methods.\\
	
	\item \textbf{Rigorous convergence rate guarantees for ET-DGT:} We provide a rigorous analysis of the ET-DGT algorithm, deriving sufficient conditions on the constant step-size to ensure primal convergence (Theorem~\ref{thm3:primal conv}). Specifically, we prove sublinear $\mathcal{O}(1/k)$ convergence for strongly convex objectives, and linear $\mathcal{O}(\lambda^k)$ convergence with $0<\lambda<1$ for strongly convex and Lipschitz smooth objectives (Theorem~\ref{thm4:convrate}). These results demonstrate that ET-DGT matches the convergence rates of state-of-the-art periodic algorithms~\cite{zhang2020distributed}, while significantly reducing communication overhead.
	
\end{enumerate}

\subsection{Organization of the Paper}
The remainder of this article is organized as follows. Section~\ref{sec:problemstate} formulates the event-triggered DRA problem over unbalanced directed graphs. Section~\ref{sec:ETDGT algorithm} introduces the proposed ET-DGT algorithm. Section~\ref{sec:convergence} presents the comprehensive convergence analysis, establishing both sublinear and linear rates under suitable conditions. Section~\ref{sec:simulation} reports numerical experiments on economic dispatch problems, demonstrating the algorithm’s communication efficiency and convergence performance. Section~\ref{sec:conclusion} concludes with a summary of contributions and future research directions.

\textit{Notations:}
Let Italic lowercase letters $x$, bold lowercase letters $\mathbf{x}$ and bold uppercase letters $\mathbf{X}$ denote scalar, vector and matrix variables, respectively. Let $\mathbf{1} \in \mathbb{R}^{n}$ and $\mathbf{0} \in \mathbb{R}^{n}$ denote all-one and all-zero vector, respectively. Let $\left \| \cdot \right \|$ denote the $l_2$-norm of any vector and induced 2-norm of any matrix. Let $\succeq$($\preceq$) denote element-wise inequality between matrices. For the communication topology, let $\mathcal{G}(\mathcal{N},\mathcal{E} )$ denote a directed graph where $\mathcal{N}=\{1,\ldots, n\}$ is the set of nodes and $\mathcal{E} \subseteq \mathcal{N} \times \mathcal{N} $ is the set of directed edges. An edge $(i,j)$ indicates that node $i$ can receive information from node $j$. A directed graph induced by a non-negative matrix $\mathbf{A}=[a_{ij}]\in \mathbb{R}^{n \times n}$ is denoted by $\mathcal{G}_{\mathbf{A}}(\mathcal{N}, \mathcal{E}_{\mathbf{A}})$ where $(i,j) \in \mathcal{E}_{\mathbf{A}}$ if and only if $a_{ij} > 0$. For each node $i \in \mathcal{N}$, let $\mathcal{N}_i= \{j \mid   a_{ij} >0\}$ denote its in-neighbour set, i.e., nodes that transmit information to node $i$. Similarly, let $\bar{\mathcal{N}}_i = \{j \mid a_{ji} > 0\}$ denote its out-neighbor set, i.e., nodes that receive information from $i$.

\section{Problem Statement}
\label{sec:problemstate}
This section formulates the distributed resource allocation (DRA) problem over unbalanced directed graphs. We begin with the primal formulation and its dual counterpart, which takes the form of a consensus-based distributed optimization (DO) problem. To solve the dual efficiently over unbalanced digraphs, we introduce the push-pull gradient tracking algorithm. An analysis of agent broadcast behavior reveals communication inefficiencies, motivating the development of our event-triggered solution.
\subsection{Distributed Resource Allocation Problem}
We consider a distributed resource allocation problem with $n$ agents communicating over an unbalanced digraph $\mathcal{G}$. The global objective of all agents is to solve the following DRA problem in a distributed manner:

\begin{equation}\label{p:primal}
	\begin{aligned}
		\min_{\mathbf{w}_1, \ldots, \mathbf{w}_n \in \mathbb{R}^{m}} \quad & \sum_{i=1}^{n} F_i(\mathbf{w}_i) \\
		\text{s.t.} \quad \quad \quad & \sum_{i=1}^{n} \mathbf{w}_i = \sum_{i=1}^{n} \mathbf{d}_i, \quad \mathbf{w}_i \in \mathcal{W}_i, \quad \forall i \in \mathcal{N}
	\end{aligned}
\end{equation}
where $\mathbf{w}_i\in \mathbb{R}^m$ is the local decision variable of agent $i$, $\mathcal{W}_i \in \mathbb{R}^{m}$ is the local closed and convex feasibility constraint set, $\mathbf{d}_i \in \mathbb{R}^m$ is the local demand of resource, and $F_i:\mathbb{R}^m \to \mathbb{R}$ is the local cost function. The overall resource supply constraint is $\sum_{i=1}^n \mathbf{d}_i$. Agents in the network are globally coupled by the resource supply-demand balance constraint.

In this paper, we make the following assumptions:
\begin{assumption}[Strong convexity and Slater's condition]\label{ass1:primal_cvx}
	\leavevmode
	\begin{enumerate}
		\item Each local cost function $F_i$ is $\frac{1}{L}$-strongly convex, i.e., $\forall i \in \mathcal{N}$ and $\forall \mathbf{w},\mathbf{w}' \in \mathbb{R}^m$, $ \left \| \nabla F_i(\mathbf{w})-\nabla F_i(\mathbf{w}') \right \| \ge \frac{1}{L}\left \| \mathbf{w}-\mathbf{w}' \right \|$.
		\item There exists at least one point in the relative interior $\mathcal{W}$ that satisfies the supply-demand balance constraint $\sum_{i=1}^{n} \mathbf{w}_i = \sum_{i=1}^{n} \mathbf{d}_i$, where $\mathcal{W}=\mathcal{W}_1 \times \cdots \times \mathcal{W}_n$.
	\end{enumerate}
\end{assumption}
Assumption~\ref{ass1:primal_cvx} ensures the existence of a unique optimal solution to problem~\eqref{p:primal} and establishes strong duality between the primal and dual formulations. This allows the coupling constraint to be effectively handled through dual optimization.

\subsection{Dual Problem of DRA}
To solve the globally constrained DRA problem~(\ref{p:primal}), we start by forming the dual problem and transforming it into a consensus-based DO problem. Define the Lagrangian $\mathcal{L}$ as follows:
\begin{align}\label{p:lag}
	\mathcal{L}(\mathbf{W},\mathbf{x}) = \sum_{i=1}^{n} F_i(\mathbf{w}_i) + \mathbf{x}^{\top}\left(\sum_{i=1}^{n} \mathbf{w}_i -  \sum_{i=1}^{n}\mathbf{d}_i\right),
\end{align}
where $\mathbf{W} = [\mathbf{w}_1, \ldots, \mathbf{w}_n]^{\top}\in \mathbb{R}^{n \times m}$ and $\mathbf{x}\in \mathbb{R}^m$ represents the Lagrange multiplier vector. Then the dual function is:
\begin{align*}
	\begin{aligned}
		g(\mathbf{x}) = &\inf_{\mathbf{W} \in \mathcal{W}} \mathcal{L}(\mathbf{W},\mathbf{x})\\
		= & \sum_{i=1}^{n} \inf_{\mathbf{w}_i \in \mathcal{W}_i} \left\{F_i(\mathbf{w}_i) + \mathbf{x}^{\top} \mathbf{w}_i \right\} -\sum_{i=1}^{n}\mathbf{x}^{\top}\mathbf{d}_i \\
		= & \sum_{i=1}^{n} -\left( F^*_i(-\mathbf{x}) + \mathbf{x}^{\top}\mathbf{d}_i \right),
	\end{aligned}
\end{align*}
where $F^*_i$ is the convex conjugate function for the pair $(F_i,\mathcal{W}_i)$~\cite{bertsekas1997nonlinear}:
\begin{equation} \label{eq:conj_Fi}
	F^*_i(\mathbf{x}) = \sup_{\mathbf{w}_i \in \mathcal{W}_i}\left\{ \mathbf{x}^{\top}\mathbf{w}_i - F_i(\mathbf{w}_i) \right \}.
\end{equation}
Thus the dual problem of~(\ref{p:primal}) is:
\begin{align*}
	\begin{aligned}
		\max_{\mathbf{x} \in \mathbb{R}^{m}} g(\mathbf{x}),
	\end{aligned}	
\end{align*}
which can be reformulated as a consensus-based DO problem:
\begin{align}\label{p:dual}
	\min_{\mathbf{x} \in \mathbb{R}^{m}} f(\mathbf{x}) = \sum_{i=1}^{n} f_i(\mathbf{x}),
\end{align}
where $f_i(\mathbf{x}) = F^*_i(-\mathbf{x}) + \mathbf{x}^\top\mathbf{d}_i$ is agent $i$'s dual objective.

Based on the Fenchel duality between strong convexity and Lipschitz continuous gradient~\cite{zhou2018fenchel}, the $\frac{1}{L}$-strong convexity of $F_i$ implies the $L$-Lipschitz smoothness of $F^*_i$ and the available supremum in~\eqref{eq:conj_Fi}. According to Danskin's theorem~\cite{bertsekas1997nonlinear}, the gradient of $F_i^*$ is:
\begin{align}\label{eq:grad_Fi*}
	\nabla F^*_i(\mathbf{x}) = \argmax_{\mathbf{w}_i \in \mathcal{W}_i} \left\{ \mathbf{x}^{\top}\mathbf{w}_i-F_i(\mathbf{w}_i) \right\},
\end{align}
from which we derive the dual gradient as:
\begin{align}\label{eq:grad_fi}
	\begin{aligned}
		\nabla f_i(\mathbf{x})= -\argmin_{\mathbf{w}_i \in \mathcal{W}_i} \left\{ \mathbf{x}^{\top}\mathbf{w}_i+F_i(\mathbf{w}_i) \right\} + \mathbf{d}_i.
	\end{aligned}
\end{align}
Assumption~\ref{ass1:primal_cvx} ensures strong duality and primal solution uniqueness, i.e., $F^* = g^*$. By the equivalent reformulation of dual problem~\eqref{p:dual}, this implies $F^* = -f^*$ and establishes the Fenchel-Young equality between optimal variables, i.e., $F_i(\mathbf{W}^*_i) + F^*_i(-\mathbf{x}^*) = -(\mathbf{x}^{*})^{\top}\mathbf{W}^*_i$. This equality enables the recovery of optimal primal variable $\mathbf{W}^*_i$ from optimal dual variable $\mathbf{x}$. Therefore, solving the dual problem~(\ref{p:dual}) is equivalent to solving the primal one~(\ref{p:primal}) and we focus on distributed algorithms for the dual formulation.

\subsection{Distributed Dual Gradient Tracking Algorithm}
We consider a directed communication graph $\mathcal{G}$ and adopt the push-pull gradient (PPG) algorithm, which has demonstrated effectiveness for DO in unbalanced networks. The distributed dual gradient tracking (DDGT) algorithm~\cite{zhang2020distributed} applies PPG to solve the dual problem~(\ref{p:dual}):
\begin{subequations}
	\begin{align}
		\tilde{\mathbf{w}}_{i,k+1}&=\sum_{j=1}^{n} r_{ij}\tilde{\mathbf{w}}_{j,k}+\alpha \mathbf{s}_{i,k}, \label{eq:std_w}\\
		\mathbf{w}_{i,k+1}&=\argmin_{\mathbf{w} \in \mathcal{W}_i}\{F_i(\mathbf{w}) - \tilde{\mathbf{w}}_{i,k+1}^{\top}\mathbf{w}\},\label{eq:std_ww}\\
		\mathbf{s}_{i,k+1}&=\sum_{j=1}^{n}c_{ij}\mathbf{s}_{j,k} - (\mathbf{w}_{i,k+1}-\mathbf{w}_{i,k}),\label{eq:std_s}
	\end{align}	
\end{subequations}
where $\alpha > 0$ is a constant step size, $\tilde{\mathbf{w}}_{i,k}$ is agent $i$'s local estimate of the dual variable $\mathbf{x}$, $\mathbf{s}_{i,k}$ is its local estimate of the global dual gradient $\nabla f(\mathbf{x})$, and $\mathbf{w}_{i,k}$ is the local optimal primal variable obtained by solving the subproblem in~(\ref{eq:std_ww}).  Agents exchange $\tilde{\mathbf{w}}_{i,k}$ and $\mathbf{s}_{i,k}$ through two distinct directed graphs: $\mathcal{G}_{\mathbf{R}}$ for $\tilde{\mathbf{w}}$ and $\mathcal{G}_{\mathbf{C}^{\top}}$ for $\mathbf{s}$. Specifically, $\mathcal{G}_{\mathbf{R}}$ (``pull'' network) is induced by matrix $\mathbf{R} = [r_{ij}]\in \mathbb{R}^{n \times n}$, where agent $i$ receives $\tilde{\mathbf{w}}_{j,k}$ from in-neighbors $j \in \mathcal{N}_{\mathbf{R},i}$. Meanwhile, $\mathcal{G}_{\mathbf{C}^{\top}}$ (``push'' network) is induced by $\mathbf{C}^{\top} = [c_{ij}]\in \mathbb{R}^{n \times n}$, where agent $j$ broadcasts $c_{ij}\mathbf{s}_{j,k}$ to out-neighbors $i \in \bar{\mathcal{N}}_{\mathbf{C},j}$. The following assumptions of the communication graphs are made throughout the paper.

\begin{assumption}\label{ass2:graph}
	The graphs $\mathcal{G}_{\mathbf{R}}$ and $\mathcal{G}_{\mathbf{C}^{\top}}$ each contains at least one spanning tree. Moreover, there exists at least one node that is a root of a spanning tree for both $\mathcal{G}_{\mathbf{R}}$ and $\mathcal{G}_{\mathbf{C}^{\top}}$.
\end{assumption}

\begin{assumption}\label{ass3:graph_weight}
	The matrix $\mathbf{R}$ is row-stochastic and $\mathbf{C}$ is column-stochastic, i.e., $\mathbf{R1}=\mathbf{1}$ and $\mathbf{1}^{\top}\mathbf{C}=\mathbf{1}^{\top}$.
\end{assumption}

Notice that Assumption~\ref{ass2:graph} is standard in the literature~\cite{pu2020push}. It relaxes the strongly-connected graph requirement of prior works~\cite{xi2017dextra} and enables less restrictive network design of $\mathcal{G}_{\mathbf{R}}$ and $\mathcal{G}_{\mathbf{C}^{\top}}$. Assumption~\ref{ass3:graph_weight} further ensures that $\mathbf{R}$ and $\mathbf{C}$ preserve the necessary averaging properties in the push--pull dynamics, and such an asymmetric $\mathbf{R}$--$\mathbf{C}$ structure has been widely adopted in the literature to guarantee convergence under directed and unbalanced networks~\cite{pu2020push}.

By leveraging the push-pull communication scheme, the DDGT algorithm effectively tracks both the global dual variable and its gradient, ensuring convergence of the primal variable and consensus error. However, maintaining dual communication graphs introduces considerable communication overhead. To address this issue, Section~III introduces an event-triggered mechanism to reduce communication costs.

\section{Event-triggered Algorithm Development}
\label{sec:ETDGT algorithm}
This section introduces a deterministic event-triggered mechanism for transmitting $\tilde{\mathbf{w}}_{i}$ and $\mathbf{s}_{i}$ in the DDGT algorithm. We first present the triggering mechanisms, then reformulate ET-DGT to align with the PPG framework for subsequent convergence analysis.

\subsection{Event-triggered Mechanisms}
Event-triggered mechanisms reduce communication overhead in distributed optimization~\cite{yang2019survey} by enabling aperiodic information exchange, where communication occurs only when local updates violate predefined triggering conditions. To reduce communication frequency in DDGT, we design separate event-triggered mechanisms for $\tilde{\mathbf{w}}_i$ and $\mathbf{s}_i$.

Let $k_{\tilde{\mathbf{w}}_{i}}^{\ell}$ denote the $\ell$-th triggering instant for $\tilde{\mathbf{w}}_{i}$, $\ell \in \mathbb{Z}^+$, and $k_{\mathbf{s}_i}^{\pi}$ the $\pi$-th triggering instant for $\mathbf{s}_i$, $\pi \in \mathbb{Z}^+$. Define the broadcasted variables as $\hat{\mathbf{w}}_i$ and $\hat{\mathbf{s}}_i$ respectively:
\begin{align*}
	\hat{\mathbf{w}}_{i, k} &= \begin{cases}
		\tilde{\mathbf{w}}_{i,k},  &\text{ if } k\in\kappa_{\tilde{\mathbf{w}}_i} \\
		\hat{\mathbf{w}}_{i,k-1}, &\text{ otherwise } 
	\end{cases},	\\
	\hat{\mathbf{s}}_{i,k}  &= \begin{cases}
		\mathbf{s}_{i,k},  &~\text{ if } k\in\kappa_{\mathbf{s}_i} \\
		\hat{\mathbf{s}}_{i,k-1}, &~\text{ otherwise }
	\end{cases}	,		
\end{align*}
where $\kappa_{\tilde{\mathbf{w}}_i}:=\{k_{\tilde{\mathbf{w}}_{i}}^{0}, k_{\tilde{\mathbf{w}}_{i}}^{1},k_{\tilde{\mathbf{w}}_{i}}^{2},\ldots\}$ and $\kappa_{\mathbf{s}_i}:=\{k_{\mathbf{s}_i}^{0}, k_{\mathbf{s}_i}^{1},k_{\mathbf{s}_i}^{2},\ldots\}$ denote the collections of all the triggering instances for each agent $i$. The $(\ell+1)$-th triggering instance for $\tilde{\mathbf{w}}_i$ is derived from the following triggering law:
\begin{align}\label{eq:etlaw1}
	k_{\tilde{\mathbf{w}}_{i}}^{\ell+1} := \inf \left \{k: k > k_{\tilde{\mathbf{w}}_{i}}^{\ell} , \big\| \tilde{\mathbf{w}}_{i,k} - \hat{\mathbf{w}}_{i,k_{\tilde{\mathbf{w}}_{i}}^{\ell}} \big\| \geq e_{\tilde{\mathbf{w}}_i,k} \right\},
\end{align}
where $\{e_{\tilde{\mathbf{w}}_i,k}\}_{k=0}^{\infty}$ is the deterministic triggering threshold that needs to be designed later. Similarly, the $(\pi+1)$-th triggering instance for $\mathbf{s}_i$ is derived from:
\begin{align}\label{eq:etlaw2}
	k_{\mathbf{s}_i}^{\pi +1} := \inf \left \{k: k > k_{\mathbf{s}_i}^{\pi}, \big\| \mathbf{s}_{i,k} - \hat{\mathbf{s}}_{i,k_{\mathbf{s}_i}^{\pi}} \big\| \ge e_{\mathbf{s}_i,k} \right \},
\end{align}
where $\{e_{\mathbf{s}_i,k}\}_{k=0}^{\infty}$ is also the triggering threshold. For the triggering thresholds we make the following assumption:
\begin{assumption}\label{ass4:trigger}
	Define $e_k=\max_{i \in \mathcal{N}} (e_{\tilde{\mathbf{w}}_i,k}, e_{\mathbf{s}_i,k}), \forall k \in \mathbb{Z}^+$. The sequence $\{e_k\}_{k=0}^{\infty}$ is non-increasing and summable, i.e., $S_e=\sum_{k=0}^{\infty}e_k<\infty$.
\end{assumption}

\subsection{Event-triggered Dual Gradient Tracking Algorithm}
Building on the event-triggered mechanisms from previous subsection, we propose the event-triggered dual gradient tracking (ET-DGT) algorithm in Algorithm~\ref{alg:etdgt}. Each agent $i$ broadcasts $\tilde{\mathbf{w}}_{i,k}$ over $\mathcal{G}_\mathbf{R}$ and $c_{li}\mathbf{s}_{i,k}$ over $\mathcal{G}_{\mathbf{C}^{\top}}$ only when the triggering conditions in~\eqref{eq:etlaw1} and~\eqref{eq:etlaw2} are satisfied, respectively. Receivers use the broadcasted values $\hat{\mathbf{w}}_{i,k}$ and $c_{li}\hat{\mathbf{s}}_{i,k}$ rather than the true values for local updates. This mechanism introduces communication-induced errors, necessitating further convergence analysis to guarantee solving performance.

\begin{algorithm}
	\caption{Event-triggered Dual Gradient Tracking}
	\label{alg:etdgt}
	\begin{algorithmic}[1] 
		
		\STATE \textbf{Input:} Step size $\alpha$, event-triggering thresholds $\{e_{\mathbf{s}_i,k}\}$ and $\{e_{\tilde{\mathbf{w}}_i,k}\}$ for any $i \in \mathcal{N}$ and $k \in \mathbb{R}^+$.
		\STATE \textbf{Initialization:} Let variables $\tilde{\mathbf{w}}_{i,0} = \mathbf{0}$, ${\mathbf{w}}_{i,0} = \mathbf{0}$, and {${\mathbf{s}}_{i,0} = \mathbf{d}_i-\mathbf{w}_{i,0}$}. Let initial triggering instances $k_{\tilde{\mathbf{w}}_{i}}^{0} = 0$ and $k_{\mathbf{s}_i}^{0} = 0$.
		\FOR{$k = 0,1,\ldots,K$}
		
		\FOR{\text{each agent} $i \in \mathcal{N}$}
		\IF {$k \in \kappa_{\tilde{\mathbf{w}}_i}$}
		\STATE Agent $i$ broadcasts $\tilde{\mathbf{w}}_{i,k}$ to each agent $j \in \bar{\mathcal{N}}_{\mathbf{R},i}$.
		\ENDIF
		\IF {$k \in \kappa_{\mathbf{s}_i}$}
		\STATE Agent $i$ broadcasts $c_{li}{\mathbf{s}}_{i,k}$ to each agent $l \in \bar{\mathcal{N}}_{\mathbf{C},i}$.
		\ENDIF
		\ENDFOR
		
		\FOR{\text{each agent} $i \in \mathcal{N}$}
		\STATE Agent $i$ updates local variables:
		\begin{subequations}
			\begin{align}\label{eq:et_w}
				\tilde{\mathbf{w}}_{i,k+1}=&\tilde{\mathbf{w}}_{i,k} + \sum_{j=1}^n r_{ij}(\hat{\mathbf{w}}_{j,k} - \hat{\mathbf{w}}_{i,k})  +\alpha \mathbf{s}_{i,k} , \\ \label{eq:et_ww}
				\mathbf{w}_{i,k+1}=&\argmin_{\mathbf{w} \in \mathcal{W}_i}\{F_i(\mathbf{w}) - \tilde{\mathbf{w}}_{i,k+1}^{\top}\mathbf{w}\},\\
				s_{i,k+1} = &\mathbf{s}_{i,k} + \sum_{j=1}^{n}{c_{ij}}(\hat{\mathbf{s}}_{j,k} - \hat{\mathbf{s}}_{i,k}) \nonumber \\
				&- (\mathbf{w}_{i,k+1}-\mathbf{w}_{i,k}),\label{eq:et_s}
			\end{align}
		\end{subequations}	
	    \ENDFOR
		\ENDFOR

	\end{algorithmic}
\end{algorithm}
To address event-triggered information errors, define the transmission errors as: $\boldsymbol{\zeta}_{i,k} = \hat{\mathbf{w}}_{i,k} - \tilde{\mathbf{w}}_{i,k}$ and $\boldsymbol{\xi}_{i,k} = \hat{\mathbf{s}}_{i,k} - \mathbf{s}_{i,k}$.
Based on event-triggered mechanisms, the communication errors are bounded by the summable thresholds for all $i \in \mathcal{N}$ and $k\in \mathbb{Z}^+$, i.e., $\left \| \boldsymbol{\zeta}_{i,k} \right \| \le e_{\tilde{\mathbf{w}}_i,k}$ and $\left \| \boldsymbol{\xi}_{i,k} \right \| \le e_{\mathbf{s}_i,k}$.

The event-triggered local updates in~\eqref{eq:et_w}--\eqref{eq:et_s} preserve the push-pull structure of standard DDGT in~\eqref{eq:std_w}--\eqref{eq:std_s}, using last-broadcast values for consensus and gradient tracking. 
Define the stacked vectors: $\tilde{\mathbf{W}}_k = [\tilde{\mathbf{w}}_{1,k}, \ldots, \tilde{\mathbf{w}}_{n,k}]^{\top} \in \mathbb{R}^{n \times m}$, $\mathbf{S}_k = [\mathbf{s}_{1,k}, \ldots, \mathbf{s}_{n,k}]^{\top} \in \mathbb{R}^{n \times m}$, $\boldsymbol{\zeta}_k = [\boldsymbol{\zeta}_{1,k}, \ldots, \boldsymbol{\zeta}_{n,k}]^{\top} \in \mathbb{R}^{n \times m}$, and $\boldsymbol{\xi}_k = [\boldsymbol{\xi}_{1,k}, \ldots, \boldsymbol{\xi}_{n,k}]^{\top} \in \mathbb{R}^{n \times m}$. These yield a compact presentation of the event-triggered local updates:
\begin{subequations}
	\begin{align}
		\tilde{\mathbf{W}}_{k+1} &= \tilde{\mathbf{W}}_{k} + (\mathbf{R}-\mathbf{I}) (\tilde{\mathbf{W}}_{k} + \boldsymbol{\zeta}_k) + \alpha \mathbf{S}_k , \label{eq:et_big_w}\\
		\mathbf{S}_{k+1} &= \mathbf{S}_{k} + (\mathbf{C}-\mathbf{I})(\mathbf{S}_{k} + \boldsymbol{\xi}_k) - (\mathbf{W}_{k+1} - \mathbf{W}_{k}) \label{eq:et_big_s}
	\end{align}
\end{subequations}
Initializing $\mathbf{S}_0 = -(\mathbf{W}_0 - \mathbf{D})$, where $\mathbf{D} = [\mathbf{d}_1, \ldots, \mathbf{d}_n]^{\top} \in \mathbb{R}^{n\times m}$, we can derive by induction that
\begin{align}\label{eq:sum_sk}
	\begin{aligned}
		\mathbf{1}^{\top} \mathbf{S}_k = &\mathbf{1}^{\top} \mathbf{S}_0 - \mathbf{1}^{\top}(\mathbf{W}_{k} - 	\mathbf{W}_{0}) \\
		= & \mathbf{1}^{\top}(-\mathbf{W}_{k} + \mathbf{D}) \\
		= & \mathbf{1}^{\top}\nabla \mathbf{f}(\tilde{\mathbf{W}}_{k}),
	\end{aligned}
\end{align}
where the last equality follows from the gradient of dual objective function $\nabla f_i(x)$ in~\eqref{eq:grad_fi} and $\nabla \mathbf{f}(\tilde{\mathbf{W}}_{k}) = [\nabla f_1(\tilde{\mathbf{w}}_{1,k}), \ldots, \nabla f_n(\tilde{\mathbf{w}}_{n,k})]^{\top} \in \mathbb{R}^{n\times m}$. The telescoping sum in~\eqref{eq:sum_sk}, derived from the column-stochasticity of $\mathbf{C}$, reveals that $\mathbf{S}_k$ accurately tracks the global gradient despite the presence of event-triggered perturbations. Notably, the convergence of either $\mathbf{s}_k$ or $\nabla \mathbf{f}(\tilde{\mathbf{W}}_k)$ to zero ensures asymptotic satisfaction of the constraints by $\mathbf{W}_k$. These properties are fundamental to the subsequent convergence analysis.

\section{Convergence Analysis}
\label{sec:convergence}
This section analyzes the convergence of the proposed ET-DGT algorithm. The analysis begins by reformulating the ET-DGT dynamics \eqref{eq:et_big_w}--\eqref{eq:et_big_s} to align with the established framework of PPG methods, followed by the introduction of several key supporting lemmas. A primary challenge arises from the dual formulation: although the primal cost functions $F_i$ are strongly convex, the corresponding dual objectives $f_i$ may lack strong convexity. This necessitates a more general analytical approach. Accordingly, we first establish dual convergence without assuming convexity, and subsequently prove primal convergence even when the dual objectives are not strongly convex.

\subsection{Supporting Lemmas}
With a slight abuse of notation, let $\mathbf{Y}_k=\mathbf{S}_k$, $\mathbf{X}_k=-\tilde{\mathbf{W}}_k$, and substitute $\mathbf{W}_k = - \nabla \mathbf{f}(\mathbf{X}_k) + \mathbf{D}$ into \eqref{eq:et_big_w} and \eqref{eq:et_big_s}. The PPG-aligned compact form of ET-DGT algorithm is:
\begin{subequations}
	\begin{align}
		\mathbf{X}_{k+1} =& \mathbf{X}_k + (\mathbf{R}-\mathbf{I})(\mathbf{X}_k + \boldsymbol{\zeta}_k) - \alpha\mathbf{Y}_{k}, \label{eq:x} \\
		\mathbf{Y}_{k+1} =& \mathbf{Y}_k + (\mathbf{C}-\mathbf{I})(\mathbf{Y}_k + \boldsymbol{\xi}_k) \nonumber \\
		& + \nabla\mathbf{f} (\mathbf{X}_{k+1})-\nabla\mathbf{f} (\mathbf{X}_{k}). \label{eq:y}
	\end{align}
\end{subequations}
Under Assumption~\ref{ass2:graph}, we establish preliminary lemmas regarding the communication digraphs:
\begin{lemma}\label{lem1:perron}
	The row-stochastic matrix $\mathbf{R}$ admits a unique unit non-negative left eigenvector $\pi_{\mathbf{R}}$ w.r.t. eigenvalue $1$, i.e., $\pi_{\mathbf{R}}^{\top}\mathbf{R}=\pi_{\mathbf{R}}^{\top}$ and $\pi_{\mathbf{R}}^{\top}\mathbf{1}=1$. The column-stochastic matrix $\mathbf{C}$ admits a unique unit non-negative right eigenvector $\pi_{\mathbf{C}}$ w.r.t. eigenvalue $1$, i.e., $\mathbf{C}\pi_{\mathbf{C}}=\pi_{\mathbf{C}}$ and $\pi_{\mathbf{C}}^{\top}\mathbf{1}=1$.
	
\end{lemma}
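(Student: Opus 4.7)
The plan is to treat this as a direct consequence of the Perron--Frobenius theorem specialized to stochastic matrices whose induced digraphs contain a spanning tree. I would handle the two statements in parallel, starting with $\mathbf{R}$ and then invoking a symmetry argument for $\mathbf{C}$.

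First I would record the elementary facts: by Assumption~\ref{ass3:graph_weight}, $\mathbf{R}\mathbf{1}=\mathbf{1}$, so $1$ is an eigenvalue of $\mathbf{R}$, and because $\mathbf{R}$ is non-negative with row sums equal to $1$, its spectral radius equals $\|\mathbf{R}\|_{\infty}=1$. Hence $1$ is a dominant eigenvalue, and the same holds for $\mathbf{R}^{\top}$ since it has the same spectrum. Consequently there exists at least one vector $\pi_{\mathbf{R}}\neq 0$ with $\pi_{\mathbf{R}}^{\top}\mathbf{R}=\pi_{\mathbf{R}}^{\top}$.

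The key step is upgrading ``existence'' to ``uniqueness and non-negativity.'' Here I would use Assumption~\ref{ass2:graph}: the spanning-tree condition on $\mathcal{G}_{\mathbf{R}}$ guarantees a single closed (sink) strongly connected component, namely the root component. Restricting $\mathbf{R}$ to this sink component gives a non-negative irreducible stochastic block; the classical Perron--Frobenius theorem then yields a simple eigenvalue $1$ with a strictly positive left eigenvector on that block, while the remaining states contribute only eigenvalues of modulus strictly less than $1$ (since their row-sum becomes a strict contraction once the sink is absorbed). Therefore eigenvalue $1$ of $\mathbf{R}$ is algebraically simple, and its left eigenspace is one-dimensional and can be chosen non-negative by extending the Perron vector on the sink with zeros. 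Normalizing by $\pi_{\mathbf{R}}^{\top}\mathbf{1}=1$ then gives the stated unique $\pi_{\mathbf{R}}$.

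For the second claim I would simply apply the identical argument to $\mathbf{C}^{\top}$, which is row-stochastic and whose induced digraph $\mathcal{G}_{\mathbf{C}^{\top}}$ contains a spanning tree by Assumption~\ref{ass2:graph}. This yields a unique non-negative left eigenvector $\pi$ of $\mathbf{C}^{\top}$ with $\pi^{\top}\mathbf{1}=1$; transposing gives $\mathbf{C}\pi=\pi$, so $\pi_{\mathbf{C}}:=\pi$ satisfies the desired properties. The main obstacle in a careful write-up is the simplicity of the eigenvalue $1$ under only a spanning-tree hypothesis (rather than strong connectivity); this is the one place where I would either cite a standard reference on Perron--Frobenius for reducible non-negative matrices or briefly expand on the sink-component argument sketched above. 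Everything else is a one-line consequence of row/column stochasticity.
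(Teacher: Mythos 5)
Your proposal is correct, and it is worth noting that the paper itself supplies no proof of this lemma: Lemmas~\ref{lem1:perron}--\ref{lem3:matrix_norm} are stated without argument and are imported as standard facts from the push--pull literature (they correspond to preliminary lemmas in~\cite{pu2020push}). Your write-up therefore does more than the paper does, and the argument you give is the standard and correct one. The two elementary observations (row-stochasticity gives $\rho(\mathbf{R})=\|\mathbf{R}\|_{\infty}=1$ with $1$ an eigenvalue; $\mathbf{C}$ is handled by applying the same reasoning to the row-stochastic matrix $\mathbf{C}^{\top}$ and transposing) are fine, and you correctly identify the only nontrivial step: under a spanning-tree hypothesis rather than irreducibility, one must show that the condensation of the induced digraph has exactly one closed communicating class, so that $\mathbf{R}$ is permutation-similar to a block-triangular matrix whose recurrent block is irreducible stochastic and whose transient block has spectral radius strictly below $1$; Perron--Frobenius on the recurrent block plus invertibility of $\mathbf{I}-\mathbf{R}_{TT}$ then gives algebraic simplicity of the eigenvalue $1$ and a one-dimensional left eigenspace spanned by a non-negative vector (the Perron vector extended by zeros). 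One small caution for a careful write-up: with the paper's edge convention ($(i,j)\in\mathcal{E}_{\mathbf{R}}$ iff $r_{ij}>0$, i.e., $i$ receives from $j$), whether the root's strongly connected component is the unique \emph{sink} or the unique \emph{source} of the condensation depends on which orientation you use for reachability; what matters for the stationary-distribution argument is that the root's component is reachable from every node under the Markov-chain orientation $i\to j$ with weight $r_{ij}$, which is exactly what Assumption~\ref{ass2:graph} provides. With that orientation pinned down, your sketch closes the gap completely.
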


\begin{lemma}\label{lem2:norm_bd}
	There exist matrix norms $\left \| \cdot \right \|_{\mathbf{R}}$ and $\left \| \cdot \right \|_{\mathbf{C}}$ such that $\sigma_{\mathbf{R}} := \left \| \mathbf{R}-\mathbf{1}\pi_{\mathbf{R}}^{\top} \right \|_{\mathbf{R}} < 1$ and $\sigma_{\mathbf{C}} := \left \| \mathbf{C}-\pi_{\mathbf{C}}\mathbf{1}^{\top} \right \|_{\mathbf{C}} <1$, where $\sigma_{\mathbf{R}}$ and $\sigma_{\mathbf{C}}$ can be chosen arbitrarily closed to the respective spectral radius of $\mathbf{R}-\mathbf{1}\pi_{\mathbf{R}}^{\top}$ and $\mathbf{C}-\pi_{\mathbf{C}}\mathbf{1}^{\top}$.
\end{lemma}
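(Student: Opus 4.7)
The plan is to reduce the lemma to two well-known facts: (i) that $\rho(\mathbf{R}-\mathbf{1}\pi_{\mathbf{R}}^{\top})<1$ and $\rho(\mathbf{C}-\pi_{\mathbf{C}}\mathbf{1}^{\top})<1$, where $\rho(\cdot)$ denotes the spectral radius, and (ii) the classical Horn--Johnson approximation: for any square matrix $\mathbf{A}$ and any $\epsilon>0$ there is a submultiplicative matrix norm $\|\cdot\|_*$ with $\|\mathbf{A}\|_*\le \rho(\mathbf{A})+\epsilon$. Given (i), choosing $\epsilon$ sufficiently small and applying (ii) to $\mathbf{A}=\mathbf{R}-\mathbf{1}\pi_{\mathbf{R}}^{\top}$ yields a norm $\|\cdot\|_{\mathbf{R}}$ with $\sigma_{\mathbf{R}}<1$; the construction of $\|\cdot\|_{\mathbf{C}}$ is identical after interchanging the roles of left/right eigenvectors. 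The assertion that $\sigma_{\mathbf{R}}$ and $\sigma_{\mathbf{C}}$ can be made arbitrarily close to the respective spectral radii is built directly into (ii).

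The substantive work is establishing (i). First I would identify the spectrum of $\mathbf{R}-\mathbf{1}\pi_{\mathbf{R}}^{\top}$. Since $\mathbf{R}\mathbf{1}=\mathbf{1}$ and $\pi_{\mathbf{R}}^{\top}\mathbf{1}=1$ (Lemma~\ref{lem1:perron}), we have $(\mathbf{R}-\mathbf{1}\pi_{\mathbf{R}}^{\top})\mathbf{1}=\mathbf{0}$, so $0$ is an eigenvalue. For any left eigenvector $\mathbf{v}$ of $\mathbf{R}$ with eigenvalue $\lambda\ne 1$, the chain $\lambda\,\mathbf{v}^{\top}\mathbf{1}=\mathbf{v}^{\top}\mathbf{R}\mathbf{1}=\mathbf{v}^{\top}\mathbf{1}$ forces $\mathbf{v}^{\top}\mathbf{1}=0$, hence $\mathbf{v}^{\top}(\mathbf{R}-\mathbf{1}\pi_{\mathbf{R}}^{\top})=\lambda\mathbf{v}^{\top}$ so $\lambda$ persists as an eigenvalue of the shifted matrix. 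A dimension count (or block triangularization in a basis whose first vector is $\mathbf{1}$) shows these exhaust the spectrum, i.e., $\mathrm{spec}(\mathbf{R}-\mathbf{1}\pi_{\mathbf{R}}^{\top})=\bigl(\mathrm{spec}(\mathbf{R})\setminus\{1\}\bigr)\cup\{0\}$. The same reasoning with right eigenvectors handles $\mathbf{C}-\pi_{\mathbf{C}}\mathbf{1}^{\top}$.

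Next I would invoke the standard consensus-theoretic fact (originating from Ren--Beard and used in~\cite{pu2020push}) that under Assumption~\ref{ass2:graph} the row-stochastic matrix $\mathbf{R}$ has $1$ as a simple eigenvalue and every other eigenvalue $\lambda$ satisfies $|\lambda|<1$; the analogous statement holds for the column-stochastic $\mathbf{C}$. Combining with the spectrum identification above gives
\[
\rho(\mathbf{R}-\mathbf{1}\pi_{\mathbf{R}}^{\top})=\max\{|\lambda|:\lambda\in\mathrm{spec}(\mathbf{R}),\lambda\ne 1\}<1,
\]
and symmetrically for $\mathbf{C}$, which is exactly (i).

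The main obstacle is the strict inequality $|\lambda|<1$ for the non-unit eigenvalues of $\mathbf{R}$ and $\mathbf{C}$. Assumption~\ref{ass2:graph} alone guarantees only that $1$ is a \emph{simple} eigenvalue; without an aperiodicity-type condition, other eigenvalues could lie on the unit circle, and the standard Perron--Frobenius theorem would only give $|\lambda|\le 1$. The usual remedy, implicit in all push--pull works including~\cite{pu2020push}, is to assume positive self-weights $r_{ii},c_{ii}>0$, which together with the spanning tree hypothesis ensures aperiodicity and hence the required strict spectral gap. I would therefore note this standard convention explicitly and quote the corresponding spectral result rather than reprove it, since the remainder of the argument is a straightforward algebraic manipulation plus a textbook norm-approximation lemma.
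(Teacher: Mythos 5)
Your proposal is correct, and it is in fact \emph{more} than the paper provides: the paper states Lemma~\ref{lem2:norm_bd} without proof, importing it as a known result from the push--pull literature~\cite{pu2020push}, whereas you reconstruct the standard argument in full. Your two ingredients are exactly the canonical ones: the spectrum identification $\mathrm{spec}(\mathbf{R}-\mathbf{1}\pi_{\mathbf{R}}^{\top})=(\mathrm{spec}(\mathbf{R})\setminus\{1\})\cup\{0\}$ (via $(\mathbf{R}-\mathbf{1}\pi_{\mathbf{R}}^{\top})\mathbf{1}=\mathbf{0}$ and the orthogonality $\mathbf{v}^{\top}\mathbf{1}=0$ of left eigenvectors for $\lambda\ne 1$, completed by block triangularization to cover generalized eigenvectors), and the Horn--Johnson norm-approximation lemma, which also delivers the ``arbitrarily close to the spectral radius'' clause for free. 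Your flagged obstacle is a genuine and worthwhile observation: Assumptions~\ref{ass2:graph} and~\ref{ass3:graph_weight} as literally written guarantee only that $1$ is a simple eigenvalue, not that the remaining eigenvalues lie strictly inside the unit circle; without an aperiodicity condition such as the positive self-weights $r_{ii},c_{ii}>0$ assumed explicitly in~\cite{pu2020push}, a periodic stochastic matrix could place other eigenvalues on the unit circle and the lemma would fail. The paper silently inherits this convention from the push--pull framework it builds on, so your explicit statement of it closes a small gap in the hypotheses rather than introducing one in the proof.
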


\begin{lemma}\label{lem3:matrix_norm}
	There exist constants $\delta_{\mathbf{R},\mathbf{C}}$ and $\delta_{\mathbf{C},\mathbf{R}}$ such that for any vector $\mathbf{x}$, we have $\left \| \cdot \right \|_{\mathbf{R}} \le \delta_{\mathbf{R},\mathbf{C}}\left \| \cdot \right \|_{\mathbf{C}}$, $\left \| \cdot \right \|_{\mathbf{C}} \le \delta_{\mathbf{C},\mathbf{R}}\left \| \cdot \right \|_{\mathbf{R}}$. Additionally, both norms satisfy $\left \| \cdot \right \|_{\mathbf{R}} \le \left \| \cdot \right \|_{2}$ and $\left \| \cdot \right \|_{\mathbf{C}} \le \left \| \cdot \right \|_{2}$ from the construction of norms $\left \| \cdot \right \|_{\mathbf{R}}$ and $\left \| \cdot \right \|_{\mathbf{C}}$.
\end{lemma}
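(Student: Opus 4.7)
The plan is to leverage two standard facts: (i) any two norms on a finite-dimensional vector space are equivalent, and (ii) the constructive proof of Lemma~\ref{lem2:norm_bd} produces norms that can, without loss of generality, be scaled so that $\|\cdot\|_{\mathbf{R}} \le \|\cdot\|_2$ and $\|\cdot\|_{\mathbf{C}} \le \|\cdot\|_2$. Both parts are classical, so the task is mainly to record why the constants exist with the claimed properties rather than to carry out lengthy computations.

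For the first part, I would invoke norm equivalence on $\mathbb{R}^n$. Since the unit sphere $\mathcal{S}_{\mathbf{R}} = \{\mathbf{x} \in \mathbb{R}^n : \|\mathbf{x}\|_{\mathbf{R}} = 1\}$ is compact under any (hence every) norm topology and the map $\mathbf{x} \mapsto \|\mathbf{x}\|_{\mathbf{C}}$ is continuous, it attains its supremum on $\mathcal{S}_{\mathbf{R}}$ at some positive value $\delta_{\mathbf{C},\mathbf{R}}$. Homogeneity then yields $\|\mathbf{x}\|_{\mathbf{C}} \le \delta_{\mathbf{C},\mathbf{R}}\|\mathbf{x}\|_{\mathbf{R}}$ for every $\mathbf{x}$. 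Exchanging the roles of the two norms gives $\delta_{\mathbf{R},\mathbf{C}}$ in exactly the same way.

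For the second part, I would unpack the standard construction behind Lemma~\ref{lem2:norm_bd}. By a Schur/Jordan argument applied to $\mathbf{R}-\mathbf{1}\pi_{\mathbf{R}}^{\top}$, one selects an invertible $\mathbf{Q}_{\mathbf{R}}$ (with a free scalar parameter) so that $\|\mathbf{x}\|_{\mathbf{R}} := \|\mathbf{Q}_{\mathbf{R}}^{-1}\mathbf{x}\|_2$ satisfies $\|\mathbf{R}-\mathbf{1}\pi_{\mathbf{R}}^{\top}\|_{\mathbf{R}}$ arbitrarily close to the spectral radius. Because the bound $\sigma_{\mathbf{R}}<1$ is scale-invariant in $\mathbf{Q}_{\mathbf{R}}$, I can rescale $\mathbf{Q}_{\mathbf{R}}$ so that $\|\mathbf{Q}_{\mathbf{R}}^{-1}\|_2 \le 1$, which yields $\|\mathbf{x}\|_{\mathbf{R}} \le \|\mathbf{x}\|_2$ while preserving $\sigma_{\mathbf{R}}$. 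The same rescaling argument applied to $\mathbf{C}-\pi_{\mathbf{C}}\mathbf{1}^{\top}$ produces $\|\cdot\|_{\mathbf{C}} \le \|\cdot\|_2$.

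The main obstacle, though a mild one, is to argue that this rescaling is genuinely free and does not interfere with the spectral-closeness guarantees of Lemma~\ref{lem2:norm_bd}: concretely, that replacing $\mathbf{Q}_{\mathbf{R}}$ by $c\mathbf{Q}_{\mathbf{R}}$ leaves $\sigma_{\mathbf{R}} = \|\mathbf{Q}_{\mathbf{R}}^{-1}(\mathbf{R}-\mathbf{1}\pi_{\mathbf{R}}^{\top})\mathbf{Q}_{\mathbf{R}}\|_2$ unchanged. Once this is noted, the conclusion $\|\cdot\|_{\mathbf{R}}\le\|\cdot\|_2$ and $\|\cdot\|_{\mathbf{C}}\le\|\cdot\|_2$ follows directly, and combined with the equivalence constants from the first part completes the lemma.
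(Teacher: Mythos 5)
Your proof is correct. The paper itself gives no proof of this lemma (Lemmas~\ref{lem1:perron}--\ref{lem3:matrix_norm} are imported as standard facts from the push--pull literature~\cite{pu2020push}), and your argument --- finite-dimensional norm equivalence via compactness of the unit sphere for the constants $\delta_{\mathbf{R},\mathbf{C}}$, $\delta_{\mathbf{C},\mathbf{R}}$, together with the observation that $\left\|\mathbf{Q}^{-1}(\mathbf{R}-\mathbf{1}\pi_{\mathbf{R}}^{\top})\mathbf{Q}\right\|_2$ is invariant under $\mathbf{Q}\mapsto c\mathbf{Q}$ so that $\mathbf{Q}$ may be rescaled to make $\left\|\mathbf{Q}^{-1}\right\|_2\le 1$ --- is exactly the standard one underlying the cited result.
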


\subsection{Convergence of Dual Problem}
The dual convergence analysis focuses on three key metrics: consensus error, gradient tracking error, and gradient norm. Define the following auxiliary variables: $\bar{\mathbf{x}}_k=\mathbf{X}_k^{\top}\pi_{\mathbf{R}}$, $\bar{\mathbf{y}}_k=\mathbf{Y}_k^{\top}\pi_{\mathbf{R}}$, and $\hat{\mathbf{y}}_k=\mathbf{Y}_k^{\top}\mathbf{1}=\nabla\mathbf{f} (\mathbf{X}_{k})^{\top}\mathbf{1}$. Then the metrics are characterized as: the consensus error in dual variable estimates $\left \|  \mathbf{X}_{k+1} - \mathbf{1} \bar{\mathbf{x}}_{k+1}^{\top} \right \|_{\mathbf{R}}$, the tracking error in dual gradient estimates $\left \| \mathbf{Y}_{k+1}-\pi_{\mathbf{C}}\hat{\mathbf{y}}_{k+1}^{\top} \right \|_{\mathbf{C}}$ and the gradient norm at the average dual variable $\left \|\nabla f (\bar{\mathbf{x}}_k)\right \|$. Before presenting the main convergence theorem, we establish several supporting lemmas.

\begin{lemma}\label{lem4:ineq_sys1}
	Under Assumptions~\ref{ass1:primal_cvx}--\ref{ass4:trigger}, the following linear system of inequalities w.r.t. $\left \|  \mathbf{X}_{k+1} - \mathbf{1} \bar{\mathbf{x}}_{k+1}^{\top} \right \|_{\mathbf{R}}$ and $\left \| \mathbf{Y}_{k+1}-\pi_{\mathbf{C}}\hat{\mathbf{y}}_{k+1}^{\top} \right \|_{\mathbf{C}}$ holds:
	\begin{align} \label{eq:inequal_sys_1}
		\begin{aligned}
			\begin{bmatrix}
				\left \| \mathbf{X}_{k+1}-\mathbf{1}\bar{\mathbf{x}}_{k+1}^{\top} \right \|_{\mathbf{R}} \\
				\left \| \mathbf{Y}_{k+1}-\pi_{\mathbf{C}}\hat{\mathbf{y}}_{k+1}^{\top} \right \|_{\mathbf{C}}
			\end{bmatrix}
			\preceq &\mathbf{P}
			\begin{bmatrix}
				\left \| \mathbf{X}_{k}-\mathbf{1}\bar{\mathbf{x}}_{k}^{\top} \right \|_{\mathbf{R}} \\
				\left \| \mathbf{Y}_{k}-\pi_{\mathbf{C}}\hat{\mathbf{y}}_{k}^{\top} \right \|_{\mathbf{C}}
			\end{bmatrix}\\
			&+ \mathbf{Q}
			\begin{bmatrix}
				\alpha \left \| \nabla f(\bar{\mathbf{x}}_k) \right \| \\
				\sqrt{n}e_k
			\end{bmatrix},
		\end{aligned}
	\end{align}
	where the inequality is taken element-wise, and the elements of matrices $\mathbf{P}$ and $\mathbf{Q}$ are given by:	
	\begin{align*}
		\mathbf{P} =& \begin{bmatrix}
			\sigma_{\mathbf{R}} + \alpha  \sqrt{n} L & \alpha \delta_{\mathbf{R},\mathbf{C}}\\
			\left( 2 + \alpha \sqrt{n} L \right) L \delta_{\mathbf{C},\mathbf{R}} & \sigma_{\mathbf{C}} + \alpha L
		\end{bmatrix}, \\
		\mathbf{Q} =& \begin{bmatrix}
			1 & 1+\sigma_{\mathbf{R}}\\
			L &  2L + 1 + \sigma_{\mathbf{C}}
		\end{bmatrix}.
	\end{align*}
\end{lemma}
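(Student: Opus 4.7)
The plan is to derive two coupled recursions, one for the consensus error $\mathbf{X}_{k+1}-\mathbf{1}\bar{\mathbf{x}}_{k+1}^{\top}$ in the $\|\cdot\|_{\mathbf{R}}$ norm and one for the tracking error $\mathbf{Y}_{k+1}-\pi_{\mathbf{C}}\hat{\mathbf{y}}_{k+1}^{\top}$ in the $\|\cdot\|_{\mathbf{C}}$ norm, each exhibiting a geometric contraction ($\sigma_{\mathbf{R}}$ or $\sigma_{\mathbf{C}}$) together with residual terms that feed into the companion error, the dual gradient norm $\|\nabla f(\bar{\mathbf{x}}_k)\|$, and the event-triggered perturbations $\sqrt{n}e_k$. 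Reading off these bounds gives exactly the matrices $\mathbf{P}$ and $\mathbf{Q}$. Lemmas~\ref{lem1:perron}--\ref{lem3:matrix_norm} supply the contraction constants and a principled way to pass between $\|\cdot\|_{\mathbf{R}}$, $\|\cdot\|_{\mathbf{C}}$, and $\|\cdot\|_{2}$.

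For the $\mathbf{X}$ recursion I would first left-multiply \eqref{eq:x} by $\pi_{\mathbf{R}}^{\top}$; since $\pi_{\mathbf{R}}^{\top}(\mathbf{R}-\mathbf{I})=\mathbf{0}$, this yields $\bar{\mathbf{x}}_{k+1}^{\top}=\bar{\mathbf{x}}_{k}^{\top}-\alpha\pi_{\mathbf{R}}^{\top}\mathbf{Y}_k$. Subtracting $\mathbf{1}\bar{\mathbf{x}}_{k+1}^{\top}$ from \eqref{eq:x} and using $(\mathbf{R}-\mathbf{I})\mathbf{1}\bar{\mathbf{x}}_{k}^{\top}=\mathbf{0}$ collapses the dynamics to
\[
\mathbf{X}_{k+1}-\mathbf{1}\bar{\mathbf{x}}_{k+1}^{\top}=(\mathbf{R}-\mathbf{1}\pi_{\mathbf{R}}^{\top})(\mathbf{X}_k-\mathbf{1}\bar{\mathbf{x}}_k^{\top})+(\mathbf{R}-\mathbf{I})\boldsymbol{\zeta}_k-\alpha(\mathbf{I}-\mathbf{1}\pi_{\mathbf{R}}^{\top})\mathbf{Y}_k.
\]
Applying $\|\cdot\|_{\mathbf{R}}$ gives the leading $\sigma_{\mathbf{R}}$ term; writing $\mathbf{R}-\mathbf{I}=(\mathbf{R}-\mathbf{1}\pi_{\mathbf{R}}^{\top})-(\mathbf{I}-\mathbf{1}\pi_{\mathbf{R}}^{\top})$ together with $\|\boldsymbol{\zeta}_k\|_{\mathbf{R}}\le\|\boldsymbol{\zeta}_k\|_2\le\sqrt{n}e_k$ produces the $(1+\sigma_{\mathbf{R}})\sqrt{n}e_k$ entry of $\mathbf{Q}$. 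The last term is split as $(\mathbf{I}-\mathbf{1}\pi_{\mathbf{R}}^{\top})(\mathbf{Y}_k-\pi_{\mathbf{C}}\hat{\mathbf{y}}_k^{\top})+(\mathbf{I}-\mathbf{1}\pi_{\mathbf{R}}^{\top})\pi_{\mathbf{C}}\hat{\mathbf{y}}_k^{\top}$; Lemma~\ref{lem3:matrix_norm} converts the first piece into the $\alpha\delta_{\mathbf{R},\mathbf{C}}$ coefficient, while on the second I add and subtract $\nabla f(\bar{\mathbf{x}}_k)=\sum_i\nabla f_i(\bar{\mathbf{x}}_k)$ and invoke $L$-Lipschitzness of each $\nabla f_i$ (Assumption~\ref{ass1:primal_cvx} via Fenchel duality), which yields the $\alpha\sqrt{n}L$ perturbation to $\sigma_{\mathbf{R}}$ and the $\alpha\|\nabla f(\bar{\mathbf{x}}_k)\|$ term with coefficient $1$.

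For the $\mathbf{Y}$ recursion I follow the dual template. Multiplying \eqref{eq:y} by $\mathbf{1}^{\top}$ and using $\mathbf{1}^{\top}(\mathbf{C}-\mathbf{I})=\mathbf{0}$ gives $\hat{\mathbf{y}}_{k+1}^{\top}=\hat{\mathbf{y}}_{k}^{\top}+\mathbf{1}^{\top}(\nabla\mathbf{f}(\mathbf{X}_{k+1})-\nabla\mathbf{f}(\mathbf{X}_{k}))$, and subtracting $\pi_{\mathbf{C}}\hat{\mathbf{y}}_{k+1}^{\top}$ together with $(\mathbf{C}-\mathbf{I})\pi_{\mathbf{C}}=\mathbf{0}$ produces
\[
\mathbf{Y}_{k+1}-\pi_{\mathbf{C}}\hat{\mathbf{y}}_{k+1}^{\top}=(\mathbf{C}-\pi_{\mathbf{C}}\mathbf{1}^{\top})(\mathbf{Y}_k-\pi_{\mathbf{C}}\hat{\mathbf{y}}_k^{\top})+(\mathbf{C}-\mathbf{I})\boldsymbol{\xi}_k+(\mathbf{I}-\pi_{\mathbf{C}}\mathbf{1}^{\top})(\nabla\mathbf{f}(\mathbf{X}_{k+1})-\nabla\mathbf{f}(\mathbf{X}_k)).
\]
Taking $\|\cdot\|_{\mathbf{C}}$ gives the leading $\sigma_{\mathbf{C}}$; the decomposition $\mathbf{C}-\mathbf{I}=(\mathbf{C}-\pi_{\mathbf{C}}\mathbf{1}^{\top})-(\mathbf{I}-\pi_{\mathbf{C}}\mathbf{1}^{\top})$ together with $\|\boldsymbol{\xi}_k\|\le\sqrt{n}e_k$ contributes $(1+\sigma_{\mathbf{C}})\sqrt{n}e_k$. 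The crucial step is bounding the Lipschitz increment: $\|\nabla\mathbf{f}(\mathbf{X}_{k+1})-\nabla\mathbf{f}(\mathbf{X}_k)\|_{\mathbf{C}}\le L\|\mathbf{X}_{k+1}-\mathbf{X}_k\|_2$, and substituting \eqref{eq:x} splits $\mathbf{X}_{k+1}-\mathbf{X}_k$ into $(\mathbf{R}-\mathbf{I})(\mathbf{X}_k-\mathbf{1}\bar{\mathbf{x}}_k^{\top})$, $(\mathbf{R}-\mathbf{I})\boldsymbol{\zeta}_k$, and $-\alpha\mathbf{Y}_k$. The last one I again expand through $\mathbf{Y}_k=(\mathbf{Y}_k-\pi_{\mathbf{C}}\hat{\mathbf{y}}_k^{\top})+\pi_{\mathbf{C}}\hat{\mathbf{y}}_k^{\top}$ and insert $\pm\nabla f(\bar{\mathbf{x}}_k)$; this produces the $(2+\alpha\sqrt{n}L)L\delta_{\mathbf{C},\mathbf{R}}$ coefficient on the consensus error (the $2$ comes from $\|\mathbf{R}-\mathbf{I}\|\le 1+\sigma_{\mathbf{R}}\le 2$-type bounding after using Lemma~\ref{lem3:matrix_norm}), the $\alpha L$ addition to $\sigma_{\mathbf{C}}$, the $L\cdot\alpha\|\nabla f(\bar{\mathbf{x}}_k)\|$ term, and the extra $2L\sqrt{n}e_k$ that combines with $(1+\sigma_{\mathbf{C}})\sqrt{n}e_k$ to give $\mathbf{Q}_{22}$.

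The main obstacle is the bookkeeping required to route every quantity through the correct norm while keeping the constants tight. Because $\|\cdot\|_{\mathbf{R}}$ contracts only $(\mathbf{R}-\mathbf{1}\pi_{\mathbf{R}}^{\top})$ and $\|\cdot\|_{\mathbf{C}}$ only $(\mathbf{C}-\pi_{\mathbf{C}}\mathbf{1}^{\top})$, every cross-term has to be transferred via Lemma~\ref{lem3:matrix_norm}, which is where the $\delta_{\mathbf{R},\mathbf{C}}$ and $\delta_{\mathbf{C},\mathbf{R}}$ factors surface. The other subtlety is that $\hat{\mathbf{y}}_k=\nabla\mathbf{f}(\mathbf{X}_k)^{\top}\mathbf{1}$ is not itself $\nabla f(\bar{\mathbf{x}}_k)$: a uniform $L$-Lipschitz argument plus a $\sqrt{n}$ Cauchy--Schwarz step is needed to trade $\|\hat{\mathbf{y}}_k-\nabla f(\bar{\mathbf{x}}_k)\|$ for $\sqrt{n}L\|\mathbf{X}_k-\mathbf{1}\bar{\mathbf{x}}_k^{\top}\|_{\mathbf{R}}$, which is precisely the origin of the $\alpha\sqrt{n}L$ entries in $\mathbf{P}$. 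Stacking the two scalar inequalities then gives the claimed element-wise bound with matrices $\mathbf{P}$ and $\mathbf{Q}$.
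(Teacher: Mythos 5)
Your proposal is correct and follows essentially the same route as the paper's proof: the same two error recursions, the same splitting of $\mathbf{Y}_k$ into $(\mathbf{Y}_k-\pi_{\mathbf{C}}\hat{\mathbf{y}}_k^{\top})+\pi_{\mathbf{C}}\hat{\mathbf{y}}_k^{\top}$ with $\pm\nabla f(\bar{\mathbf{x}}_k)$ inserted, the same expansion of $\mathbf{X}_{k+1}-\mathbf{X}_k$ for the gradient-tracking increment, and the same use of Lemma~\ref{lem3:matrix_norm} for the cross-norm factors $\delta_{\mathbf{R},\mathbf{C}}$ and $\delta_{\mathbf{C},\mathbf{R}}$. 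All entries of $\mathbf{P}$ and $\mathbf{Q}$ are accounted for exactly as in Appendix~A.
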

\begin{proof}
	See APPENDIX A.
\end{proof}

Denote $\lambda \triangleq \rho(\mathbf{P})$ as the spectral radius of $\mathbf{P}$ and $\underline{\Psi}$ as the lower bound of $\Psi = \sqrt{(p_{11}-p_{12})^2+4p_{12}p_{21}}$. Based on these quantities, the positive constants$c_i, i=0,\ldots,5$ are introduced to simplify the expressions in subsequent lemmas:
\begin{align}
	c_0 &= \left \| \mathbf{Y}_{0}-\pi_{\mathbf{C}}\hat{\mathbf{y}}_{0}^{\top} \right \|_{\mathbf{C}} / \underline{\Psi} \le \sqrt{\sum_{i=1}^{n}||\nabla f_i(\mathbf{X}_0)||^2} / \underline{\Psi},\nonumber \\
	c_1 &= 1 + L / \underline{\Psi},\nonumber \\
	c_2 &= \left [1+\sigma_{\mathbf{R}}+ (2L+1+\sigma_{\mathbf{C}})/\underline{\Psi} \right ] \sqrt{n}, \nonumber \\
	c_3 &= \left \| \mathbf{Y}_{0}-\pi_{\mathbf{C}}\hat{\mathbf{y}}_{0}^{\top} \right \|_{\mathbf{C}}\le \sqrt{\sum_{i=1}^{n}||\nabla f_i(\mathbf{X}_0)||^2}, \nonumber \\
	c_4 &= L + (2L\delta_{\mathbf{C},\mathbf{R}}+L)/\underline{\Psi}, \nonumber \\
	c_5 &= \left[1+\sigma_{\mathbf{C}}+2L + (1-\sigma_{\mathbf{R}})(2L\delta_{\mathbf{C},\mathbf{R}}+L)/ \underline{\Psi}\right] \sqrt{n}.\nonumber
\end{align}
\begin{lemma}\label{lem5}
	Under the conditions in Lemmas~\ref{lem1:perron}--\ref{lem4:ineq_sys1}, if the step size $\alpha$ satisfies
	\begin{align} \label{eq:a_bd_1}
		\alpha < \min \{\frac{(1-\sigma_{\mathbf{R}})(1-\sigma_{\mathbf{C}})}{(\sqrt{n}L+\sqrt{n}+3)L\delta_{\mathbf{R},\mathbf{C}}\delta_{\mathbf{C},\mathbf{R}}}, \frac{1}{\delta_{\mathbf{R},\mathbf{C}}}  \},
	\end{align}
	then we have the following upper bounds for $ \left \| \mathbf{X}_{k}-\mathbf{1}\bar{\mathbf{x}}_{k}^{\top} \right \|_{\mathbf{R}} $ and $ \left \| \mathbf{Y}_{k}-\pi_{\mathbf{C}}\hat{\mathbf{y}}_{k}^{\top} \right \|_{\mathbf{C}} $:
	\begin{align}\label{eq:bound_s1}
		\begin{aligned}
			&\begin{bmatrix}
				\left \| \mathbf{X}_{k} - \mathbf{1} \bar{\mathbf{x}}_{k}^{\top} \right \|_{\mathbf{R}} \\
				\left \| \mathbf{Y}_{k} - \pi_{\mathbf{C}} \hat{\mathbf{y}}_{k}^{\top} \right \|_{\mathbf{C}}
			\end{bmatrix} \\
			\preceq
			&\begin{bmatrix}
				{c_0} & {c_1} & {c_2} \\
				{c_3} & {c_4} & {c_5}
			\end{bmatrix}
			\begin{bmatrix}
				\lambda^{k} \\
				\alpha \sum_{t=0}^{k-1} \lambda^{t} \left \| \nabla f(\bar{\mathbf{x}}_{k-1-t}) \right \| \\
				\sum_{t=0}^{k-1} \lambda^{t} e_{k-1-t}
			\end{bmatrix}.
		\end{aligned}
	\end{align}
\end{lemma}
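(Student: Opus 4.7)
The plan is to unfold the linear matrix recursion \eqref{eq:inequal_sys_1} established in Lemma~\ref{lem4:ineq_sys1} and then carefully bound the powers of $\mathbf{P}$ entrywise by $\lambda^k$ times explicit constants. Iterating \eqref{eq:inequal_sys_1} and using that $\tilde{\mathbf{W}}_0 = \mathbf{0}$ (so $\|\mathbf{X}_0 - \mathbf{1}\bar{\mathbf{x}}_0^\top\|_{\mathbf{R}} = 0$), we obtain the closed-form inequality
\begin{align*}
\begin{bmatrix}
\|\mathbf{X}_{k}-\mathbf{1}\bar{\mathbf{x}}_{k}^{\top}\|_{\mathbf{R}} \\
\|\mathbf{Y}_{k}-\pi_{\mathbf{C}}\hat{\mathbf{y}}_{k}^{\top}\|_{\mathbf{C}}
\end{bmatrix}
\preceq \mathbf{P}^{k}
\begin{bmatrix}
0 \\
\|\mathbf{Y}_{0}-\pi_{\mathbf{C}}\hat{\mathbf{y}}_{0}^{\top}\|_{\mathbf{C}}
\end{bmatrix}
+ \sum_{t=0}^{k-1} \mathbf{P}^{t}\mathbf{Q}
\begin{bmatrix}
\alpha \|\nabla f(\bar{\mathbf{x}}_{k-1-t})\| \\
\sqrt{n}\, e_{k-1-t}
\end{bmatrix}.
\end{align*}
Once each entry of $\mathbf{P}^{t}$ is bounded by a constant multiple of $\lambda^{t}$, the statement \eqref{eq:bound_s1} follows by lining up coefficients.

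Next, I would use the fact that $\mathbf{P}$ is a $2\times 2$ nonnegative matrix to analyze its powers. Writing $\lambda$ and $\mu$ for its two eigenvalues ($\lambda=\rho(\mathbf{P})\geq|\mu|$), the spectral decomposition gives
\begin{align*}
\mathbf{P}^{t} \;=\; \frac{(\mathbf{P}-\mu \mathbf{I})\lambda^{t} - (\mathbf{P}-\lambda \mathbf{I})\mu^{t}}{\lambda-\mu},
\end{align*}
so each off-diagonal entry is bounded by $p_{ij}(\lambda^t-\mu^t)/(\lambda-\mu)\leq p_{ij}\lambda^{t}/\underline{\Psi}$, and each diagonal entry is bounded by $\lambda^{t}$ (using $\mu\geq 0$, which holds because the determinant of $\mathbf{P}$ is positive under \eqref{eq:a_bd_1}). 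The key step-size control is that $\alpha\,\delta_{\mathbf{R},\mathbf{C}}<1$ and $\alpha\sqrt{n}L\delta_{\mathbf{C},\mathbf{R}}<1$ both follow from \eqref{eq:a_bd_1}, which lets me eliminate the residual $\alpha$ factors hidden inside the off-diagonal entries $p_{12}=\alpha\delta_{\mathbf{R},\mathbf{C}}$ and $p_{21}=(2+\alpha\sqrt{n}L)L\delta_{\mathbf{C},\mathbf{R}}$, replacing them by the cleaner constants $1/\underline{\Psi}$ and $(2L\delta_{\mathbf{C},\mathbf{R}}+L)/\underline{\Psi}$ respectively.

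Substituting these entrywise bounds into the unfolded recursion, the initial-condition contribution in the first component is at most $(p_{12}/\underline{\Psi})\lambda^{k}\|\mathbf{Y}_{0}-\pi_{\mathbf{C}}\hat{\mathbf{y}}_{0}^{\top}\|_{\mathbf{C}} \leq c_{0}\lambda^{k}$, and similarly in the second component it is at most $c_{3}\lambda^{k}$. For the source terms, computing $(\mathbf{P}^{t}\mathbf{Q})_{ij}$ column by column produces coefficients $c_{1}$ and $c_{2}$ for the first row and $c_{4}$ and $c_{5}$ for the second row, matching the definitions given just before the lemma. This yields \eqref{eq:bound_s1} directly.

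Finally, I must show that \eqref{eq:a_bd_1} implies $\lambda<1$, so that the bounds above stay meaningful. By Perron–Frobenius, $\lambda<1$ is equivalent to the Schur conditions $p_{11}<1$, $p_{22}<1$, and $(1-p_{11})(1-p_{22})>p_{12}p_{21}$. The first two follow easily from the two pieces of \eqref{eq:a_bd_1}, and the third expands to a polynomial inequality in $\alpha$ whose dominant linear term gives precisely the bound $\alpha < (1-\sigma_{\mathbf{R}})(1-\sigma_{\mathbf{C}})/[(\sqrt{n}L+\sqrt{n}+3)L\delta_{\mathbf{R},\mathbf{C}}\delta_{\mathbf{C},\mathbf{R}}]$ in \eqref{eq:a_bd_1}. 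The main technical obstacle will be the bookkeeping in the third step: ensuring that the bound on $P^{t}_{21}$, which carries the $\alpha$-dependent factor $(2+\alpha\sqrt{n}L)$, is absorbed into the constant $(2L\delta_{\mathbf{C},\mathbf{R}}+L)/\underline{\Psi}$ using precisely the right piece of the step-size condition, and similarly verifying that all other implicit $\alpha$-factors are controlled so that every $c_{i}$ appears as an $\alpha$-independent constant in the final bound.
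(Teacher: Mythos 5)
Your proposal is correct and follows essentially the same route as the paper's proof: unfold the recursion from Lemma~\ref{lem4:ineq_sys1}, bound the entries of $\mathbf{P}^t$ by $\lambda^t$ times explicit $\alpha$-independent constants via the eigendecomposition of the $2\times 2$ matrix (your Sylvester-formula form is just the paper's $\mathbf{T}\boldsymbol{\Lambda}^t\mathbf{T}^{-1}$ written differently), absorb the residual $\alpha$-factors in $p_{12}$ and $p_{21}$ using the two pieces of \eqref{eq:a_bd_1}, and derive the step-size bound from $\rho(\mathbf{P})<1$. A minor point in your favor: you make explicit the need for the smaller eigenvalue to be nonnegative (equivalently $\det\mathbf{P}\ge 0$) when bounding $\lambda^t-\mu^t$ by $\lambda^t$, a step the paper uses implicitly.
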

\begin{proof}
	See APPENDIX B.
\end{proof}

\begin{lemma}\label{lem6}
	Under Assumptions~\ref{ass1:primal_cvx}--\ref{ass4:trigger}, we have the following upper bounds for $\sum_{t=0}^{k-1}\left \|\nabla f (\bar{\mathbf{x}}_t)\right \| \left \| \mathbf{X}_{t} - \mathbf{1} \bar{\mathbf{x}}_{t}^{\top} \right \|_{\mathbf{R}}$ and $\sum_{t=0}^{k-1}\left \|\nabla f (\bar{\mathbf{x}}_t)\right \| \left \| \mathbf{Y}_{t} - \pi_{\mathbf{C}} \hat{\mathbf{y}}_t^{\top} \right \|_{\mathbf{C}}$:
	\begin{align}\label{eq:bound_s3_matrix}
		\begin{aligned}
			&\begin{bmatrix}
				\sum_{t=0}^{k-1}\left\|\nabla f (\bar{\mathbf{x}}_t)\right\| \left\| \mathbf{X}_{t} - \mathbf{1} \bar{\mathbf{x}}_{t}^{\top} \right\|_{\mathbf{R}} \\
				\sum_{t=0}^{k-1}\left\|\nabla f (\bar{\mathbf{x}}_t)\right\| \left\| \mathbf{Y}_{t} - \pi_{\mathbf{C}} \hat{\mathbf{y}}_t^{\top} \right\|_{\mathbf{C}}
			\end{bmatrix} \\
			\preceq
			&\begin{bmatrix}
				\dfrac{c_0}{1-\lambda} & c_0 & \dfrac{c_1}{1-\lambda} + \dfrac{c_2}{2} & \dfrac{c_2}{2} \\[2mm]
				\dfrac{c_3}{1-\lambda} & c_3 & \dfrac{c_4}{1-\lambda} + \dfrac{c_5}{2} & \dfrac{c_5}{2}
			\end{bmatrix}
			\begin{bmatrix}
				1 \\
				\sum_{t=0}^{k-1}\lambda^{t} \left\| \nabla f (\bar{\mathbf{x}}_t) \right\|^2 \\
				\alpha \sum_{t=0}^{k-1}\left\| \nabla f(\bar{\mathbf{x}}_t) \right\|^2 \\
				\dfrac{e_0 S_e}{\alpha (1-\lambda)^2}
			\end{bmatrix}.
		\end{aligned}
	\end{align}
	
	\begin{proof}
		See APPENDIX C.
	\end{proof}
	
\end{lemma}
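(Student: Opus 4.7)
The plan is to derive the bound directly from the per-iteration estimate in Lemma~\ref{lem5}. That lemma gives, for each $t$, $\|\mathbf{X}_t - \mathbf{1}\bar{\mathbf{x}}_t^\top\|_{\mathbf{R}} \le c_0\lambda^t + c_1\alpha\sum_{s=0}^{t-1}\lambda^s\|\nabla f(\bar{\mathbf{x}}_{t-1-s})\| + c_2\sum_{s=0}^{t-1}\lambda^s e_{t-1-s}$, with an analogous bound for $\|\mathbf{Y}_t - \pi_{\mathbf{C}}\hat{\mathbf{y}}_t^\top\|_{\mathbf{C}}$ using $(c_3,c_4,c_5)$. Multiplying through by $\|\nabla f(\bar{\mathbf{x}}_t)\|$ and summing over $t=0,\ldots,k-1$ produces three contributions, each of which I would reshape into one of the four terms on the right-hand side of the target inequality. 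Throughout, I would keep in mind that the target has exactly two kinds of gradient-squared aggregates --- the weighted sum $\sum\lambda^t\|\nabla f(\bar{\mathbf{x}}_t)\|^2$ and the unweighted sum $\sum\|\nabla f(\bar{\mathbf{x}}_t)\|^2$ --- so the conversion of each contribution should be driven by which aggregate it naturally produces.

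For the geometric-decay contribution $c_0\sum_t \lambda^t\|\nabla f(\bar{\mathbf{x}}_t)\|$, I would apply the elementary inequality $x\le 1+x^2$ term-wise together with $\sum_{t=0}^{k-1}\lambda^t\le\frac{1}{1-\lambda}$. This directly yields $\frac{c_0}{1-\lambda} + c_0\sum_t\lambda^t\|\nabla f(\bar{\mathbf{x}}_t)\|^2$, which accounts for the first two columns of the target matrix. For the gradient-convolution contribution $c_1\alpha\sum_t\|\nabla f(\bar{\mathbf{x}}_t)\|\sum_{s=0}^{t-1}\lambda^s\|\nabla f(\bar{\mathbf{x}}_{t-1-s})\|$, I would apply $ab\le\tfrac12(a^2+b^2)$ inside the double sum, splitting it into a piece proportional to $\|\nabla f(\bar{\mathbf{x}}_t)\|^2\sum_s\lambda^s$ and a piece proportional to $\sum_s\lambda^s\|\nabla f(\bar{\mathbf{x}}_{t-1-s})\|^2$. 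The first piece is bounded by $\frac{1}{1-\lambda}\sum_t\|\nabla f(\bar{\mathbf{x}}_t)\|^2$ directly, and the second by swapping the order of summation using the change of variables $r=t-1-s$, which decouples the sums and leaves the same geometric factor. Both halves combine to give the $\frac{c_1}{1-\lambda}\alpha\sum\|\nabla f(\bar{\mathbf{x}}_t)\|^2$ entry of the target.

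The main obstacle is the event-triggered error-convolution contribution $c_2\sum_t\|\nabla f(\bar{\mathbf{x}}_t)\|\sum_{s=0}^{t-1}\lambda^s e_{t-1-s}$, because the target requires a factor of $\alpha$ in front of the gradient aggregate but a factor of $1/\alpha$ in front of the error aggregate. To achieve this imbalance, I would apply a \emph{weighted} Young's inequality $ab\le\tfrac{\alpha}{2}a^2+\tfrac{1}{2\alpha}b^2$ to the product $\|\nabla f(\bar{\mathbf{x}}_t)\|\cdot\bigl(\sum_s\lambda^s e_{t-1-s}\bigr)$. The gradient-squared side immediately produces the $\tfrac{c_2}{2}\alpha\sum\|\nabla f(\bar{\mathbf{x}}_t)\|^2$ term. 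For the error side, I would bound $\bigl(\sum_s\lambda^s e_{t-1-s}\bigr)^2$ using Cauchy--Schwarz as $\frac{1}{1-\lambda}\sum_s\lambda^s e_{t-1-s}^2$, then swap the order of summation in $t$ and $s$ to extract another $\frac{1}{1-\lambda}$, leaving $\sum_r e_r^2$. Finally, using Assumption~\ref{ass4:trigger} --- specifically that $\{e_r\}$ is non-increasing so $e_r^2\le e_0 e_r$ and summable with $\sum e_r = S_e$ --- yields $\sum_r e_r^2\le e_0 S_e$, giving the desired $\frac{c_2}{2}\cdot\frac{e_0 S_e}{\alpha(1-\lambda)^2}$ term.

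The second row of the matrix inequality follows by the same argument applied to the $\mathbf{Y}$-bound of Lemma~\ref{lem5}, with $(c_0,c_1,c_2)$ replaced by $(c_3,c_4,c_5)$ throughout; no new idea is needed. I expect the bookkeeping in the error-convolution step to be the only delicate part, since it simultaneously requires the weighted Young split, a Cauchy--Schwarz inside the convolution, a summation swap, and the monotonicity-plus-summability of $\{e_k\}$, all calibrated so that each constant in the final matrix falls into place.
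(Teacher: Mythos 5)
Your proposal is correct and follows essentially the same route as the paper's proof in Appendix C: both start from the componentwise bound of Lemma~\ref{lem5}, multiply by $\left\|\nabla f(\bar{\mathbf{x}}_t)\right\|$, sum over $t$, and handle the three resulting contributions via $x\le 1+x^2$ for the geometric term, a $\frac{1}{1-\lambda}\sum_t\left\|\nabla f(\bar{\mathbf{x}}_t)\right\|^2$ bound for the gradient convolution, and the weighted Young split $ab\le\frac{\alpha}{2}a^2+\frac{1}{2\alpha}b^2$ together with $\sum_t S_t^2\le\frac{e_0S_e}{(1-\lambda)^2}$ for the error convolution. The only differences are in technique for two sub-steps and are immaterial to the constants: for the gradient double sum the paper writes it as a quadratic form $\mathbf{g}_k^{\top}\Theta_k\mathbf{g}_k$ and bounds $\rho(\Theta_k+\Theta_k^{\top})$ by Gershgorin, whereas you use term-wise AM--GM plus a summation swap; and for $\sum_t S_t^2$ you use Cauchy--Schwarz inside the convolution plus $e_r^2\le e_0e_r$, in place of the paper's direct estimate $S_t^2\le\frac{e_0}{1-\lambda}S_t$ --- both routes yield the identical matrix of coefficients.
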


\begin{lemma}\label{lem7}
	Under Assumptions~\ref{ass1:primal_cvx}--\ref{ass4:trigger}, we have the following upper bounds for $\sum_{t=0}^{k-1}\left \| \mathbf{X}_{t} - \mathbf{1} \bar{\mathbf{x}}_{t}^{\top} \right \|_{\mathbf{R}} ^2$ and $\sum_{t=0}^{k-1}\left \| \mathbf{Y}_{t} - \pi_{\mathbf{C}} \hat{\mathbf{y}}_t^{\top} \right \|_{\mathbf{C}}^2 $:
	\begin{align}\label{eq:bound_s4_matrix}
		\begin{aligned}
			&\begin{bmatrix}
				\sum_{t=0}^{k-1}\left\| \mathbf{X}_{t} - \mathbf{1} \bar{\mathbf{x}}_{t}^{\top} \right\|_{\mathbf{R}}^2 \\
				\sum_{t=0}^{k-1}\left\| \mathbf{Y}_{t} - \pi_{\mathbf{C}} \hat{\mathbf{y}}_t^{\top} \right\|_{\mathbf{C}}^2
			\end{bmatrix}\\
			\preceq
			&\frac{2}{(1-\lambda)^2}
			\begin{bmatrix}
				c_0^2 & c_1^2 \alpha^2 & c_2^2 \\
				c_3^2 & c_4^2 \alpha^2 & c_5^2
			\end{bmatrix}
			\begin{bmatrix}
				1 \\
				\sum_{t=0}^{k-2} \left\| \nabla f(\bar{\mathbf{x}}_t) \right\|^2 \\
				\sum_{t=0}^{k-2} e_t^2
			\end{bmatrix}
		\end{aligned}
		.	\end{align}
\end{lemma}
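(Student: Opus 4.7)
The plan is to derive the squared-sum bounds directly from the pointwise bound already established in Lemma~\ref{lem5}. Writing the two rows of \eqref{eq:bound_s1} componentwise, we have, for each $t \ge 0$,
\begin{align*}
\bigl\| \mathbf{X}_{t} - \mathbf{1}\bar{\mathbf{x}}_{t}^{\top} \bigr\|_{\mathbf{R}}
&\le c_0 \lambda^{t} + c_1 \alpha G_t + c_2 E_t, \\
\bigl\| \mathbf{Y}_{t} - \pi_{\mathbf{C}} \hat{\mathbf{y}}_{t}^{\top} \bigr\|_{\mathbf{C}}
&\le c_3 \lambda^{t} + c_4 \alpha G_t + c_5 E_t,
\end{align*}
where $G_t := \sum_{s=0}^{t-1} \lambda^{s} \|\nabla f(\bar{\mathbf{x}}_{t-1-s})\|$ and $E_t := \sum_{s=0}^{t-1} \lambda^{s} e_{t-1-s}$. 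Squaring each bound and applying a standard inequality of the form $(a+b+c)^2 \le C\,(a^2+b^2+c^2)$ reduces the task to bounding three sums: $\sum_t \lambda^{2t}$, $\sum_t G_t^2$, and $\sum_t E_t^2$. Since both rows have the same structural form, the two bounds can be obtained in parallel, which is how the matrix presentation in \eqref{eq:bound_s4_matrix} arises.

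For the geometric term, $\sum_{t=0}^{k-1}\lambda^{2t}\le 1/(1-\lambda^2)\le 2/(1-\lambda)^2$, which accounts for the leading $c_0^2$ and $c_3^2$ entries. For the two convolution-type terms, my plan is to apply Young's inequality for discrete convolutions (equivalently, Cauchy--Schwarz followed by a swap of summation order). Concretely, for any non-negative sequence $\{b_t\}$,
\begin{align*}
\sum_{t=0}^{k-1} \Bigl(\sum_{s=0}^{t-1} \lambda^{s} b_{t-1-s}\Bigr)^2
\le \frac{1}{1-\lambda}\sum_{t=0}^{k-1}\sum_{s=0}^{t-1}\lambda^{s} b_{t-1-s}^{2}
\le \frac{1}{(1-\lambda)^2}\sum_{\tau=0}^{k-2} b_{\tau}^{2}.
\end{align*}
Setting $b_t = \|\nabla f(\bar{\mathbf{x}}_t)\|$ controls $\sum_t G_t^2$ by $\sum_{\tau=0}^{k-2}\|\nabla f(\bar{\mathbf{x}}_\tau)\|^2/(1-\lambda)^2$, and setting $b_t = e_t$ controls $\sum_t E_t^2$ by $\sum_{\tau=0}^{k-2}e_\tau^2/(1-\lambda)^2$. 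Combining these three estimates with the coefficients $c_0,c_1,c_2$ (respectively $c_3,c_4,c_5$) yields exactly the two rows of \eqref{eq:bound_s4_matrix}, with the prefactor $2/(1-\lambda)^2$ absorbing the constant from the elementary $(a+b+c)^2$ inequality together with the $1/(1-\lambda^2)$ from the geometric sum.

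The main obstacle I anticipate is purely bookkeeping: carefully handling the index shift in the convolutions (note that the sum in $G_t$ goes up to $t-1$, so the gradient and error terms on the right-hand side of \eqref{eq:bound_s4_matrix} are truncated at $k-2$, not $k-1$), and choosing the constants so that the unified prefactor $2/(1-\lambda)^2$ dominates each of the three contributions simultaneously. No new analytic idea beyond Lemma~\ref{lem5} and Young's inequality is required; the lemma is essentially a bookkeeping corollary converting the linear convergence envelope of Lemma~\ref{lem5} into the squared-summability form needed to close the Lyapunov-type arguments in the subsequent dual convergence theorems.
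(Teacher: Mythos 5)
Your overall strategy is the same as the paper's: start from the pointwise envelope of Lemma~\ref{lem5}, square, and control the resulting convolution sums. Indeed, the paper's bound $\|\Phi_k\|\le 1/(1-\lambda)^2$ (obtained via Gershgorin on the Gram matrix $\Phi_k=\sum_t\boldsymbol{\phi}_t\boldsymbol{\phi}_t^{\top}$) is exactly the $\ell^1$--$\ell^2$ Young's convolution estimate you invoke, so the analytic content is identical.

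There is, however, a concrete constant-factor gap. Your plan splits the right-hand side into \emph{three} terms and applies $(a+b+c)^2\le 3(a^2+b^2+c^2)$, which produces a prefactor $3/(1-\lambda)^2$ on the convolution contributions; this cannot be ``absorbed'' into the stated $2/(1-\lambda)^2$, since your own bounds on $\sum_t G_t^2$ and $\sum_t E_t^2$ are already exactly $1/(1-\lambda)^2$ times the respective squared sums. The paper avoids this by grouping only \emph{two} blocks before squaring: it observes that $c_0\lambda^{t}+c_1\alpha G_t$ is itself a single discrete convolution, namely $\mathbf{v}_k^{\top}\boldsymbol{\phi}_t$ with $\mathbf{v}_k=[c_0,\,c_1\alpha\|\nabla f(\bar{\mathbf{x}}_0)\|,\ldots,c_1\alpha\|\nabla f(\bar{\mathbf{x}}_{k-2})\|]^{\top}$, so that $\sum_t(\mathbf{v}_k^{\top}\boldsymbol{\phi}_t)^2=\mathbf{v}_k^{\top}\Phi_k\mathbf{v}_k\le\|\Phi_k\|\,\|\mathbf{v}_k\|^2\le\bigl(c_0^2+c_1^2\alpha^2\sum_{\tau=0}^{k-2}\|\nabla f(\bar{\mathbf{x}}_\tau)\|^2\bigr)/(1-\lambda)^2$ in one shot; the remaining two-way split $(A+B)^2\le 2A^2+2B^2$ then yields the factor $2$. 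If you merge the geometric and gradient terms into one convolution in this way (the error term $c_2E_t=\boldsymbol{\epsilon}_k^{\top}\boldsymbol{\phi}_t$ being the second block), your argument goes through verbatim and recovers the stated constant; as written, it only proves the lemma with $3$ in place of $2$, which would propagate (harmlessly but visibly) into the constants of Theorems~\ref{thm1:ET-PPG_nonconvex} and~\ref{thm4:convrate}.
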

\begin{proof}
	See APPENDIX D.
\end{proof}

Building on the preceding assumptions and lemmas, now we establish the main convergence theorem for the dual problem, without requiring convexity. To facilitate the presentation, positive constants $b_i,~i=1,\ldots,4$ related to network structure and function smoothness are defined as follows:
\begin{align*}
	b_1 &= L \sqrt{n} \pi_{\mathbf{C}}^{\top} \pi_{\mathbf{R}} \delta_{2,R}, &
	b_2 &= \delta_{2,C}, \\
	b_3 &= 3L \delta_{2,C}^2, &
	b_4 &= 3L^3 n (\pi_{\mathbf{C}}^{\top} \pi_{\mathbf{R}})^2 \delta_{2,R}^2
\end{align*}

\begin{theorem}[Convergence of ET-PPG without Convexity]\label{thm1:ET-PPG_nonconvex}
	Under Assumptions~\ref{ass1:primal_cvx}--\ref{ass4:trigger} and the conditions in Lemmas~\ref{lem1:perron}--\ref{lem7}, when the step size $\alpha$ satisfies~\eqref{eq:a_bd_1} and
	\begin{align}
		\begin{aligned}\label{eq:a_bd_3}
			\alpha <& \left[ \frac{3L (\pi_{\mathbf{C}}^{\top} \pi_{\mathbf{R}})^2 }{2} + \frac{ c_1 b_1 + c_4b_2+c_0 b_1 + c_3b_2}{1-\lambda} \right. \\
			& \left. + \frac{c_1^2b_3 + c_4^2b_4}{(1-\lambda)^2} + \frac{(c_2 b_1+c_5b_2)}{2}\right]^{-1}\pi_{\mathbf{C}}^{\top} \pi_{\mathbf{R}},		
		\end{aligned}	
	\end{align}
	the following convergence results hold:
	\begin{enumerate}
		\item $\left \|\nabla f (\bar{\mathbf{x}}_k)\right \|^2$ converges to $0$ at a rate of $\mathcal{O}(1/k) $, i.e.,
		\begin{align}\label{eq:bound_s5}
			\begin{aligned}
				&\frac{1}{k} \sum_{t=0}^{k-1}\left \|\nabla f (\bar{\mathbf{x}}_t)\right \|^2 \\
				\le & \frac{f(\bar{\mathbf{x}}_{0})-f^*}{\gamma k} + \frac{(c_2 b_1+c_5b_2)e_0S_e}{2(1-\lambda)^2\gamma k} \\
				& + \alpha  \frac{c_0 b_1 + c_3b_2}{(1-\lambda)\gamma k}\left(1+\sum_{t=0}^{k_0} \left \| \nabla f (\bar{\mathbf{x}}_t) \right \|^2\right) \\
				& + \alpha^2 \frac{c_0^2b_3 + c_3^2b_4+ (c_2^2b_3+c_5^2b_4) e_0 S_e}{(1-\lambda)^2\gamma k}.
			\end{aligned}		
		\end{align}
		\item $\left \| \mathbf{X}_{k} - \mathbf{1} \bar{\mathbf{x}}_{k}^{\top} \right \|_{\mathbf{R}}^2$ converges to $0$ at a rate of $\mathcal{O}(1/k) $, i.e.,
		\begin{align}\label{eq:bound_s6}
			\begin{aligned}
				&\frac{1}{k}\sum_{t=0}^{k-1}\left \| \mathbf{X}_{t} - \mathbf{1} \bar{\mathbf{x}}_{t}^{\top} \right \|_{\mathbf{R}} ^2\\
				\le &\frac{2}{(1-\lambda)^2k}\left[c_0^2 + c_1^2\alpha^2 \sum_{t=0}^{k-2}\left \| \nabla f(\bar{\mathbf{x}}_t) \right \|^2 + c_2^2 \sum_{t=0}^{k-2} e_t^2 \right].
			\end{aligned}
		\end{align}
	\end{enumerate}
\end{theorem}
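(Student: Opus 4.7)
The plan is to run a descent analysis at the $\pi_{\mathbf{R}}$-weighted network average $\bar{\mathbf{x}}_k$, telescope over $t=0,\ldots,k-1$, and then invoke Lemmas~\ref{lem5}--\ref{lem7} to trade the accumulated consensus and tracking-error sums for a controllable fraction of $\sum_{t}\|\nabla f(\bar{\mathbf{x}}_t)\|^2$, which is the quantity we want to drive to zero. Left-multiplying the $\mathbf{X}$-update \eqref{eq:x} by $\pi_{\mathbf{R}}^{\top}$ kills both the consensus step and the triggering error $\boldsymbol{\zeta}_k$ (since $\pi_{\mathbf{R}}^{\top}(\mathbf{R}-\mathbf{I})=\mathbf{0}$), leaving the clean recursion $\bar{\mathbf{x}}_{k+1}=\bar{\mathbf{x}}_k-\alpha\bar{\mathbf{y}}_k$ with $\bar{\mathbf{y}}_k=\mathbf{Y}_k^{\top}\pi_{\mathbf{R}}$. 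Fenchel duality (from $1/L$-strong convexity of each $F_i$) makes every $\nabla f_i$ $L$-Lipschitz, so $f$ is smooth and I apply the standard quadratic upper bound $f(\bar{\mathbf{x}}_{k+1})\le f(\bar{\mathbf{x}}_k)-\alpha\nabla f(\bar{\mathbf{x}}_k)^{\top}\bar{\mathbf{y}}_k+\tfrac{L_f\alpha^2}{2}\|\bar{\mathbf{y}}_k\|^2$.

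Next I would decompose $\bar{\mathbf{y}}_k=(\pi_{\mathbf{R}}^{\top}\pi_{\mathbf{C}})\hat{\mathbf{y}}_k+\pi_{\mathbf{R}}^{\top}(\mathbf{Y}_k-\pi_{\mathbf{C}}\hat{\mathbf{y}}_k^{\top})$ and replace $\hat{\mathbf{y}}_k=\sum_i\nabla f_i(\mathbf{x}_{i,k})$ by $\nabla f(\bar{\mathbf{x}}_k)$ up to an error bounded by $L\sqrt{n}\,\|\mathbf{X}_k-\mathbf{1}\bar{\mathbf{x}}_k^{\top}\|$ via per-agent Lipschitz smoothness and Cauchy--Schwarz. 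Feeding this back into the descent inequality, and controlling $\|\bar{\mathbf{y}}_k\|^2$ through $\|a+b+c\|^2\le 3(\|a\|^2+\|b\|^2+\|c\|^2)$, yields
\[
f(\bar{\mathbf{x}}_{k+1})-f(\bar{\mathbf{x}}_k)\le -\alpha(\pi_{\mathbf{R}}^{\top}\pi_{\mathbf{C}})\|\nabla f(\bar{\mathbf{x}}_k)\|^2+R_k,
\]
where the residual $R_k$ is a linear combination of the cross terms $\|\nabla f(\bar{\mathbf{x}}_k)\|\cdot\|\mathbf{X}_k-\mathbf{1}\bar{\mathbf{x}}_k^{\top}\|_{\mathbf{R}}$, $\|\nabla f(\bar{\mathbf{x}}_k)\|\cdot\|\mathbf{Y}_k-\pi_{\mathbf{C}}\hat{\mathbf{y}}_k^{\top}\|_{\mathbf{C}}$ and the two squared error norms. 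The constants $b_1,\ldots,b_4$ in the theorem statement are exactly what emerges after collecting $\pi_{\mathbf{R}}$, $\pi_{\mathbf{C}}$, $L$, $n$, and the norm-equivalence constants $\delta_{2,R},\delta_{2,C}$ at this stage.

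Summing the descent inequality from $t=0$ to $k-1$, lower-bounding $f(\bar{\mathbf{x}}_k)\ge f^*$, and substituting the cross-term sums into Lemma~\ref{lem6} and the squared sums into Lemma~\ref{lem7}, I arrive at
\[
\alpha(\pi_{\mathbf{R}}^{\top}\pi_{\mathbf{C}})\sum_{t=0}^{k-1}\|\nabla f(\bar{\mathbf{x}}_t)\|^2\le f(\bar{\mathbf{x}}_0)-f^*+\alpha^2 A\sum_{t=0}^{k-1}\|\nabla f(\bar{\mathbf{x}}_t)\|^2+B,
\]
where $A$ is precisely the bracket inside \eqref{eq:a_bd_3} and $B$ collects the step-size-free residual coming from Lemma~\ref{lem5}'s $e_0 S_e$ and the initial $\|\nabla f_i(\mathbf{X}_0)\|^2$ terms. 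Condition \eqref{eq:a_bd_3} is designed so that $\gamma:=\pi_{\mathbf{R}}^{\top}\pi_{\mathbf{C}}-\alpha A>0$; the $\sum\|\nabla f\|^2$ on the right can then be absorbed into the left, and dividing by $\gamma k$ yields \eqref{eq:bound_s5}. The consensus bound \eqref{eq:bound_s6} then follows directly from the first row of Lemma~\ref{lem7}, which already holds pointwise in $k$.

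The hard part is the absorption bookkeeping: the six constants $c_0,\ldots,c_5$ and the four constants $b_1,\ldots,b_4$ enter the residual sums with different powers of $\alpha$ and $(1-\lambda)^{-1}$, and one has to track carefully which of these regroup into $A$ (and thus constrain the step size) versus which drop into the $\alpha$-independent part $B$. A secondary subtlety is the $\sum_{t=0}^{k_0}\|\nabla f(\bar{\mathbf{x}}_t)\|^2$ correction visible in \eqref{eq:bound_s5}: Lemma~\ref{lem6}'s $\lambda$-geometric convolution does not telescope cleanly for the first few indices, so those initial terms must be separated out and folded into the numerator rather than being absorbed into $\gamma$.
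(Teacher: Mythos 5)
Your proposal matches the paper's proof essentially step for step: the descent lemma at $\bar{\mathbf{x}}_k$ via $L$-smoothness of $f$, the three-way decomposition of $\bar{\mathbf{y}}_k$ into the tracking error, the consensus-induced gradient error, and $\nabla f(\bar{\mathbf{x}}_k)\pi_{\mathbf{C}}^{\top}\pi_{\mathbf{R}}$, the summation and substitution of Lemmas~\ref{lem6}--\ref{lem7}, the absorption of $\sum_t\|\nabla f(\bar{\mathbf{x}}_t)\|^2$ under condition~\eqref{eq:a_bd_3}, and the $k_0$ split of the $\lambda^t$-weighted sum (the one bookkeeping detail is that the paper's $\gamma$ carries an extra factor of $\alpha$, i.e.\ $\gamma=\alpha\pi_{\mathbf{C}}^{\top}\pi_{\mathbf{R}}-\alpha^2 A$, which is immaterial). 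The approach is correct and identical to the paper's.
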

\begin{proof}
	See APPENDIX E.	
\end{proof}

Theorem~\ref{thm1:ET-PPG_nonconvex} shows that in ET-PPG algorithm, $f(\bar{\mathbf{x}}_k)$ converges to a stationary point at a rate of $\mathcal{O}(1/k)$ without convexity. Then, we specialize the convergence of dual objective $f$ under the Polyak-Łojasiewicz (P-Ł) condition. The main theorem is provided after the P-Ł assumption and two supporting lemmas.

\begin{assumption}\label{ass5:PL}
	The global dual objective function $f$ adheres to the P-Ł condition with a positive constant $\beta$, i.e., $\forall \mathbf{x} \in \mathbb{R}^m$,
	\begin{align}
		\left \| \nabla f(\mathbf{x}) \right\|^2 \ge 2 \beta (f(\mathbf{x})-f^*).
	\end{align}
\end{assumption}

Similar to Lemma 4, a set of useful linear inequalities is first introduced, characterized by the following positive constants $d_i, i=1,\ldots,13$:
	\begin{align*}
	d_1  &= 4nL^2,                     &d_2  &= 2 \frac{3-\sigma_{\mathbf{R}}^2}{1-\sigma_{\mathbf{R}}^2} \delta_{\mathbf{R},\mathbf{C}}^2, \\
	d_3  &= 8nL, & d_4  &= 32\frac{1+\sigma_{\mathbf{C}}^2}{1-\sigma_{\mathbf{C}}^2}  L^2 \delta_{\mathbf{C},\mathbf{R}}^2, \\
	d_5  &= 16\frac{1+\sigma_{\mathbf{C}}^2}{1-\sigma_{\mathbf{C}}^2}nL^4 \delta_{\mathbf{C},\mathbf{R}}^2, 
	& d_6  &= 8 \frac{1+\sigma_{\mathbf{C}}^2}{1-\sigma_{\mathbf{C}}^2}L^2, \\
	d_7  &= 32 \frac{1+\sigma_{\mathbf{C}}^2}{1-\sigma_{\mathbf{C}}^2} n L^3, 
	& d_8  &= \frac{1}{2}L^2 n (\pi_{\mathbf{C}}^{\top} \pi_{\mathbf{R}})^2 \delta_{2,R}^2, \\
	d_9  &= \frac{3}{2}L^3 n (\pi_{\mathbf{C}}^{\top} \pi_{\mathbf{R}})^2 \delta_{2,R}^2, 
	& d_{10} &= \frac{1}{2}\delta_{2,C}^2, \\
	d_{11} &= \frac{3L}{2}\delta_{2,C}^2,            & d_{12} &= 2\beta\pi_{\mathbf{C}}^{\top} \pi_{\mathbf{R}}, \\
	d_{13} &= 2L+3(\pi_{\mathbf{C}}^{\top} \pi_{\mathbf{R}})^2 L^2.
\end{align*}

\begin{lemma}\label{lem8:ineq_sys2}
	Under Assumptions~\ref{ass1:primal_cvx}--\ref{ass5:PL}, the following linear system of inequalities w.r.t. $\left \|  \mathbf{X}_{k+1} - \mathbf{1} \bar{\mathbf{x}}_{k+1}^{\top} \right \|_{\mathbf{R}}^2$, $\left \| \mathbf{Y}_{k+1}-\pi_{\mathbf{C}}\hat{\mathbf{y}}_{k+1}^{\top} \right \|_{\mathbf{C}}^2$, and $f(\bar{\mathbf{x}}_{k+1}) - f^*$ holds:
	\begin{align}\label{eq:inqeasys_PL}
		\begin{bmatrix}
			\left \|  \mathbf{X}_{k+1} - \mathbf{1} \bar{\mathbf{x}}_{k+1}^{\top} \right \|_{\mathbf{R}}^2 \\ \left \| \mathbf{Y}_{k+1}-\pi_{\mathbf{C}}\hat{\mathbf{y}}_{k+1}^{\top} \right \|_{\mathbf{C}}^2 \\ f(\bar{\mathbf{x}}_{k+1}) - f^*
		\end{bmatrix}
		\preceq \mathbf{P}	\begin{bmatrix}
			\left \|  \mathbf{X}_{k} - \mathbf{1} \bar{\mathbf{x}}_{k}^{\top} \right \|_{\mathbf{R}}^2 \\ \left \| \mathbf{Y}_{k}-\pi_{\mathbf{C}}\hat{\mathbf{y}}_{k}^{\top} \right \|_{\mathbf{C}}^2 \\ f(\bar{\mathbf{x}}_{k}) - f^*
		\end{bmatrix}  + \mathbf{Q}e_k^2,
	\end{align}
	where matrix $\mathbf{P}$ and vector $\mathbf{Q}$ are given by
	\begin{align*}
		\mathbf{P}=&\begin{bmatrix}
			\frac{3\sigma_{\mathbf{R}}^2-\sigma_{\mathbf{R}}^4}{1+\sigma_{\mathbf{R}}^2} +d_1\alpha^2 & d_2 \alpha  ^2 & d_3 \alpha^2 \\
			d_4 + d_5\alpha^2 & \frac{1+\sigma_{\mathbf{C}}^2}{2} +d_6\alpha^2 & d_7\alpha^2\\
			d_8 + d_9\alpha^2 & d_{10 }+ d_{11}\alpha^2 & 1-d_{12}\alpha+d_{13}\alpha^2
		\end{bmatrix}\\
		\mathbf{Q} =& \begin{bmatrix}
			\frac{3+\sigma_{\mathbf{R}}^2}{1+\sigma_{\mathbf{R}}^2} (1+\sigma_{\mathbf{R}})^2n\\ 
			2 \frac{1+\sigma_{\mathbf{C}}^2}{1-\sigma_{\mathbf{C}}^2} \left[ (1+ \sigma_{\mathbf{C}})^2 + 4 (1+\sigma_{\mathbf{R}})^2\delta_{\mathbf{C},\mathbf{R}}^2 L^2  \right]n \\ 
			0
		\end{bmatrix}.	
	\end{align*}
\end{lemma}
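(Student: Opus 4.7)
The plan is to derive three recursive inequalities---one for each entry of the left-hand side of \eqref{eq:inqeasys_PL}---by unrolling \eqref{eq:x}--\eqref{eq:y} around the weighted averages $\bar{\mathbf{x}}_k$ and $\hat{\mathbf{y}}_k$, applying Lemma~\ref{lem2:norm_bd} to obtain contraction factors in $\sigma_{\mathbf{R}}^2$ and $\sigma_{\mathbf{C}}^2$, and splitting the residuals with Young's inequality at carefully tuned parameters. For the consensus error I would first left-multiply \eqref{eq:x} by $\pi_{\mathbf{R}}^\top$ and exploit $\pi_{\mathbf{R}}^\top(\mathbf{R}-\mathbf{I})=\mathbf{0}$ to obtain $\bar{\mathbf{x}}_{k+1}^\top=\bar{\mathbf{x}}_k^\top-\alpha\pi_{\mathbf{R}}^\top\mathbf{Y}_k$, from which
\begin{align*}
\mathbf{X}_{k+1}-\mathbf{1}\bar{\mathbf{x}}_{k+1}^\top = (\mathbf{R}-\mathbf{1}\pi_{\mathbf{R}}^\top)(\mathbf{X}_k-\mathbf{1}\bar{\mathbf{x}}_k^\top) - \alpha(\mathbf{I}-\mathbf{1}\pi_{\mathbf{R}}^\top)\mathbf{Y}_k + (\mathbf{R}-\mathbf{I})\boldsymbol{\zeta}_k.
\end{align*}
Taking $\|\cdot\|_{\mathbf{R}}^2$, nesting two Young-style splits whose parameters are chosen to produce the contraction factor $\tfrac{3\sigma_{\mathbf{R}}^2-\sigma_{\mathbf{R}}^4}{1+\sigma_{\mathbf{R}}^2}$, further decomposing $\mathbf{Y}_k=\pi_{\mathbf{C}}\hat{\mathbf{y}}_k^\top+(\mathbf{Y}_k-\pi_{\mathbf{C}}\hat{\mathbf{y}}_k^\top)$, and swapping $\hat{\mathbf{y}}_k$ for $\nabla f(\bar{\mathbf{x}}_k)$ via $L$-Lipschitz smoothness of each $f_i$, recovers the first row of $\mathbf{P}$ once the global-gradient term is bounded by $\|\nabla f(\bar{\mathbf{x}}_k)\|^2\le 2nL(f(\bar{\mathbf{x}}_k)-f^*)$; the noise term $(\mathbf{R}-\mathbf{I})\boldsymbol{\zeta}_k$, combined with the threshold bound $\|\boldsymbol{\zeta}_k\|\le\sqrt{n}\,e_k$, yields the first entry of $\mathbf{Q}$.

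An analogous derivation handles the tracking error. Left-multiplying \eqref{eq:y} by $\mathbf{1}^\top$ and using $\mathbf{1}^\top(\mathbf{C}-\mathbf{I})=\mathbf{0}$ leads, after subtracting $\pi_{\mathbf{C}}\hat{\mathbf{y}}_{k+1}^\top$ and using $\mathbf{C}\pi_{\mathbf{C}}=\pi_{\mathbf{C}}$, to
\begin{align*}
\mathbf{Y}_{k+1}-\pi_{\mathbf{C}}\hat{\mathbf{y}}_{k+1}^\top = (\mathbf{C}-\pi_{\mathbf{C}}\mathbf{1}^\top)(\mathbf{Y}_k-\pi_{\mathbf{C}}\hat{\mathbf{y}}_k^\top) + (\mathbf{C}-\mathbf{I})\boldsymbol{\xi}_k + (\mathbf{I}-\pi_{\mathbf{C}}\mathbf{1}^\top)(\nabla\mathbf{f}(\mathbf{X}_{k+1})-\nabla\mathbf{f}(\mathbf{X}_k)).
\end{align*}
A single Young's inequality with parameter $\epsilon=(1-\sigma_{\mathbf{C}}^2)/(2\sigma_{\mathbf{C}}^2)$ yields the $\tfrac{1+\sigma_{\mathbf{C}}^2}{2}$ contraction, while the gradient-difference is bounded via $\|\nabla\mathbf{f}(\mathbf{X}_{k+1})-\nabla\mathbf{f}(\mathbf{X}_k)\|\le L\|\mathbf{X}_{k+1}-\mathbf{X}_k\|$ and then expanded through \eqref{eq:x}. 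Splitting $\mathbf{Y}_k$ as above, invoking Lemma~\ref{lem3:matrix_norm} to translate $\|\cdot\|_{\mathbf{R}}$ into $\|\cdot\|_{\mathbf{C}}$ (sourcing the $\delta_{\mathbf{C},\mathbf{R}}^2$ factors in $d_4,d_5$), and again converting the global-gradient residual into $2nL(f(\bar{\mathbf{x}}_k)-f^*)$ populates the second row of $\mathbf{P}$ and the second entry of $\mathbf{Q}$.

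For the objective suboptimality, the $nL$-smooth descent lemma applied to $f$ along $\bar{\mathbf{x}}_{k+1}-\bar{\mathbf{x}}_k=-\alpha\pi_{\mathbf{R}}^\top\mathbf{Y}_k$ gives
\begin{align*}
f(\bar{\mathbf{x}}_{k+1})\le f(\bar{\mathbf{x}}_k)-\alpha\nabla f(\bar{\mathbf{x}}_k)^\top\pi_{\mathbf{R}}^\top\mathbf{Y}_k+\tfrac{nL\alpha^2}{2}\|\pi_{\mathbf{R}}^\top\mathbf{Y}_k\|^2.
\end{align*}
Decomposing $\pi_{\mathbf{R}}^\top\mathbf{Y}_k=(\pi_{\mathbf{R}}^\top\pi_{\mathbf{C}})\hat{\mathbf{y}}_k^\top+\pi_{\mathbf{R}}^\top(\mathbf{Y}_k-\pi_{\mathbf{C}}\hat{\mathbf{y}}_k^\top)$, replacing $\hat{\mathbf{y}}_k$ by $\nabla f(\bar{\mathbf{x}}_k)$ at the cost of a Lipschitz term in $\|\mathbf{X}_k-\mathbf{1}\bar{\mathbf{x}}_k^\top\|$, and Young-splitting the resulting cross terms, isolates a dominant descent $-\alpha\pi_{\mathbf{C}}^\top\pi_{\mathbf{R}}\|\nabla f(\bar{\mathbf{x}}_k)\|^2$. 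Assumption~\ref{ass5:PL} converts this into $-d_{12}\alpha(f(\bar{\mathbf{x}}_k)-f^*)$, while the remaining $\alpha^2\|\nabla f(\bar{\mathbf{x}}_k)\|^2$ residuals from the quadratic descent contribution and the gradient-swap Young's terms are bounded by $2nL(f(\bar{\mathbf{x}}_k)-f^*)$ through smoothness to form $d_{13}\alpha^2$; the consensus and tracking norms supply the entries $d_8+d_9\alpha^2$ and $d_{10}+d_{11}\alpha^2$. Because $\bar{\mathbf{x}}_{k+1}-\bar{\mathbf{x}}_k$ depends on neither $\boldsymbol{\zeta}_k$ nor $\boldsymbol{\xi}_k$, the third entry of $\mathbf{Q}$ vanishes.

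The main obstacle is the bookkeeping: matching the precise contraction $\tfrac{3\sigma_{\mathbf{R}}^2-\sigma_{\mathbf{R}}^4}{1+\sigma_{\mathbf{R}}^2}$ requires nested Young's inequalities with non-obvious parameters, and every recursion produces cross terms that must be re-expressed in exactly the three tracked quantities so that all Lipschitz, dimension, and network constants ($L$, $n$, $\pi_{\mathbf{C}}^\top\pi_{\mathbf{R}}$, $\delta_{\mathbf{R},\mathbf{C}}$, $\delta_{\mathbf{C},\mathbf{R}}$, $\sigma_{\mathbf{R}}$, $\sigma_{\mathbf{C}}$) aggregate cleanly into the declared $d_1$--$d_{13}$. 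The $\sqrt{n}$ in each $\mathbf{Q}$-entry comes from $\|\boldsymbol{\zeta}_k\|,\|\boldsymbol{\xi}_k\|\le\sqrt{n}\,e_k$, and Lemma~\ref{lem3:matrix_norm} is the routine mechanism that supplies the $\delta$-factors whenever a $\|\cdot\|_{\mathbf{R}}$ and $\|\cdot\|_{\mathbf{C}}$ must interact.
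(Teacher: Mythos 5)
Your proposal follows essentially the same route as the paper's proof: the same three error recursions (the paper reuses its Lemma~\ref{lem4:ineq_sys1} decompositions of $\mathbf{X}_{k+1}-\mathbf{1}\bar{\mathbf{x}}_{k+1}^{\top}$ and $\mathbf{Y}_{k+1}-\pi_{\mathbf{C}}\hat{\mathbf{y}}_{k+1}^{\top}$ and the descent inequality \eqref{eq:f_update}), the same Young/Cauchy--Schwarz splits to obtain the $\tfrac{3\sigma_{\mathbf{R}}^2-\sigma_{\mathbf{R}}^4}{1+\sigma_{\mathbf{R}}^2}$ and $\tfrac{1+\sigma_{\mathbf{C}}^2}{2}$ contractions, the same conversion $\|\nabla f(\bar{\mathbf{x}}_k)\|^2\le 2nL\,(f(\bar{\mathbf{x}}_k)-f^*)$ to route gradient residuals into the third coordinate, and the P-Ł condition to produce the $-d_{12}\alpha$ descent term, with the third entry of $\mathbf{Q}$ vanishing for the reason you give. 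The approach is correct and matches the paper.
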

\begin{proof}
	See APPENDIX F.
\end{proof}

\begin{lemma}[Lemma 5 in~\cite{pu2021distributed}]\label{lem9}
	Given a non-negative, irreducible matrix $\mathbf{M}= [m_{ij}] \in \mathbb{R}^{3 \times 3}$ with $m_{11}, m_{22}, m_{33} < \lambda^*$ for some $\lambda^* >0$ A necessary and sufficient condition for $\rho(\mathbf{M})<\lambda^*$ is $\det{(\lambda^*\mathbf{I}-\mathbf{M})}>0$.
\end{lemma}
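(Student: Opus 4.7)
The plan is to analyze $\mathbf{A}\triangleq \lambda^*\mathbf{I}-\mathbf{M}$, which under the hypotheses is an irreducible Z-matrix (off-diagonal entries $-m_{ij}\le 0$) with strictly positive diagonal $\lambda^*-m_{ii}>0$. The claim $\rho(\mathbf{M})<\lambda^*$ is equivalent to $\mathbf{A}$ being a non-singular M-matrix, so the lemma effectively asserts that within this restricted class the single scalar condition $\det(\mathbf{A})>0$ already characterizes the M-matrix property (as opposed to the generic Z-matrix criterion requiring all leading principal minors positive). I would invoke the Perron--Frobenius theorem (applicable since $\mathbf{M}\ge 0$ is irreducible) to obtain that $\rho(\mathbf{M})$ is a simple real eigenvalue with a strictly positive eigenvector $\mathbf{v}$, and then treat the two directions of the equivalence separately.

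For the necessary direction ($\rho(\mathbf{M})<\lambda^*\Rightarrow\det(\mathbf{A})>0$), I would factor the characteristic polynomial at $\lambda^*$ as
\begin{align*}
\det(\lambda^*\mathbf{I}-\mathbf{M})=\prod_{i=1}^{3}(\lambda^*-\mu_i),
\end{align*}
where $\mu_i$ are the (complex) eigenvalues of $\mathbf{M}$. A pair of complex conjugates contributes $|\lambda^*-\mu|^2>0$, and each real $\mu_i$ satisfies $|\mu_i|\le\rho(\mathbf{M})<\lambda^*$, so every real factor $\lambda^*-\mu_i$ is positive and the product is positive. For the sufficient direction I would argue by contradiction. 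If $\rho(\mathbf{M})=\lambda^*$, then $\det(\mathbf{A})=0$, contradicting the hypothesis; if $\rho(\mathbf{M})>\lambda^*$ and $\mathbf{M}$ has only one real eigenvalue (plus two complex conjugates), then $\det(\mathbf{A})=(\lambda^*-\rho)\,|\lambda^*-\mu|^2<0$, again a contradiction. The only remaining case is when $\mathbf{M}$ has three real eigenvalues $\mu_3\le\mu_2\le\mu_1=\rho(\mathbf{M})$, where $\det(\mathbf{A})>0$ is consistent with $\lambda^*<\mu_1$ through the ``pathological window'' $\lambda^*\in(\mu_3,\mu_2)$ on which an odd number of factors is negative.

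The main obstacle is excluding this pathological window using the diagonal hypothesis $\lambda^*>m_{ii}$. My plan is to couple the trace identity $\mu_1+\mu_2+\mu_3=\sum_i m_{ii}<3\lambda^*$ with irreducibility and the positivity of the Perron eigenvector, exploiting $\mathbf{A}\mathbf{v}=(\lambda^*-\rho(\mathbf{M}))\mathbf{v}<\mathbf{0}$ in combination with the Z-matrix sign structure of $\mathbf{A}$ to force a contradiction (intuitively, $\mathbf{A}$ would have to invert $\mathbf{v}>\mathbf{0}$ into a non-positive vector through a non-negative inverse, which is impossible). The trace bound alone does not immediately rule out $\mu_2>\lambda^*$, so the closing argument must genuinely use irreducibility of $\mathbf{M}$ rather than only non-negativity. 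Since this statement is imported verbatim as Lemma~5 of \cite{pu2021distributed}, I would defer the full resolution of this delicate case to that reference and use the conclusion as a black box when verifying the spectral-radius condition $\rho(\mathbf{P})<1$ for the linear system in Lemma~\ref{lem8:ineq_sys2}.
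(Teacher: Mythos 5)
First, note that the paper does not actually prove this lemma: it is imported verbatim as Lemma~5 of \cite{pu2021distributed} and used as a black box, so your final decision to defer to that reference is consistent with the paper's own treatment. Your necessity argument (pairing complex conjugate factors and bounding real eigenvalues by $\rho(\mathbf{M})$) is correct, and you have correctly isolated the only genuine obstruction in the sufficiency direction: the case of three real eigenvalues $\mu_3\le\mu_2\le\mu_1=\rho(\mathbf{M})$ with $\lambda^*\in(\mu_3,\mu_2)$, where two factors of $\det(\lambda^*\mathbf{I}-\mathbf{M})=\prod_i(\lambda^*-\mu_i)$ are negative and the determinant is positive despite $\rho(\mathbf{M})>\lambda^*$. (Minor slip: this window has an \emph{even}, not odd, number of negative factors --- that is precisely why the product is positive there.)

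The closing argument you sketch for this case, however, would not work. Writing $\mathbf{A}=\lambda^*\mathbf{I}-\mathbf{M}$ and using $\mathbf{A}\mathbf{v}=(\lambda^*-\rho(\mathbf{M}))\mathbf{v}\prec\mathbf{0}$ together with ``a non-negative inverse'' is circular: $\mathbf{A}^{-1}\succeq 0$ is one of the standard characterizations of $\mathbf{A}$ being a nonsingular M-matrix, i.e.\ it is \emph{equivalent} to the conclusion $\rho(\mathbf{M})<\lambda^*$, so you cannot invoke it inside the contradiction branch where $\rho(\mathbf{M})>\lambda^*$; in that branch $\mathbf{A}\mathbf{v}\prec\mathbf{0}$ with $\mathbf{v}\succ\mathbf{0}$ is perfectly consistent and yields no contradiction. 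The trace bound likewise does not exclude $\mu_2>\lambda^*$. The clean elementary fix (and, up to presentation, the argument in the cited reference) is to expand the characteristic polynomial along the diagonal: for $\lambda>\max_i m_{ii}$,
\begin{align*}
\det(\lambda\mathbf{I}-\mathbf{M})=\Bigl[\textstyle\prod_{i}(\lambda-m_{ii})\Bigr]\,h(\lambda),\quad
h(\lambda)=1-\frac{m_{12}m_{21}}{(\lambda-m_{11})(\lambda-m_{22})}-\frac{m_{13}m_{31}}{(\lambda-m_{11})(\lambda-m_{33})}-\frac{m_{23}m_{32}}{(\lambda-m_{22})(\lambda-m_{33})}-\frac{m_{12}m_{23}m_{31}+m_{13}m_{21}m_{32}}{(\lambda-m_{11})(\lambda-m_{22})(\lambda-m_{33})}.
\end{align*}
Every subtracted term is nonnegative and non-increasing on $(\max_i m_{ii},\infty)$, so $h$ is non-decreasing there while the prefactor is positive. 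Hence $\det(\lambda^*\mathbf{I}-\mathbf{M})>0$ implies $h(\lambda^*)>0$, which implies $\det(\lambda\mathbf{I}-\mathbf{M})>0$ for all $\lambda\ge\lambda^*$; since $\rho(\mathbf{M})$ is a real eigenvalue by Perron--Frobenius, it cannot lie in $[\lambda^*,\infty)$, closing exactly the window you could not exclude. If you intend to use the lemma only as a citation, as the paper does, your write-up is acceptable, but the heuristic paragraph about the non-negative inverse should be removed or replaced by this monotonicity argument.
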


Building on the linear inequality system and the bounded spectral radius condition, we now establish linear convergence for the dual problem under the P-Ł condition. Denote $\boldsymbol{\nu} = [\nu_1, \nu_2, \nu_3]^{\top}$ as an eigenvector of $\mathbf{P}$ associated with the spectral radius $\lambda = \rho(\mathbf{P})$. The positive coefficients $h_i,~i=1,\ldots,8$ are defined as follows for subsequent use in the convergence analysis:
\begin{align*}
	h_1 = & \frac{(1-\sigma_{\mathbf{R}}^2)^2(1-\sigma_{\mathbf{C}}^2)}{16 (1+\sigma_{\mathbf{R}}^2)}d_{12},\\
	h_2 = & \frac{(1-\sigma_{\mathbf{R}}^2)^2(1-\sigma_{\mathbf{C}}^2)}{16(1+\sigma_{\mathbf{R}}^2)}d_{13} + d_3d_4d_{10} + \frac{1-\sigma_{\mathbf{C}}^2}{2}d_3d_8 \\
	&+ \frac{(1-\sigma_{\mathbf{R}}^2)^2}{1+\sigma_{\mathbf{R}}^2}d_7d_{10},\\
	h_3 = & d_2d_4d_{12},\\
	h_4 = & d_2d_7d_8 + d_3d_5d_{10} + d_3d_4d_{11} + \frac{1-\sigma_{\mathbf{C}}^2}{2}d_3d_9 \\
	&+ \frac{(1-\sigma_{\mathbf{R}}^2)^2}{1+\sigma_{\mathbf{R}}^2}d_7d_{11} + d_2d_5d_{12},\\
	h_5 = & d_2d_5d_{13},\\
	h_6 = & d_2d_7d_9 + d_3d_5d_{11},\\
	h_7 = & \left[\frac{3+\sigma_{\mathbf{R}}^2}{1+\sigma_{\mathbf{R}}^2} (1+\sigma_{\mathbf{R}})^2+2 \frac{1+\sigma_{\mathbf{C}}^2}{1-\sigma_{\mathbf{C}}^2} \right. \\
	&\left.\left( (1+ \sigma_{\mathbf{C}})^2 + 4 (1+\sigma_{\mathbf{R}})^2\delta_{\mathbf{C},\mathbf{R}}^2 L^2  \right)\right]n,\\
	h_8 =& \sqrt{3}\frac{\max_{1\le i\le 3}\nu_i}{\min_{1\le i\le 3}\nu_i}.
\end{align*}

\begin{theorem}[Convergence of ET-PPG with P-Ł condition]\label{thm2:ET-PPG_PL}
	Under Assumptions~\ref{ass1:primal_cvx}--\ref{ass5:PL}, if the step size $\alpha$ satisfies
	\begin{align}\label{eq:PL_a_bd}
		\begin{aligned}
			\alpha < \min &\left \{ \frac{1}{d_{12}}, \frac{1-\sigma_{\mathbf{R}}^2}{4\sqrt{n}L\sqrt{1+\sigma_{\mathbf{R}}^2}},  \frac{1-\sigma_{\mathbf{C}}^2}{4\sqrt{2}L\sqrt{1+\sigma_{\mathbf{C}}^2}},\right. \\
			&\left. \frac{2\beta\pi_{\mathbf{C}}^{\top} \pi_{\mathbf{R}}}{2L + 3L^2(\pi_{\mathbf{C}}^{\top} \pi_{\mathbf{R}})^2},\right. \\
			&\left. \frac{-h_2 + \sqrt{h_2^2 + 4h_1(h_3+h_4+h_5+h_6)}}{2(h_3+h_4+h_5+h_6)} \right \},
		\end{aligned}
	\end{align}	then $\lambda <1$. 
	When the event-triggering error $\{e_k\}_{k=0}^{\infty}$ satisfies $e_k \le Es^{k}$ with $s\in (\sqrt{\lambda}, 1)$ and $E \ge \frac{\left\| \mathbf{z}_0 \right \| (s^2-\lambda)}{h_7}$ where $\mathbf{z}_0 = [	\left \| \mathbf{X}_{0}-\mathbf{1}\bar{\mathbf{x}}_{0}^{\top} \right \|_{\mathbf{R}}^2, \left \| \mathbf{Y}_{0}-\pi_{\mathbf{C}}\hat{\mathbf{y}}_{0}^{\top} \right \|_{\mathbf{C}}^2, f(\bar{\mathbf{x}}_0) - f^*]^{\top}$, then both the consensus error $\left \| \mathbf{X}_{k}-\mathbf{1}\bar{\mathbf{x}}_{k}^{\top} \right \|_{\mathbf{R}}^2$ and the optimality gap $f(\bar{\mathbf{x}}_k)- f^*$ converges to zero linearly, i.e.,
	\begin{align}\label{eq:PL_linear_converge}
		\left \| \mathbf{X}_{k}-\mathbf{1}\bar{\mathbf{x}}_{k}^{\top} \right \|_{\mathbf{R}}^2 + f(\bar{\mathbf{x}}_k) - f^* \le \sqrt{2} \frac{h_7h_8E^2 s^{2k}}{s^2 - \lambda}.
	\end{align}
\end{theorem}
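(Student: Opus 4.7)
The plan is to establish the theorem in two stages: first prove $\lambda = \rho(\mathbf{P}) < 1$ under the prescribed step-size bounds, then iterate the linear inequality system from Lemma~\ref{lem8:ineq_sys2} to extract a geometric envelope at rate $s^{2k}$.

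For the first stage, I would apply Lemma~\ref{lem9} with $\lambda^* = 1$, which requires each diagonal entry of $\mathbf{P}$ to be strictly less than $1$ and $\det(\mathbf{I} - \mathbf{P}) > 0$. A direct check shows that $P_{11} < 1$ reduces to $\alpha^2 < (1-\sigma_{\mathbf{R}}^2)^2/[4nL^2(1+\sigma_{\mathbf{R}}^2)]$ (after simplifying the rational expression in $\sigma_{\mathbf{R}}$), and $P_{22} < 1$ reduces analogously to the cap involving $\sigma_{\mathbf{C}}$; the second and third bounds in~\eqref{eq:PL_a_bd} enforce these with slack. The fourth bound $\alpha < d_{12}/d_{13} = 2\beta\pi_{\mathbf{C}}^{\top}\pi_{\mathbf{R}}/[2L + 3L^2(\pi_{\mathbf{C}}^{\top}\pi_{\mathbf{R}})^2]$ together with the first bound $\alpha < 1/d_{12}$ forces $P_{33} = 1 - d_{12}\alpha + d_{13}\alpha^2 \in (0,1)$.

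The crux of the first stage is verifying $\det(\mathbf{I} - \mathbf{P}) > 0$. Expanding $\det(\mathbf{I}-\mathbf{P})$ using the explicit entries of $\mathbf{P}$ from Lemma~\ref{lem8:ineq_sys2} produces a polynomial in $\alpha$; after factoring out the common $\alpha$ that stems from $1 - P_{33} = d_{12}\alpha - d_{13}\alpha^2$ and grouping the cross-terms $P_{12}P_{23}P_{31}$, $P_{13}P_{21}P_{32}$, $P_{12}P_{21}(1-P_{33})$, $P_{13}P_{31}(1-P_{22})$, and $P_{23}P_{32}(1-P_{11})$, the positive contribution comes out as $h_1\alpha$ and the negative contributions as $h_2\alpha^2$ and $(h_3+h_4+h_5+h_6)\alpha^3$. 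The inequality $h_1 - h_2\alpha - (h_3+h_4+h_5+h_6)\alpha^2 > 0$ is quadratic in $\alpha$ and its positive root gives precisely the last bound in~\eqref{eq:PL_a_bd}. This tedious accounting of constants is the main obstacle of the proof.

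For the second stage, iterating Lemma~\ref{lem8:ineq_sys2} yields $\mathbf{z}_k \preceq \mathbf{P}^k \mathbf{z}_0 + \sum_{t=0}^{k-1}\mathbf{P}^{k-1-t}\mathbf{Q}\,e_t^2$, where $\mathbf{z}_k = [\|\mathbf{X}_k-\mathbf{1}\bar{\mathbf{x}}_k^{\top}\|_{\mathbf{R}}^2,\,\|\mathbf{Y}_k-\pi_{\mathbf{C}}\hat{\mathbf{y}}_k^{\top}\|_{\mathbf{C}}^2,\,f(\bar{\mathbf{x}}_k)-f^*]^{\top}$. Since $\mathbf{P}$ is non-negative with all off-diagonal entries positive (hence irreducible), Perron--Frobenius yields a strictly positive right eigenvector $\boldsymbol{\nu}$ with $\mathbf{P}\boldsymbol{\nu} = \lambda\boldsymbol{\nu}$. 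Writing $\mathbf{z}_0 \preceq (\|\mathbf{z}_0\|/\min_i\nu_i)\boldsymbol{\nu}$ gives $\mathbf{P}^k\mathbf{z}_0 \preceq (\max_i\nu_i/\min_i\nu_i)\lambda^k\|\mathbf{z}_0\|\,\mathbf{1}$; the condition $E \ge \|\mathbf{z}_0\|(s^2-\lambda)/h_7$ together with $\lambda < s^2$ converts the $\lambda^k\|\mathbf{z}_0\|$ term into an $h_7 E s^{2k}/(s^2-\lambda)$ envelope. For the forcing sum, using $e_t^2 \le E^2 s^{2t}$ and the geometric identity $\sum_{t=0}^{k-1}\lambda^{k-1-t}s^{2t} \le s^{2k}/(s^2-\lambda)$, combined with the observation that the $\ell_1$-norm of $\mathbf{Q}$ equals $h_7$ (as $Q_3 = 0$), bounds the second term by the same scale.

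Finally, summing the first and third components of $\mathbf{z}_k$ (which dominates $\|\mathbf{X}_k-\mathbf{1}\bar{\mathbf{x}}_k^{\top}\|_{\mathbf{R}}^2 + f(\bar{\mathbf{x}}_k) - f^*$) and using $|v_1|+|v_3| \le \sqrt{2}\sqrt{v_1^2+v_3^2} \le \sqrt{2}\|\mathbf{v}\|_2$ for $\mathbf{v}\in\mathbb{R}^3$, together with the eigenvector-norm translation factor $h_8 = \sqrt{3}\max_i\nu_i/\min_i\nu_i$, yields the advertised bound $\sqrt{2}\,h_7 h_8 E^2 s^{2k}/(s^2-\lambda)$. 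Beyond the determinant bookkeeping, the other delicate point is ensuring that the $h_7$ and $h_8$ constants precisely match the Perron-norm conversion factors; an overly coarse norm change would inflate the constant but preserve the linear rate, so the qualitative conclusion is robust.
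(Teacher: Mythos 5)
Your proposal is correct and follows essentially the same route as the paper's proof: Lemma~\ref{lem9} with $\lambda^*=1$ (diagonal entries below one plus positivity of $\det(\mathbf{I}-\mathbf{P})$, reduced to the quadratic $h_1-h_2\alpha-(h_3+h_4+h_5+h_6)\alpha^2>0$ whose positive root is the last bound in~\eqref{eq:PL_a_bd}), followed by iterating the inequality system of Lemma~\ref{lem8:ineq_sys2} and using the Perron eigenvector to get $\left\|\mathbf{P}^t\right\|\le h_8\lambda^t$, the geometric convolution bound, and the $\sqrt{2}$ factor from $|v_1|+|v_3|\le\sqrt{2}\left\|\mathbf{v}\right\|$. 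The only cosmetic difference is that the paper enforces the slightly stronger requirement $\det(\mathbf{I}-\mathbf{P})>\tfrac{1}{2}(1-p_{11})(1-p_{22})(1-p_{33})$ and uses explicit lower/upper bounds on $1-p_{11}$ and $1-p_{22}$ to absorb the cross terms, but this leads to the same constants and the same conclusion.
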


\begin{proof}
	See APPENDIX G.
\end{proof}

The dual objective function $f_i$ in~\eqref{p:dual} is convex, as it comprises the sum of a convex conjugate function and an affine function. Consequently, by invoking Theorem~\ref{thm1:ET-PPG_nonconvex}, we conclude that the dual variable $\mathbf{X}_k$ converges sublinearly to the optimal solution. Furthermore, if each local dual objective $f_i$ exhibits strong convexity, then the global objective $f$ satisfies the P-Ł condition. Under this assumption, Theorem 2 guarantees that $\mathbf{X}_k$ converges linearly to the optimal solution.

\subsection{Convergence of Primal Problem}
In this subsection, we further establish the convergence properties of the primal variable $\mathbf{W}_k$. Denote $\mathbf{W}_k = [\mathbf{w}_{1,k}, \ldots, \mathbf{w}_{n,k}]^{\top}$, $\mathbf{W}^* = [\mathbf{w}_{1}^*, \ldots, \mathbf{w}_{n}^*]^{\top}$, $\nabla \mathbf{F}(\mathbf{W}_k) = [\nabla F_1(\mathbf{w}_{1,k}), \ldots, \nabla F_n(\mathbf{w}_{n,k})]^{\top} $ and $\overline{\nabla F}_k = \frac{1}{n} \sum_{i=1}^{n} \nabla F_i(\mathbf{w}_{i,k})$. The following convergence analysis is portrayed by the Karush-Kuhn-Tucker (KKT) condition for optimality.

\begin{theorem}[Convergence of ET-DGT]\label{thm3:primal conv}
	Under Assumptions~\ref{ass1:primal_cvx}--\ref{ass4:trigger}, if the constant step size $\alpha$ is sufficiently small and satisfies the conditions in Theorem~\ref{thm1:ET-PPG_nonconvex} or Theorem~\ref{thm2:ET-PPG_PL}, then the primal variable sequence $\{\mathbf{W}_k\}_{k=0}^{\infty}$ in Algorithm~\ref{alg:etdgt} converges to the optimal solution $\mathbf{W}^*$.
\end{theorem}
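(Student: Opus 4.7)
The plan is to translate the dual-side convergence guarantees of Theorems~\ref{thm1:ET-PPG_nonconvex}--\ref{thm2:ET-PPG_PL} into convergence of the primal iterates, using the Fenchel--Young relationship between $\mathbf{w}_{i,k+1}$ and $\tilde{\mathbf{w}}_{i,k+1}=-\mathbf{x}_{i,k+1}$ that is induced by subproblem~\eqref{eq:et_ww}. The KKT conditions for the primal problem~\eqref{p:primal} guarantee the existence of a dual optimum $\mathbf{x}^{*}\in\mathcal{X}^{*}$ such that each unique primal optimum satisfies $\mathbf{w}_{i}^{*}=\argmin_{\mathbf{w}\in\mathcal{W}_{i}}\{F_{i}(\mathbf{w})+\mathbf{x}^{*\top}\mathbf{w}\}$; crucially, since each $F_{i}$ is $\tfrac{1}{L}$-strongly convex the map $\mathbf{x}\mapsto\argmin_{\mathbf{w}\in\mathcal{W}_{i}}\{F_{i}(\mathbf{w})+\mathbf{x}^{\top}\mathbf{w}\}$ is single-valued, so every element of $\mathcal{X}^{*}$ produces the same $\mathbf{W}^{*}$.

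First I would establish a quantitative primal--dual Lipschitz inequality. Writing the variational inequality for the ET-DGT subproblem $\mathbf{w}_{i,k+1}=\argmin_{\mathbf{w}\in\mathcal{W}_{i}}\{F_{i}(\mathbf{w})-\tilde{\mathbf{w}}_{i,k+1}^{\top}\mathbf{w}\}$ at the test point $\mathbf{w}_{i}^{*}$, and the one for $\mathbf{w}_{i}^{*}$ at the test point $\mathbf{w}_{i,k+1}$, then summing and invoking $\tfrac{1}{L}$-strong convexity of $F_{i}$, I obtain
\begin{equation*}
\tfrac{1}{L}\|\mathbf{w}_{i,k+1}-\mathbf{w}_{i}^{*}\|^{2}\le\langle\tilde{\mathbf{w}}_{i,k+1}+\mathbf{x}^{*},\,\mathbf{w}_{i,k+1}-\mathbf{w}_{i}^{*}\rangle,
\end{equation*}
which after Cauchy--Schwarz and $\mathbf{X}_{k}=-\tilde{\mathbf{W}}_{k}$ gives the contraction bound $\|\mathbf{w}_{i,k+1}-\mathbf{w}_{i}^{*}\|\le L\|\mathbf{x}_{i,k+1}-\mathbf{x}^{*}\|$. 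Summing over $i$ and applying a triangle inequality, the problem reduces to bounding $\|\mathbf{X}_{k}-\mathbf{1}\mathbf{x}^{*\top}\|\le\|\mathbf{X}_{k}-\mathbf{1}\bar{\mathbf{x}}_{k}^{\top}\|+\sqrt{n}\,\|\bar{\mathbf{x}}_{k}-\mathbf{x}^{*}\|$, whose first term already converges by Theorem~\ref{thm1:ET-PPG_nonconvex}(2) (sublinearly) or by Theorem~\ref{thm2:ET-PPG_PL} (linearly) together with Lemma~\ref{lem3:matrix_norm}.

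Second I would handle the averaged term $\|\bar{\mathbf{x}}_{k}-\mathbf{x}^{*}\|$ in the two regimes. Under the P-\L\ assumption of Theorem~\ref{thm2:ET-PPG_PL}, linear convergence of $f(\bar{\mathbf{x}}_{k})-f^{*}$ combined with the quadratic growth property implied by the P-\L\ inequality for a convex $f$ yields $\operatorname{dist}(\bar{\mathbf{x}}_{k},\mathcal{X}^{*})\to 0$ linearly, so the Lipschitz bound above delivers linear convergence of $\mathbf{W}_{k}$ to $\mathbf{W}^{*}$. Under only the hypotheses of Theorem~\ref{thm1:ET-PPG_nonconvex}, the estimate $\tfrac{1}{k}\sum_{t=0}^{k-1}\|\nabla f(\bar{\mathbf{x}}_{t})\|^{2}=\mathcal{O}(1/k)$ together with convexity of $f$ (since each $f_{i}$ is a sum of a convex conjugate and an affine term) gives the existence of a subsequence $\{\bar{\mathbf{x}}_{k_{j}}\}$ with $\nabla f(\bar{\mathbf{x}}_{k_{j}})\to\mathbf{0}$; any cluster point is then a dual optimum, and the Lipschitz bound together with the uniqueness of $\mathbf{W}^{*}$ forces $\mathbf{W}_{k_{j}}\to\mathbf{W}^{*}$.

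The main technical obstacle is upgrading this subsequential behavior to convergence of the whole sequence in the sublinear regime, since the dual optimal set $\mathcal{X}^{*}$ need not be a singleton when strong convexity of $f$ fails. My strategy is to use the summability of the triggering error (Assumption~\ref{ass4:trigger}) and the square-summability bounds in Lemma~\ref{lem7} to establish a Fejér-type monotonicity of $\|\mathbf{X}_{k}-\mathbf{1}\mathbf{x}^{*\top}\|^{2}$ (up to vanishing perturbations) with respect to an arbitrary $\mathbf{x}^{*}\in\mathcal{X}^{*}$: the descent structure of the PPG dynamics after the change of variable $\mathbf{X}_{k}=-\tilde{\mathbf{W}}_{k}$ gives such an inequality modulo $\mathcal{O}(e_{k})$ and $\mathcal{O}(\alpha^{2}\|\nabla f(\bar{\mathbf{x}}_{k})\|^{2})$ terms, both of which are summable by Lemma~\ref{lem6}--\ref{lem7}. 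Combined with the subsequential limit at $\mathbf{W}^{*}$ and the primal Lipschitz bound above, this forces $\mathbf{W}_{k}\to\mathbf{W}^{*}$, completing the proof.
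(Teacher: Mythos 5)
Your opening step is correct and useful: the two variational inequalities for $\mathbf{w}_{i,k}$ and $\mathbf{w}_i^*$ together with $\tfrac{1}{L}$-strong convexity do give $\|\mathbf{w}_{i,k}-\mathbf{w}_i^*\|\le L\|\mathbf{x}_{i,k}-\mathbf{x}^*\|$, and your P-\L{} branch (linear decay of $f(\bar{\mathbf{x}}_k)-f^*$ plus the quadratic-growth consequence of the P-\L{} inequality) is sound. The problem is the route itself: you reduce primal convergence to convergence of the dual iterates $\bar{\mathbf{x}}_k$ toward $\mathcal{X}^*$, and in the general (non-P-\L{}) regime your proposed quasi-Fej\'er upgrade does not go through as sketched. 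You assert that the perturbations in the recursion for $\|\bar{\mathbf{x}}_k-\mathbf{x}^*\|^2$ are ``summable by Lemmas~\ref{lem6}--\ref{lem7},'' but those lemmas deliver only \emph{square}-summability of $\|\nabla f(\bar{\mathbf{x}}_t)\|$ and of the consensus/tracking errors; the unsquared error norms entering the cross term $-2\alpha\langle\bar{\mathbf{y}}_k-\pi_{\mathbf{C}}^{\top}\pi_{\mathbf{R}}\nabla f(\bar{\mathbf{x}}_k),\,\bar{\mathbf{x}}_k-\mathbf{x}^*\rangle$ are not shown to be summable. Splitting that cross term by Young's inequality leaves a non-summable multiple of $\|\bar{\mathbf{x}}_k-\mathbf{x}^*\|^2$ that can only be absorbed if $f$ has quadratic growth --- exactly the property you lack outside the P-\L{} case. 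Even the boundedness of $\{\bar{\mathbf{x}}_k\}$, which your subsequential argument silently uses, is not available from Theorem~\ref{thm1:ET-PPG_nonconvex}.

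The missing idea is that dual-iterate convergence is not needed at all. The paper bounds the primal error directly by the dual \emph{optimality gap}: since each $F_i$ is $\tfrac{1}{L}$-strongly convex, the map $\mathbf{W}\mapsto\mathcal{L}(\mathbf{W},\bar{\mathbf{x}}_k)$ is $\tfrac{1}{L}$-strongly convex and is minimized over $\mathcal{W}$ at the primal iterate, while feasibility of $\mathbf{W}^*$ makes $\mathcal{L}(\mathbf{W}^*,\mathbf{x})=F^*=-f^*$ for every $\mathbf{x}$. Hence $f(\bar{\mathbf{x}}_k)-f^*=\mathcal{L}(\mathbf{W}^*,\bar{\mathbf{x}}_k)-\inf_{\mathbf{W}\in\mathcal{W}}\mathcal{L}(\mathbf{W},\bar{\mathbf{x}}_k)\ge\tfrac{1}{2L}\|\mathbf{W}_k-\mathbf{W}^*\|^2$, and convergence of $f(\bar{\mathbf{x}}_k)$ to $f^*$ from Theorems~\ref{thm1:ET-PPG_nonconvex} and~\ref{thm2:ET-PPG_PL} finishes the proof in a few lines, with no Fej\'er machinery, no boundedness argument, and no reference to the (possibly non-singleton) set $\mathcal{X}^*$. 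To salvage your approach you would need either genuine summability (not square-summability) of all perturbations or a quadratic-growth hypothesis; the Lagrangian strong-convexity bound is the cleaner path and is what the theorem actually rests on.
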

\begin{proof}
	Since $F_i$~\eqref{p:primal} is $\frac{1}{L}-$strongly convex, the Lagrangian~\eqref{p:lag} is also strongly convex, i.e.,
	\begin{align*}
		\begin{aligned}
			\mathcal{L}(\mathbf{W}^*,\bar{\mathbf{x}}_k) \ge & \mathcal{L}(\mathbf{W}_k,\bar{\mathbf{x}}_k) + \nabla_{\mathbf{w}} \mathcal{L}(\mathbf{W}_k,\bar{\mathbf{x}}_k)^{\top} (\mathbf{W}^* - \mathbf{W}_k) \\
			& + \frac{1}{2L} \left \| \mathbf{W}^* - \mathbf{W}_k \right \|^2.
		\end{aligned}
	\end{align*} 
	Assumption~\ref{ass1:primal_cvx} implies that strong duality holds and $F^* = -f^* = -\mathcal{L}(\mathbf{W}^*, \mathbf{x})$ for any $\mathbf{x}$. Therefore, we can obtain that,
	\begin{align*}
		f(\bar{\mathbf{x}}_k) - f^* = & \mathcal{L}(\mathbf{W}^*,\bar{\mathbf{x}}_k) - \inf_{\mathbf{W} \in \mathcal{W}} \mathcal{L}(\mathbf{W},\bar{\mathbf{x}}_k) \\
		= & \mathcal{L}(\mathbf{W}^*,\bar{\mathbf{x}}_k) -\mathcal{L}(\mathbf{W}_k,\bar{\mathbf{x}}_k) \\
		\ge & \frac{1}{2L} \left \| \mathbf{W}^* - \mathbf{W}_k \right \|^2,
	\end{align*}
	where the inequality follows from the first-order optimality condition. After rearranging terms, we obtain that
	\begin{align}\label{eq:w_converge}
		\left \| \mathbf{W}_k - \mathbf{W}^* \right \|^2 \le 2L(f(\bar{\mathbf{x}}_k) - f^*).
	\end{align}
	Since $f(\bar{\mathbf{x}}_k)$ converges to $f^*$ as stated in Theorem~\ref{thm1:ET-PPG_nonconvex} and Theorem~\ref{thm2:ET-PPG_PL}, we can obtain that the primal variable $\mathbf{W}_k$ converges to the optimal solution $\mathbf{W}^*$ in our ET-DGT algorithm.	
\end{proof}

To analyze the convergence rate of ET-DGT, we follow standard practice by considering the unconstrained and differentiable case, where each local constraint set $\mathcal{W}_i = \mathbb{R}^m$ and each local cost function $F_i$ is differentiable for all $i \in \mathcal{N}$.

\begin{theorem}[Convergence rate of ET-DGT]\label{thm4:convrate}
	Assume $\mathcal{W}_i = \mathbb{R}^m$,~$\forall i \in \mathcal{N}$ and $F_i$,~$\forall i \in \mathcal{N}$ is differentiable. If each $F_i$ is $\frac{1}{L}$-strongly convex as stated in Assumption~\ref{ass1:primal_cvx}, then under conditions in Theorem~\ref{thm1:ET-PPG_nonconvex}, the primal variables $\{ \mathbf{w}_{i,k}\}_{k=0}^{\infty}$ exhibits a convergence rate of $\mathcal{O}(1/k)$:
	\begin{align}
		\begin{aligned}
			&\frac{1}{k} \sum_{t=0}^{k-1} \Bigg[\sum_{i=1}^n \left\| \nabla F_i\left(\mathbf{w}_{i,k}\right) - \overline{\nabla F}_k\right\|^2 + \left\|\sum_{i=1}^{n}\mathbf{w}_{i,k} - \mathbf{d}\right\|^{2}\Bigg] \\
			\le & \frac{2}{k} \sum_{t=0}^{k-1}\left \|\nabla f (\bar{\mathbf{x}}_t)\right \|^2 +  \frac{2n^2(1+L^2)}{k}\sum_{t=0}^{k-1} \left \| \mathbf{X}_{t} - \mathbf{1} \bar{\mathbf{x}}_{t}^{\top} \right \|_{\mathbf{R}} ^2.
		\end{aligned}
	\end{align}
	If each $F_i$ is also $\mu$-Lipschitz smooth, then under conditions in Theorem~\ref{thm2:ET-PPG_PL}, $\{ \mathbf{w}_{i,k}\}_{k=0}^{\infty}$ exhibits a convergence rate of $\mathcal{O}(\lambda^k)$:
	\begin{align}
		\left \| \mathbf{W}_k - \mathbf{W}^* \right \|^2 \le 2\sqrt{2}L \frac{h_7h_8E^2 s^{2k}}{s^2 - \lambda}.
	\end{align}
	
\end{theorem}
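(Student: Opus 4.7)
The plan is to prove Theorem~\ref{thm4:convrate} by exploiting the specific structure of the unconstrained, differentiable setting to translate the two primal quantities of interest into dual-side quantities already controlled in Theorems~\ref{thm1:ET-PPG_nonconvex} and~\ref{thm2:ET-PPG_PL}. In the unconstrained case, the first-order optimality condition for the primal subproblem~\eqref{eq:et_ww} gives $\nabla F_i(\mathbf{w}_{i,k}) = \tilde{\mathbf{w}}_{i,k} = -\mathbf{x}_{i,k}$. Similarly, substituting $\mathbf{x} = \mathbf{x}_{i,k} = -\tilde{\mathbf{w}}_{i,k}$ into the dual-gradient identity~\eqref{eq:grad_fi} yields the per-agent relation $\nabla f_i(\mathbf{x}_{i,k}) = -\mathbf{w}_{i,k} + \mathbf{d}_i$. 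These two equalities are the bridge between the primal KKT-type residuals and the dual consensus/gradient quantities analysed in Section~\ref{sec:convergence}.

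For the sublinear part, I would bound the two left-hand-side terms separately. The gradient-disagreement term equals $\sum_i \|\mathbf{x}_{i,k} - \tfrac{1}{n}\mathbf{1}^\top \mathbf{X}_k\|^2$; since the uniform mean minimises $\sum_i \|\mathbf{x}_{i,k} - \mathbf{c}\|^2$ over $\mathbf{c}$, this is dominated by $\|\mathbf{X}_k - \mathbf{1}\bar{\mathbf{x}}_k^\top\|_F^2$ and, via Lemma~\ref{lem3:matrix_norm}, by a multiple of $\|\mathbf{X}_k - \mathbf{1}\bar{\mathbf{x}}_k^\top\|_{\mathbf{R}}^2$. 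The constraint-violation term satisfies $\|\sum_i \mathbf{w}_{i,k} - \mathbf{d}\|^2 = \|\sum_i \nabla f_i(\mathbf{x}_{i,k})\|^2$, which I would decompose as $\sum_i \nabla f_i(\mathbf{x}_{i,k}) = \nabla f(\bar{\mathbf{x}}_k) + \sum_i[\nabla f_i(\mathbf{x}_{i,k}) - \nabla f_i(\bar{\mathbf{x}}_k)]$. Using $L$-Lipschitz smoothness of $f_i$ (inherited from $\tfrac{1}{L}$-strong convexity of $F_i$ by Fenchel duality), Cauchy--Schwarz, and $(a+b)^2 \le 2a^2 + 2b^2$ produces a bound of the form $2\|\nabla f(\bar{\mathbf{x}}_k)\|^2 + 2nL^2\|\mathbf{X}_k - \mathbf{1}\bar{\mathbf{x}}_k^\top\|^2$. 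Summing over $t = 0,\ldots,k-1$, dividing by $k$, and invoking the $\mathcal{O}(1/k)$ bounds~\eqref{eq:bound_s5} and~\eqref{eq:bound_s6} from Theorem~\ref{thm1:ET-PPG_nonconvex} then yields the claimed sublinear rate, with the coefficient $2n^2(1+L^2)$ emerging after absorbing the norm-equivalence constants.

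For the linear part, the crucial observation is that when each $F_i$ is simultaneously $\tfrac{1}{L}$-strongly convex and $\mu$-Lipschitz smooth, its Fenchel conjugate $F_i^*$ is $\tfrac{1}{\mu}$-strongly convex, so each dual objective $f_i(\mathbf{x}) = F_i^*(-\mathbf{x}) + \mathbf{x}^\top\mathbf{d}_i$ inherits strong convexity (the affine term is harmless). Hence $f = \sum_i f_i$ is strongly convex and automatically satisfies the P-\L\ condition of Assumption~\ref{ass5:PL}. Theorem~\ref{thm2:ET-PPG_PL} then supplies the linear bound $f(\bar{\mathbf{x}}_k) - f^* \le \sqrt{2}\,h_7 h_8 E^2 s^{2k}/(s^2 - \lambda)$, and combining it with the Lagrangian-strong-convexity inequality $\|\mathbf{W}_k - \mathbf{W}^*\|^2 \le 2L(f(\bar{\mathbf{x}}_k) - f^*)$ already derived in the proof of Theorem~\ref{thm3:primal conv} immediately yields the stated $\mathcal{O}(\lambda^k)$ bound with the factor $2\sqrt{2}L$.

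The main technical obstacle I anticipate lies in the sublinear part, specifically in reconciling the uniformly averaged quantity $\overline{\nabla F}_k$ that appears on the primal side with the $\pi_{\mathbf{R}}$-weighted average $\bar{\mathbf{x}}_k$ used throughout the dual analysis, and in converting between the norm $\|\cdot\|_{\mathbf{R}}$ and the Euclidean norm that governs the Lipschitz-smoothness estimates. The $n^2$ factor in the final coefficient suggests that these two mismatches each contribute one factor of $n$, so Lemma~\ref{lem3:matrix_norm} together with the standard bound $\|A\|_F^2 \le n\|A\|_2^2$ must be applied carefully to keep the constants in the stated form. Once this bookkeeping is pinned down, the remainder of the argument is a direct invocation of the established dual-side rates.
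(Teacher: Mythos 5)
Your proposal is correct and follows essentially the same route as the paper: use the unconstrained first-order optimality conditions $\nabla F_i(\mathbf{w}_{i,k})=-\mathbf{x}_{i,k}$ and $\nabla f_i(\mathbf{x}_{i,k})=-\mathbf{w}_{i,k}+\mathbf{d}_i$ to convert both primal residuals into the dual consensus error and gradient norm, then invoke Theorems~\ref{thm1:ET-PPG_nonconvex} and~\ref{thm2:ET-PPG_PL} together with the bound $\left\|\mathbf{W}_k-\mathbf{W}^*\right\|^2\le 2L(f(\bar{\mathbf{x}}_k)-f^*)$ from Theorem~\ref{thm3:primal conv}. Your only deviation is bounding the gradient-disagreement term via the variational characterization of the mean rather than the paper's explicit split through $\mathbf{1}\pi_{\mathbf{R}}^{\top}$, which yields an even tighter constant and is still consistent with the stated $2n^2(1+L^2)$.
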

\begin{proof}
	Since $\mathcal{W}_i = \mathbb{R}^m$,~$\forall i \in \mathcal{N}$, \eqref{eq:et_ww} becomes an unconstrained problem, where the first-order optimality condition requires that $\mathbf{x}_{i,k} = -\tilde{\mathbf{w}}_{i,k} = -\nabla F_i (\mathbf{w}_{i,k}) $. Thus, 
	\begin{align*}
		&\sum_{i=1}^n \left\| \nabla F_i\left(\mathbf{w}_{i,k}\right) - \overline{\nabla F}_k\right\|^2 \\
		&= \sum_{i=1}^n \left\| \mathbf{x}_{i,k} - \frac{1}{n}\sum_{j=1}^n \mathbf{x}_{j,k}\right\|^2 = \left\|\left(\mathbf{I}-\frac{1}{n}\mathbf{1}\mathbf{1}^{\top}\right)\mathbf{X}_k\right\|_F^2\\
		&\leq 2n \left\| \left(\mathbf{I}-\mathbf{1}\pi_{\mathbf{R}}^{\top}\right)\mathbf{X}_k\right\|^2 + 2n \left\|\left(\mathbf{1}\pi_{\mathbf{R}}^{\top}-\frac{1}{n}\mathbf{1}\mathbf{1}^{\top}\right)\mathbf{X}_k\right\|^2\\
		&= 2n \left\| \mathbf{X}_k - \mathbf{1}\bar{\mathbf{x}}_k^{\top}\right\|^2 + 2\left\|\left(\frac{1}{n}\mathbf{1}\mathbf{1}^{\top}-\mathbf{1}\pi_{\mathbf{R}}^{\top}\right)(\mathbf{X}_k- \mathbf{1}\bar{\mathbf{x}}_k^{\top})\right\|^2\\
		&\leq 2n^2\left\| \mathbf{X}_k- \mathbf{1}\bar{\mathbf{x}}_k^{\top}\right\|^2,
	\end{align*}
	where the first inequality follows from Cauchy-Schwarz (C-S) inequality and the last inequality comes from $\left\|\frac{1}{n}\mathbf{1}\mathbf{1}^{\top}-\mathbf{1}\pi_{\mathbf{R}}^{\top}\right\|^2~ \le n-1$.
	Then based on first-order optimality condition and~\eqref{eq:grad_fi}, it holds that $\nabla f_i(\mathbf{x}_{i,k}) = -\mathbf{w}_{i,k} + \mathbf{d}_i$, thus
	\begin{align*}
		&\left\|\sum_{i=1}^{n}\mathbf{w}_{i,k} - \mathbf{d}\right\|^{2}  = \left\|\sum_{i=1}^{n}\nabla f_{i}\left(\mathbf{x}_{i,k}\right)\right\|^{2} \\
		\leq &2\left\|\nabla f(\bar{\mathbf{x}}_{k})\right\|^{2} + 2n\sum_{i=1}^{n}\left\|\nabla f_{i}(\mathbf{x}_{i,k}) - \nabla f_{i}(\bar{\mathbf{x}}_{k})\right\|^{2} \\
		\leq &2\left\|\nabla f(\bar{\mathbf{x}}_{k})\right\|^{2} + 2n^2L^2\left\|\mathbf{X}_{k} - \mathbf{1}\bar{\mathbf{x}}_{k}^{\mathsf{T}}\right\|^{2},
	\end{align*}
	where the first inequality follows from C-S inequality and the last inequality follows from $L$-Lipschitz smoothness of $f_i$. Combining the previous equations, we obtain that
	\begin{align}\label{eq:F_sublinear}
		\begin{aligned}
			&\sum_{i=1}^n \left\| \nabla F_i\left(\mathbf{w}_{i,k}\right) - \overline{\nabla F}_k\right\|^2 + \left\|\sum_{i=1}^{n}\mathbf{w}_{i,k} - \mathbf{d}\right\|^{2} \\
			\le & 2\left\|\nabla f(\bar{\mathbf{x}}_{k})\right\|^{2} + 2n^2(1+L^2)\left\|\mathbf{X}_{k} - \mathbf{1}\bar{\mathbf{x}}_{k}^{\mathsf{T}}\right\|^{2}.
		\end{aligned}
	\end{align}
	The sublinear convergence of primal problem is established by integrating the results from~\eqref{eq:bound_s5},~\eqref{eq:bound_s6} and~\eqref{eq:F_sublinear}.
	If each function $F_i$ possesses a $\mu$-Lipschitz continuous gradient, by Fenchel duality, the corresponding function $f_i$ is $\frac{1}{\mu}$-strongly convex. Consequently, the aggregate function $f$ satisfies the P-Ł condition with $\beta =\frac{n}{\mu}$. The linear convergence of primal problem then follows by combining equations~\eqref{eq:PL_linear_converge} and~\eqref{eq:w_converge}.
\end{proof}

The convergence analysis addresses a key challenge: the dual of a strongly convex primal problem is not necessarily strongly convex. To overcome this, the analysis proceeds in two stages. First, we establish general convergence guarantees for ET-PPG, proving sublinear convergence for non-convex objectives (Theorem~\ref{thm1:ET-PPG_nonconvex}) and linear convergence under P-Ł condition (Theorem~\ref{thm2:ET-PPG_PL}). These results are then applied to the specific ET-DGT algorithm. We show that primal convergence follows from dual convergence (Theorem~\ref{thm3:primal conv}). Finally, we derive explicit rates for ET-DGT: if each $F_i$ is $\frac{1}{L}$-strongly convex, Theorem 1 yields a sublinear rate; if, in addition, each $F_i$ has a $\mu$-Lipschitz continuous gradient, then by Fenchel duality, the global dual objective $f$ satisfies P-Ł condition with $\beta = \frac{n}{\mu}$, enabling linear convergence via Theorem~\ref{thm2:ET-PPG_PL} (Theorem~\ref{thm4:convrate}). These results match the convergence rates of its periodic counterpart DDGT~\cite{zhang2020distributed}, while offering improved communication efficiency, thus enhancing the practical scalability of ET-DGT.

\section{Simulation}
\label{sec:simulation}
This section validates the proposed ET-DGT algorithm through numerical experiments, using the economic dispatch problem (EDP) in smart grids---a standard benchmark for distributed resource allocation. EDP aims to minimize total generation costs while ensuring a global supply-demand balance and respecting local generator capacity constraints~\cite{wang2021push}. For comprehensive evaluation, ET-DGT is compared against several state-of-the-art dual-based methods designed for unbalanced directed graphs.
\begin{itemize}
	\item \textbf{DDGT}~\cite{zhang2020distributed}: A periodic communication variant of the proposed algorithm, serving as a baseline to quantify communication savings from the event-triggered mechanism.
	\item \textbf{NN-SURPLUS}~\cite{xu2017distributed}: A periodic algorithm based on nonnegative surplus consensus, representing an alternative architecture for DRA over digraphs. 
	\item \textbf{ET-NN-SURPLUS}: An event-triggered adaptation of NN-SURPLUS, used to assess the robustness of different ET frameworks under similar communication constraints.
\end{itemize}

The experiments are divided into three cases. Case 1 evaluates convergence and communication efficiency on the IEEE 14-bus system~\cite{yang2013consensus} with quadratic costs. Case 2 extends the analysis to non-quadratic cost functions on the same system. Case 3 examines scalability on the larger IEEE 118-bus system~\cite{fu2006ac}.

\subsection{IEEE 14-Bus System with Quadratic Costs}
The first simulation is conducted on the IEEE 14-bus system, where buses $\{ 1,2,3,6,8 \}$ host generators and the remaining buses serve as loads. The communication network is defined as $\mathcal{G} = (\mathcal{V}, \mathcal{E})$, where $\mathcal{V}$ includes both generator and load buses. The edge set is given by $\mathcal{E} = \{ (i,i+1), (i, i+2) | 1 \le i \le 12 \} \cup \{(13,14), (13,1), (14,1), (1,7), (2,8), (3,2), $
$(3,9), (4,10), (5,2), (5,11), (6,12)\}$. See~\cite{xing2014distributed} for illustration of the graph topology. Each generator $i$ is assigned a quadratic cost function:
\begin{equation*}
	F_i(w_i) = a_iw_i^2 + b_i w_i
\end{equation*}
with parameters and local capacity constraints $\mathcal{W}_i$ adopted from~\cite{huo2024differentially} and detailed in Table~\ref{tab:generator_parameters}. Local power demands are given as: $d_1 = 0\text{MW}$, $d_2 = 9\text{MW}$, $d_3 = 56\text{MW}$, $d_4 = 55\text{MW}$, $d_5 = 27\text{MW}$, $d_6 = 27\text{MW}$, $d_7 = 0\text{MW}$, $d_8 = 0\text{MW}$, $d_9 = 8\text{MW}$, $d_{10} = 24\text{MW}$, $d_{11} = 53\text{MW}$, $d_{12} = 46\text{MW}$, $d_{13} = 16\text{MW}$, and $d_{14} = 40\text{MW}$. The total demand is $\sum_{i=1}^{14} d_i = 361\text{MW}$, which is invisible to any individual agents. The centralized optimal solution used as the ground truth for error evaluation is $\textbf{w}^* = [76.7398, 85.6530, 59.1311, 68.9863, 70.4898]^{\top}$.

For the algorithm parameters, a constant step-size $\alpha = 0.02$ is used for both DDGT and ET-DGT. The dynamically decaying triggering thresholds are set to $e_k = 0.35\times 0.91^k$ for ET-DGT and $e_k = 0.5\times 0.96^k$ for ET-NN-SURPLUS. 

\begin{table}[h!]
	\centering
	\caption{Generator Parameters}
	\label{tab:generator_parameters}
	\begin{tabular}{ccccc}
		\toprule
		Generator & Bus & $a_i$ (MW$^2$h) & $b_i$ ($ \$/ $MWh) &  $\mathcal{W}_i$ (MW) \\
		\midrule
		1 & 1 & 0.04 & 2.0 & [0,80] \\
		2 & 2 & 0.03 & 3.0 & [0,90] \\
		3 & 3 & 0.035 & 4.0 & [0,70] \\
		4 & 6 & 0.03 & 4.0 & [0,70] \\
		5 & 8 & 0.04 & 2.5 & [0,80] \\
		\bottomrule
	\end{tabular}
\end{table}

The results confirm the effectiveness of the proposed method. As shown in Fig.~\ref{fig:case1_gen_converge}, the power output of each generator under ET-DGT converges precisely to their respective optimal values.

\begin{figure}[h!]
	\centering
	\includegraphics[width=0.9\linewidth]{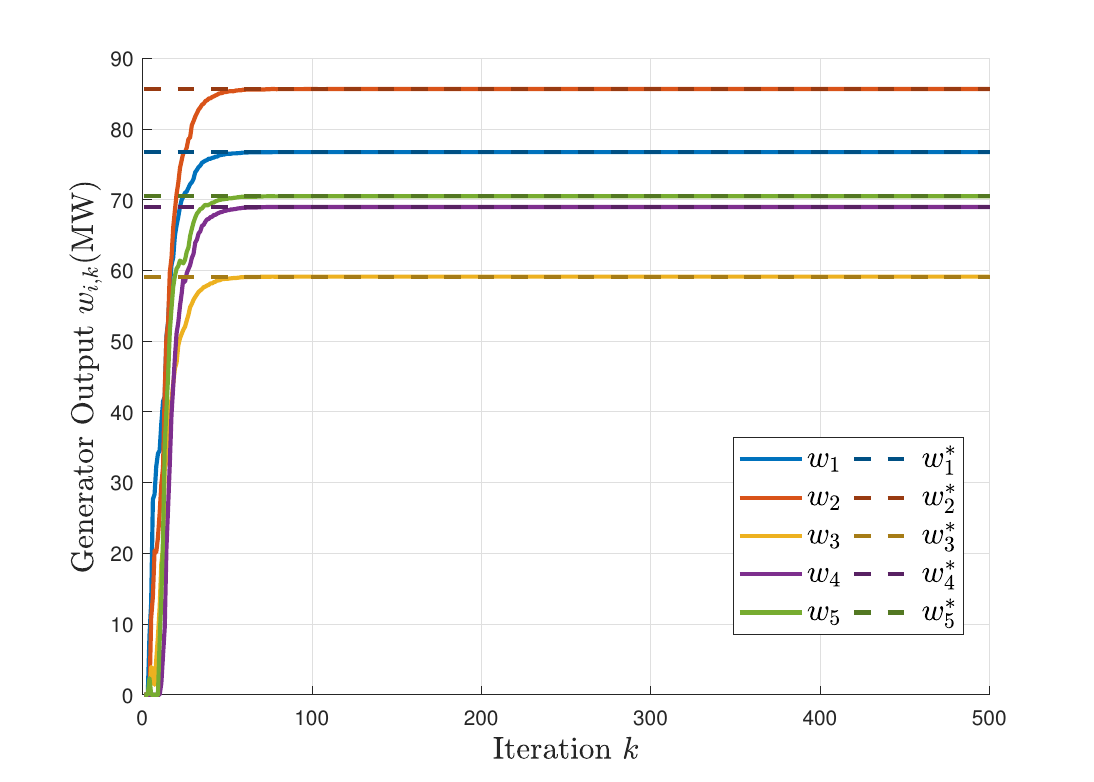}
	\caption{Convergence of individual generator outputs under the proposed ET-DGT algorithm in Case 1.}
	\label{fig:case1_gen_converge}
\end{figure}

A key performance metric is the satisfaction of global supply-demand balance. As shown in Fig.~\ref{fig:case1_gen_vs_demand}, total generation under ET-DGT converges smoothly to the total demand, matching the performance of periodic algorithms (DDGT, NN-SURPLUS). In contrast, ET-NN-SURPLUS exhibits a significant steady-state error, indicating constraint violation and underscoring the superior robustness of ET-DGT against event-triggered perturbations.

\begin{figure}[h!]
	\centering
	\includegraphics[width=0.9\linewidth]{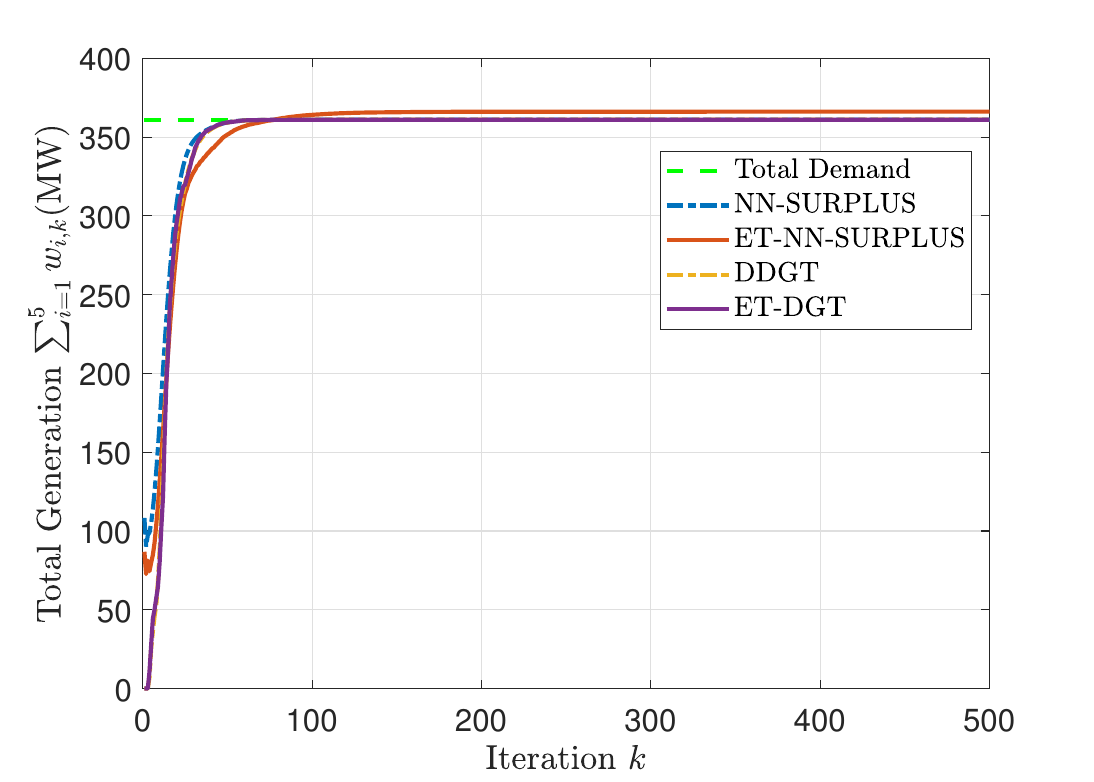}
	\caption{Total power generation v.s. total demand for all compared algorithms in Case 1.}
	\label{fig:case1_gen_vs_demand}
\end{figure}

Fig.~\ref{fig:case1_converge_error} illustrates the convergence error, supporting the core claims of this work. The trajectory of ET-DGT closely matches that of its periodic counterpart, DDGT, empirically validating that the event-triggered mechanism preserves both convergence rate and final accuracy. While NN-SURPLUS also converges effectively, its event-triggered variant converges more slowly and stabilizes at a significantly larger error, indicating reduced robustness to asynchronicity. 

\begin{figure}[h!]
	\centering
	\includegraphics[width=0.9\linewidth]{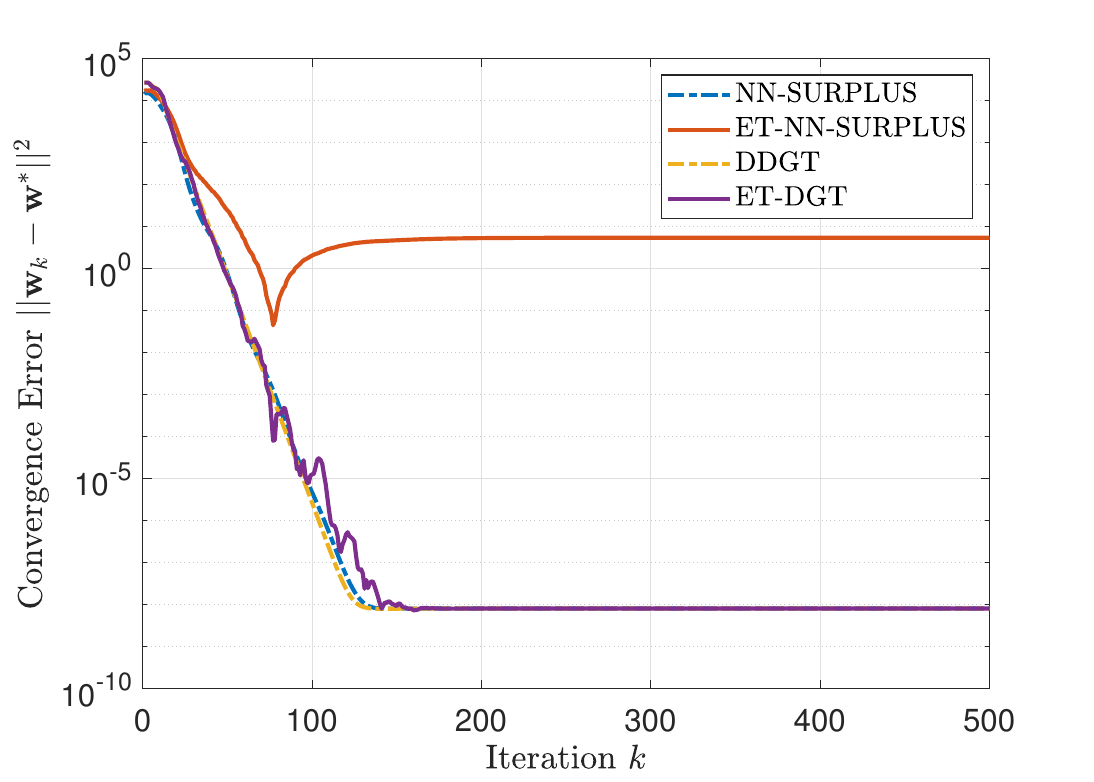}
	\caption{Convergence error comparison among algorithms in Case 1.}
	\label{fig:case1_converge_error}
\end{figure}

Crucially, ET-DGT achieves competitive performance with significantly reduced communication. As shown in Table~\ref{tab:communication counts}, ET-DGT uses only $58.1\%$ transmissions, demonstrating substantial communication savings without compromising accuracy. In contrast, ET-NN-SURPLUS consumes more resources ($64.3\%$) while yielding inferior performance.

\begin{table}[h!]
	\centering
	\caption{Comparison of total communication counts}
	\label{tab:communication counts}
	\begin{tabular}{cccc}
		\toprule
		Algorithm & Case 1 & Case 2 & Case 3\\
		\midrule
		DDGT & 7000 & 7000 & 118000\\
		\textbf{ET-DGT} & \textbf{4068} & \textbf{4702} & \textbf{54339}\\
		NN-SURPLUS & 7000 & 7000 & 118000\\
		ET-NN-SURPLUS & 4501 & 5383 & 65793\\
		\bottomrule
	\end{tabular}
\end{table}

\subsection{IEEE 14-Bus System with Exponential Costs}
To evaluate performance under more general convexity conditions, we introduce a non-quadratic cost function incorporating a natural exponential term~\cite{wang2021push}. This formulation yields a dual objective that may lack strong convexity, providing a practical test for our theoretical guarantees under the P-Ł condition. The cost function is defined as
\begin{equation*}
	F_i(w_i) = a_iw_i^2 + b_i w_i + d_i \exp{(\frac{w_i + e_i}{f_i})},
\end{equation*}
where $d_i = 1$, $e_i = 5$, and $f_i = 20$ for all $i \in {1,2,3,6,8}$. The centralized optimal solution is $\mathbf{w}^* = [74.4713, 76.9021, 67.5925, 70.0000, 72.0341]^{\top}$. Algorithm parameters are adjusted to $\alpha = 0.015$ and triggering thresholds $e_k = 0.03 \times 0.96^k$ for ET-DGT and $e_k = 0.5 \times 0.95^k$ for ET-NN-SURPLUS.

Fig.~\ref{fig:case2_converge_error} demonstrates that ET-DGT maintains convergence performance comparable to DDGT, even with more complex cost functions. Minor oscillations are observed due to aperiodic communication, yet the results empirically validate the theoretical guarantees without relying on strong convexity. The reference line $0.85^k$ indicates linear rate under non-strongly convex dual objectives. ET-NN-SURPLUS again converges more slowly and with higher error, reinforcing the robustness of push-pull gradient tracking under event-triggered communication.

\begin{figure}[h!]
	\centering
	\includegraphics[width=0.9\linewidth]{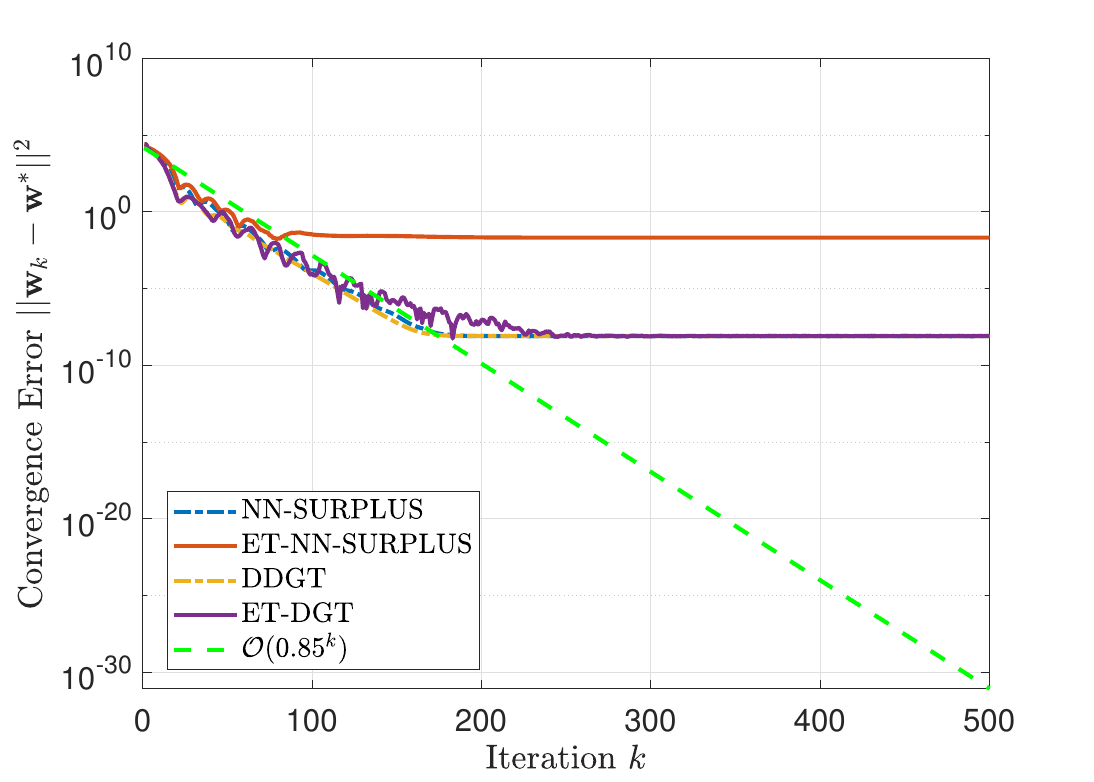}
	\caption{Convergence error comparison among algorithms in Case 2.}
	\label{fig:case2_converge_error}
\end{figure}

\subsection{IEEE 118-Bus System with Quadratic Costs}
To validate scalability, we apply the proposed algorithm to the EDP on the IEEE 118-bus system using MATPOWER~\cite{zimmerman2010matpower}, with simulations implemented in MATLAB 2023a and MATPOWER 8.1~\cite{matpower2025}. The communication topology is modeled as a randomly generated, strongly connected directed graph. For the larger network, algorithm parameters are set to $\alpha = 0.015$, with triggering thresholds $e_k = 0.5 \times 0.98^k$ for ET-DGT and $e_k = 0.985^k$ for ET-NN-SURPLUS.

Fig.~\ref{fig:case3_converge_error} shows that ET-DGT maintains performance comparable to DDGT even in large-scale settings, confirming its scalability and fast convergence. Notably, ET-DGT achieves this using only $46.1\%$ of DDGT’s communication resources (Table~\ref{tab:communication counts}), highlighting its efficiency at scale. In contrast, ET-NN-SURPLUS converges with large error, further highlighting the superior robustness of ET-DGT, particularly in complex network settings.

\begin{figure}[h!]
	\centering
	\includegraphics[width=0.9\linewidth]{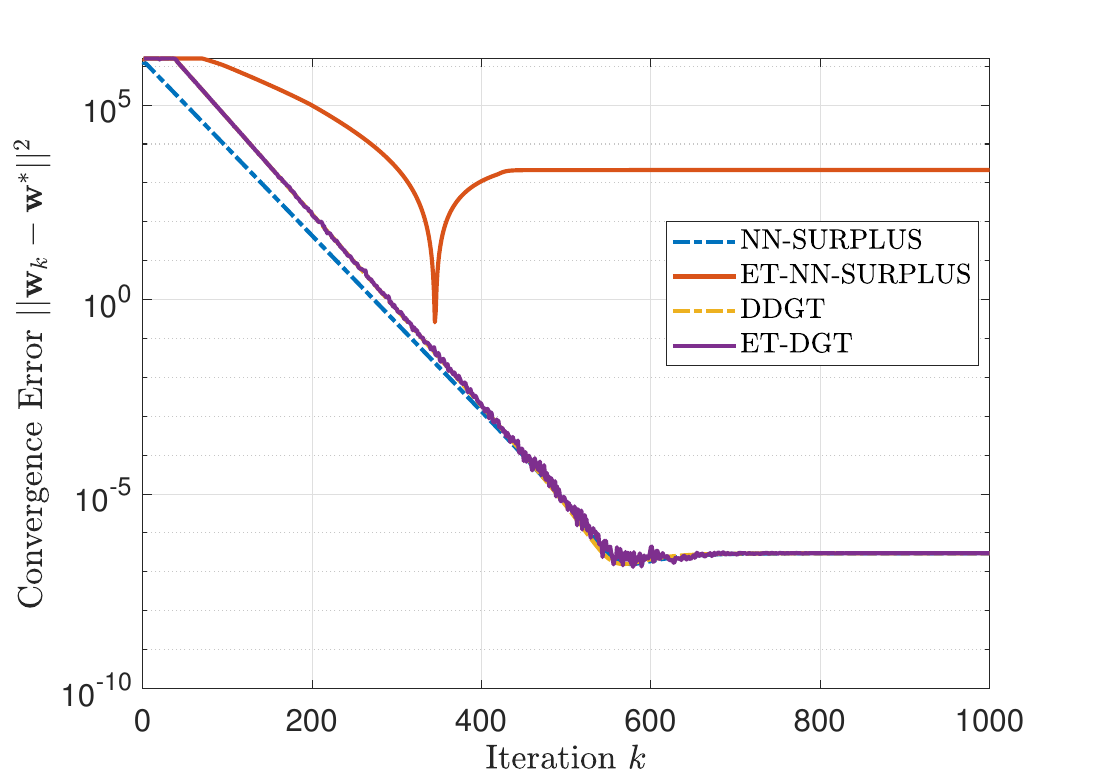}
	\caption{Convergence error comparison among algorithms in Case 3.}
	\label{fig:case3_converge_error}
\end{figure}

\section{Conclusion}
\label{sec:conclusion}
This paper proposed ET-DGT, an event-triggered dual gradient tracking algorithm for distributed resource allocation over unbalanced directed networks. By leveraging the duality between DRA and PPG, we developed a communication-efficient framework with rigorous convergence guarantees. Specifically, ET-PPG achieves linear convergence under the Polyak-Łojasiewicz condition and sublinear convergence for non-convex objectives. For ET-DGT, we established linear convergence with strongly convex and Lipschitz-smooth cost functions, and sublinear convergence without Lipschitz smoothness. Future research should address key challenges in DRA, including handling non-convex cost functions, ensuring communication privacy, fault tolerance, scalability in large networks, and adaptability to dynamic environments. In particular, designing locally adaptive event-triggering mechanisms remains an open problem for improving the communication–convergence trade-off.

\section{APPENDIX}
\subsection{Proof of Lemma 4} \label{apdx4_lemm4}
\textbf{Step 1: The first inequality w.r.t. $\left \| \mathbf{X}_{k+1}-\mathbf{1}\bar{\mathbf{x}}_{k+1}^{\top} \right \|_{\mathbf{R}}$:}\\
From~\eqref{eq:x}, we can derive that
\begin{align}\label{eq:x_con_err_1}
	& \left \| \mathbf{X}_{k+1}-\mathbf{1}\bar{\mathbf{x}}_{k+1}^{\top} \right \|_{\mathbf{R}} \nonumber \\
	= & \left\| (\mathbf{I}-\mathbf{1}\pi_{\mathbf{R}}^{\top}) \left [ \mathbf{R}\mathbf{X}_k- \alpha \mathbf{Y}_k  +  (\mathbf{R}-\mathbf{I})\boldsymbol{\zeta}_k \right ] \right \| \nonumber\\
	= & \left \| (\mathbf{R}-\mathbf{1}\pi_{\mathbf{R}}^{\top})(\mathbf{X}_{k}-\mathbf{1}\bar{\mathbf{x}}_{k}^{\top})  - (\mathbf{I}-\mathbf{1}\pi_{\mathbf{R}}^{\top}) \alpha \mathbf{Y}_k + (\mathbf{R}-\mathbf{I})\boldsymbol{\zeta}_k \right \| \nonumber\\
	\le & \sigma_{\mathbf{R}} \left \| \mathbf{X}_{k}-\mathbf{1}\bar{\mathbf{x}}_{k}^{\top} \right \|_{\mathbf{R}} + \alpha \left \| \mathbf{Y}_k \right \|_{\mathbf{R}} + (1+\sigma_{\mathbf{R}}) \sqrt{n} e_k,
\end{align}
where the inequality follows from Assumption~\ref{ass4:trigger} and the fact that $\left \| (\mathbf{I}-\mathbf{1}\pi_{\mathbf{R}}^{\top}) \right \|_{\mathbf{R}} = 1$ and $\left \| \mathbf{R}-\mathbf{I}\right \|_{\mathbf{R}} \le 1+\sigma_{\mathbf{R}}$ based on the construction of norm $\left \| \cdot \right \|_{\mathbf{R}}$ in Lemma~\ref{lem2:norm_bd}.\\
For the term regarding $\left \| \mathbf{Y}_k \right \|_{\mathbf{R}}$, we have
\begin{align}\label{eq:x_con_err_2}
	\left \| \mathbf{Y}_k \right \|_{\mathbf{R}} &= \left \| \mathbf{Y}_k -\pi_{\mathbf{C}}\hat{\mathbf{y}}_k^{\top} + \pi_{\mathbf{C}}\hat{\mathbf{y}}_k^{\top} \right \|_{\mathbf{R}} \nonumber \\
	& \le \delta_{\mathbf{R},\mathbf{C}}\left \| \mathbf{Y}_k -\pi_{\mathbf{C}}\hat{\mathbf{y}}_k^{\top} \right \|_{\mathbf{C}} +\left \| \hat{\mathbf{y}}_k^{\top} \right \|_{\mathbf{R}},
\end{align}
where the inequality follows from the fact that $ \left \| \pi_{\mathbf{C}} \right \|_{\mathbf{R}} \le 1$. 
Based on the $L$-Lipschitz smoothness of $f_i$, we have
\begin{align}\label{eq:x_con_err_3}
	\left \| \hat{\mathbf{y}}_k^{\top} \right \|_{\mathbf{R}} &= \left \| \mathbf{1}^{\top} \nabla\mathbf{f} (\mathbf{X}_{k})- \mathbf{1}^{\top} \nabla\mathbf{f} (\mathbf{1}\bar{\mathbf{x}}_k^{\top}) + \mathbf{1}^{\top} \nabla\mathbf{f} (\mathbf{1}\bar{\mathbf{x}}_k^{\top}) \right \|_{\mathbf{R}} \nonumber \\
	& \le \sqrt{n} L \left\| \mathbf{X}_k-\mathbf{1}\bar{\mathbf{x}}_k^{\top} \right \|_{\mathbf{R}} + \left \| \nabla f(\bar{\mathbf{x}}_k) \right \|_2.
\end{align}
Combining~\eqref{eq:x_con_err_1}--\eqref{eq:x_con_err_3}, we obtain the bound on $\left \| \mathbf{X}_{k+1}-\mathbf{1}\bar{\mathbf{x}}_{k+1}^{\top} \right \|_{\mathbf{R}}$ in~\eqref{eq:inequal_sys_1}.

\textbf{Step 2: The second inequality w.r.t. $\left \| \mathbf{Y}_{k+1}-\pi_{\mathbf{C}}\hat{\mathbf{y}}_{k+1}^{\top} \right \|_{\mathbf{C}}$:}\\
From~\eqref{eq:y}, we can obtain that
\begin{align}\label{eq:Y_k+1}
	& \mathbf{Y}_{k+1}-\pi_{\mathbf{C}}\hat{\mathbf{y}}_{k+1}^{\top}\\
	= & \left(\mathbf{I} -\pi_{\mathbf{C}}\mathbf{1}^{\top}\right) \left[ \mathbf{C}\mathbf{Y}_{k} +  \left(\mathbf{C} -\mathbf{I}\right) \boldsymbol{\xi}_k + \nabla\mathbf{f} (\mathbf{X}_{k+1})-\nabla\mathbf{f} (\mathbf{X}_{k}) \right]\nonumber
\end{align}
Therefore, we have 
\begin{align}
	& \left \| \mathbf{Y}_{k+1}-\pi_{\mathbf{C}}\hat{\mathbf{y}}_{k+1}^{\top} \right \|_{\mathbf{C}} \nonumber \\
	\le & \sigma_{\mathbf{C}} \left \| \mathbf{Y}_{k}-\pi_{\mathbf{C}}\hat{\mathbf{y}}_{k}^{\top} \right \|_{\mathbf{C}} + L  \left \| \mathbf{X}_{k+1} - \mathbf{X}_{k}  \right \|_{\mathbf{C}} \nonumber \\
	& + (1+\sigma_{\mathbf{C}}) \sqrt{n} e_k,\end{align}
where the inequality follows from the constructed $\left \| \cdot \right \|_{\mathbf{C}}$ and the $L$-Lipschitz smoothness of $f_i$.\\
For the term regarding $\left \| \mathbf{X}_{k+1}-\mathbf{X}_{k} \right \|_{\mathbf{C}}$, we first derive
\begin{align}\label{eq:x_k+1 - x_k}
	\mathbf{X}_{k+1}-\mathbf{X}_{k}
	= & \left(\mathbf{R} - \mathbf{I}\right) \mathbf{X}_k + \left(\mathbf{R} - \mathbf{I}\right) \boldsymbol{\zeta}_k \nonumber \\
	& -\alpha \left(\mathbf{Y}_{k} 
	-\pi_{\mathbf{C}}\hat{\mathbf{y}}_{k}^{\top}\right)-\alpha \pi_{\mathbf{C}}\hat{\mathbf{y}}_{k}^{\top}.
\end{align}
And then, 
\begin{align}\label{eq:xk+1-xk}
	& \left \| \mathbf{X}_{k+1}-\mathbf{X}_{k} \right \|_{\mathbf{C}} \nonumber \\
	\le & \left \| \mathbf{R} - \mathbf{I} \right \|_{\mathbf{C}} \left \| \mathbf{X}_k - \mathbf{1}\bar{\mathbf{x}}_k^{\top} \right \|_{\mathbf{C}} + \alpha \left  \| \mathbf{Y}_{k}-\pi_{\mathbf{C}}\hat{\mathbf{y}}_{k}^{\top} \right \|_{\mathbf{C}} \nonumber \\
	& + \alpha \left \| \pi_{\mathbf{C}} \right \|_{\mathbf{C}} \left \| \hat{\mathbf{y}}_{k}^{\top} \right \|_{\mathbf{C}} + \left \| \mathbf{R} - \mathbf{I}\right \|_{\mathbf{C}} \left \| \boldsymbol{\zeta}_k\right \|_{\mathbf{C}}\nonumber \\
	\le & \delta_{\mathbf{C},\mathbf{R}} \left( 2 + \alpha \sqrt{n} L \right) \left \| \mathbf{X}_k - \mathbf{1}\bar{\mathbf{x}}_k^{\top} \right \|_{\mathbf{R}} \nonumber \\
	& + \alpha \left \| \mathbf{Y}_{k}-\pi_{\mathbf{C}}\hat{\mathbf{y}}_{k}^{\top} \right \|_{\mathbf{C}} + \alpha  \left \| \nabla f(\bar{\mathbf{x}}_k) \right \|_2 + 2 \sqrt{n} e_k.
\end{align}
Combining~\eqref{eq:Y_k+1}--\eqref{eq:xk+1-xk}, we have the bound on $\left \| \mathbf{Y}_{k+1}-\pi_{\mathbf{C}}\hat{\mathbf{y}}_{k+1}^{\top} \right \|_{\mathbf{C}}$ in~\eqref{eq:inequal_sys_1}.
\subsection{Proof of Lemma 5}\label{apdx:lem5}
Note that for matrix $\mathbf{P}$ in Lemma 4, $\rho(\mathbf{P})<1$ for sufficiently small $\alpha$, since
\begin{align*}
	\lim_{\alpha \to 0} \mathbf{P} = \begin{bmatrix}
		\sigma_{\mathbf{R}} & 0 \\
		2L\delta_{\mathbf{C},\mathbf{R}} & \sigma_{\mathbf{C}}
	\end{bmatrix}
\end{align*}
is an lower triangular matrix. Its eigenvalues $\sigma_{\mathbf{R}}<1$ and $\sigma_{\mathbf{C}}<1$ are defined in Lemma 2.\\
The linear matrix inequalities~\eqref{eq:inequal_sys_1} implies that
\begin{align}\label{eq:lyap_1}
	\mathbf{z}_k \preceq  \mathbf{P}^{k} \mathbf{z}_{0} + \sum_{t=0}^{k-1} \mathbf{P}^{t} \mathbf{Q}\mathbf{u}_{k-1-t},
\end{align}
where
\begin{align*}
	\mathbf{z}_k \triangleq \begin{bmatrix}
		\left \| \mathbf{X}_{k}-\mathbf{1}\bar{\mathbf{x}}_{k}^{\top} \right \|_{\mathbf{R}} \\
		\left \| \mathbf{Y}_{k}-\pi_{\mathbf{C}}\hat{\mathbf{y}}_{k}^{\top} \right \|_{\mathbf{C}}
	\end{bmatrix}, \quad \text{and} \quad
	\mathbf{u}_k \triangleq \begin{bmatrix}
		\alpha \left \| \nabla f(\bar{\mathbf{x}}_k) \right \|_2 \\
		\sqrt{n}e_k
	\end{bmatrix}.
\end{align*}
Let $\lambda_1$ and $\lambda_2$ be the two eigenvalues of $\mathbf{P}$ such that $|\lambda_2| > |\lambda_1|$, then $\mathbf{P}$ can be diagonalized as
\begin{align} \label{eq:eigen_de_P}
	\mathbf{P} = \mathbf{T} \boldsymbol{\Lambda} \mathbf{T}^{-1}, \quad \boldsymbol{\Lambda} = \begin{bmatrix}
		\lambda_2 & 0 \\
		0 & \lambda_1
	\end{bmatrix}.
\end{align}
Let $\Psi = \sqrt{(p_{11}-p_{12})^2+4p_{12}p_{21}}$. Without loss of generality, we assume $\sigma_{\mathbf{R}} \neq \sigma_{\mathbf{C}}$, which guarantees that $\Psi$ is lower bounded by a positive value $\underline{\Psi}$ that is independent of $\alpha$:
\begin{align*}
	\Psi \ge \sqrt{\left (\sigma_{\mathbf{R}} - \sigma_{\mathbf{C}} \right )^2} = \underline{\Psi} > 0.
\end{align*}
Then the matrices $\Lambda$, $\mathbf{T}$ and $\mathbf{T}^{-1}$ can be calculated as:
\begin{align*}
	\boldsymbol{\Lambda} &=  \begin{bmatrix}
		\lambda_2 & 0 \\
		0 & \lambda_1
	\end{bmatrix} = \begin{bmatrix}
		\frac{p_{11}+p_{22}+\Psi}{2} & 0 \\
		0 & \frac{p_{11}+p_{22}-\Psi}{2}
	\end{bmatrix},\\
	\mathbf{T} &= \begin{bmatrix}
		\frac{p_{11}-p_{22}-\Psi}{2P_{21}} & \frac{p_{11}-p_{22}+\Psi}{2p_{21}}\\
		1 & 1
	\end{bmatrix},\\
	\mathbf{T}^{-1} &= \begin{bmatrix}
		-\frac{p_{21}}{\Psi} & \frac{p_{11}-p_{22}+\Psi}{2\Psi}\\
		\frac{p_{21}}{\Psi} & \frac{p_{22}-p_{11}+\Psi}{2\Psi}
	\end{bmatrix}.
\end{align*}
By enforcing $\lambda = \rho(\mathbf{P})=\lambda_2<1$, we obtain that
\begin{align*}
	\alpha < & \frac{(1-\sigma_{\mathbf{R}})(1-\sigma_{\mathbf{C}})}{(\sqrt{n}+1)L\delta_{\mathbf{R},\mathbf{C}}+(\sqrt{n}L+2)L\delta_{\mathbf{R},\mathbf{C}}\delta_{\mathbf{C},\mathbf{R}}} \nonumber \\
	< & \frac{(1-\sigma_{\mathbf{R}})(1-\sigma_{\mathbf{C}})}{(\sqrt{n}L+\sqrt{n}+3)L\delta_{\mathbf{R},\mathbf{C}}\delta_{\mathbf{C},\mathbf{R}}}.
\end{align*}
If $\alpha$ additionally satisfies $\alpha < \frac{1}{\delta_{\mathbf{R},\mathbf{C}}}$, we can obtain that:
\begin{align*}
	& \frac{\lambda_1^k+\lambda_2^k}{2} + \frac{(p_{11}-p_{22})(\lambda_2^k-\lambda_1^k)}{2\Psi}  \\
	\le & \frac{\lambda_1^k+\lambda_2^k}{2} + \frac{|p_{11}-p_{22}|(\lambda_2^k-\lambda_1^k)}{2\sqrt{(p_{11}-p_{12})^2+4p_{12}p_{21}}} \le  \lambda ^k,\\
	& \frac{\lambda_1^k+\lambda_2^k}{2} + \frac{(p_{11}-p_{22})(\lambda_1^k-\lambda_2^k)}{2\Psi}  \\
	\le & \frac{\lambda_1^k+\lambda_2^k}{2} + \frac{|p_{11}-p_{22}|(\lambda_2^k-\lambda_1^k)}{2\sqrt{(p_{11}-p_{12})^2+4p_{12}p_{21}}} 
	\le \lambda ^k,
\end{align*}
and
\begin{align*}
	\frac{p_{12}}{\Psi}(\lambda_2^k-\lambda_1^k) 
	= &\frac{\alpha \delta_{\mathbf{R},\mathbf{C}}}{\Psi}(\lambda_2^k-\lambda_1^k)
	\le \frac{1}{\underline{\Psi}} \lambda^k, \\
	\frac{p_{21}}{\Psi}(\lambda_2^k-\lambda_1^k) 
	\le &  \frac{ 2L \delta_{\mathbf{C},\mathbf{R}} + \frac{(1-\sigma_{\mathbf{R}})(1-\sigma_{\mathbf{C}})\sqrt{n}L}{(\sqrt{n}L+\sqrt{n}+3)\delta_{\mathbf{R},\mathbf{C}}} }{\underline{\Psi}}\lambda^k \nonumber \\
	\le & \frac{ 2L \delta_{\mathbf{C},\mathbf{R}} + L }{\underline{\Psi}}\lambda^k.	
\end{align*}
It then follows from~\eqref{eq:eigen_de_P} that
\begin{align}\label{eq:P^k}
	0 &\preceq \mathbf{P}^k \nonumber \\
	&= \mathbf{T} \Lambda^k \mathbf{T}^{-1} \nonumber \\
	&=\begin{bmatrix}
		\frac{\lambda_1^k+\lambda_2^k}{2} + \frac{(p_{11}-p_{22})(\lambda_2^k-\lambda_1^k)}{2\Psi} & \frac{p_{12}}{\Psi}(\lambda_2^k-\lambda_1^k) \\	
		\frac{p_{21}}{\Psi}(\lambda_2^k-\lambda_1^k) & \frac{\lambda_1^k+\lambda_2^k}{2} + \frac{(p_{11}-p_{22})(\lambda_1^k-\lambda_2^k)}{2\Psi}
	\end{bmatrix} \nonumber \\
	&\preceq \lambda^k \begin{bmatrix}
		1 & \frac{1}{\underline{\Psi}}\\
		\frac{(2L\delta_{\mathbf{C},\mathbf{R}}+L)}{\underline{\Psi}}  & 1
	\end{bmatrix}.
\end{align}
Combining \eqref{eq:inequal_sys_1}, \eqref{eq:lyap_1} and \eqref{eq:P^k} yields~\eqref{eq:bound_s1}.

\subsection{Proof of Lemma 6}\label{apdx:lem6}

For $\left \| \nabla f(\bar{\mathbf{x}}_t) \right \| \left \| \mathbf{X}_{t} - \mathbf{1} \bar{\mathbf{x}}_{t}^{\top} \right \|_{\mathbf{R}}$, it follows from~\eqref{eq:bound_s1} that
\begin{align}
	&\left \| \nabla f(\bar{\mathbf{x}}_t) \right \| \left \| \mathbf{X}_{t} - \mathbf{1} \bar{\mathbf{x}}_{t}^{\top} \right \|_{\mathbf{R}} \nonumber \\
	\le & c_0 \lambda^{t} \left \| \nabla f(\bar{\mathbf{x}}_t) \right \| + c_1 \alpha \left \| \nabla f(\bar{\mathbf{x}}_t) \right \| \sum_{s=0}^{t-1} \lambda^{s} \left \| \nabla f(\bar{\mathbf{x}}_{t-1-s}) \right \| \nonumber \\
	& + c_2 \left \| \nabla f(\bar{\mathbf{x}}_t) \right \| \sum_{s=0}^{t-1} \lambda^{s} e_{t-1-s}.
\end{align}
Thus, we obtain that
\begin{align} \label{eq:bound_s3_x_0}
	& \sum_{t=0}^{k-1}\left \|\nabla f (\bar{\mathbf{x}}_t)\right \| \left \| \mathbf{X}_{t} - \mathbf{1} \bar{\mathbf{x}}_{t}^{\top} \right \|_{\mathbf{R}} \nonumber \\
	\le & c_0 \sum_{t=0}^{k-1}\lambda^{t} \left \| \nabla f(\bar{\mathbf{x}}_t) \right \| \nonumber \\
	& + c_1 \alpha \sum_{t=0}^{k-1}\left \| \nabla f(\bar{\mathbf{x}}_t) \right \| \sum_{s=0}^{t-1} \lambda^{s} \left \| \nabla f(\bar{\mathbf{x}}_{t-1-s}) \right \| \nonumber \\
	& + c_2 \sum_{t=0}^{k-1}\left \| \nabla f(\bar{\mathbf{x}}_t) \right \| \sum_{s=0}^{t-1} \lambda^{s} e_{t-1-s}.
\end{align}
For the first term in~\eqref{eq:bound_s3_x_0}, the relation $\left \| \nabla f (\bar{\mathbf{x}}_t) \right \| \le 1 + \left \| \nabla f (\bar{\mathbf{x}}_t) \right \|^2$ yields that 
\begin{align} \label{eq:s3_x_1}
	c_0 \sum_{t=0}^{k-1}\lambda^{t} \left \| \nabla f(\bar{\mathbf{x}}_t) \right \| \le & c_0 \sum_{t=0}^{k-1}\lambda^{t} \left(1 + \left \| \nabla f (\bar{\mathbf{x}}_t) \right \|^2\right) \nonumber \\
	\le & \frac{c_0}{1-\lambda} + c_0 \sum_{t=0}^{k-1}\lambda^{t} \left \| \nabla f (\bar{\mathbf{x}}_t) \right \|^2.
\end{align}
For the second term in~\eqref{eq:bound_s3_x_0}, we define
\begin{align*}
	\boldsymbol{\vartheta}_t =& \left[ \lambda^{t-1}, \lambda^{t-2}, \ldots, \lambda, 1, 0, \ldots, 0 \right]^{\top} \in \mathbb{R}^{k}, \\
	\tilde{\boldsymbol{\vartheta}}_t =& \left[ \underbrace{0, \dots, 0}_{t}, 1, 0, \dots, 0 \right]^{\top} \in \mathbb{R}^{k}, \\
	\mathbf{g}_k =& \left[ \left \| \nabla f(\bar{\mathbf{x}}_0) \right \|, \ldots, \left \| \nabla f(\bar{\mathbf{x}}_{k-1}) \right \| \right]^{\top} \in \mathbb{R}^{k}, \\
	\Theta_k =& \sum_{t=0}^{k-1}\tilde{\boldsymbol{\vartheta}}_t \boldsymbol{\vartheta}_t^{\top} 
	= \begin{bmatrix}
		0  \\
		1 & 0  \\
		\lambda & 1 & 0\\
		\vdots & \vdots & \ddots & \ddots\\
		\lambda^{k-2} & \lambda^{k-3}& \cdots & 1  & 0 
	\end{bmatrix}.
\end{align*}
Thus, it follows that
\begin{align}\label{eq:s3_x_2}
	& c_1 \alpha \sum_{t=0}^{k-1}\left \| \nabla f(\bar{\mathbf{x}}_t) \right \| \sum_{s=0}^{t-1} \lambda^{s} \left \| \nabla f(\bar{\mathbf{x}}_{t-1-s}) \right \| \nonumber \\
	= & c_1 \alpha \mathbf{g}_k^{\top} \Theta_k \mathbf{g}_k \nonumber \\
	\le & c_1 \alpha \frac{\rho(\Theta_k + \Theta_k^{\top})}{2} \left\| \mathbf{g}_k \right \|^2 = \frac{c_1 \alpha}{1-\lambda} \sum_{t=0}^{k-1}\left \| \nabla f (\bar{\mathbf{x}}_t) \right \|^2.
\end{align}
While for the third term in~\eqref{eq:bound_s3_x_0}, let $S_t = \sum_{s=0}^{t-1} \lambda^{s} e_{t-1-s} = \sum_{m=0}^{t-1} \lambda^{t-1-m}e_m$, then we can derive the following upper bounds w.r.t. $S_t^2$:

\begin{align}
	S_t^2 =& \left(\sum_{m=0}^{t-1} \lambda^{t-1-m}e_m\right)^2 \le \frac{e_0}{1-\lambda}	\left(\sum_{m=0}^{t-1} \lambda^{t-1-m}e_m\right),
\end{align}
and thus
\begin{align}\label{eq:St2}
	\sum_{t=0}^{k-1}S_t^2 \le & \frac{e_0}{1-\lambda} \sum_{t=0}^{k-1}\sum_{m=0}^{t-1}\left(\lambda^{t-1-m}e_m\right) \nonumber \\
	= & \frac{e_0}{1-\lambda} \sum_{m=0}^{k-2} e_m \sum_{t=m+1}^{k-1}\lambda^{t-1-m} \le \frac{e_0S_e}{(1-\lambda)^2}.
\end{align}
Therefore, we can derive that
\begin{align}\label{eq:s3_x_3}
	& c_2 	\sum_{t=0}^{k-1} \left \| \nabla f(\bar{\mathbf{x}}_t) \right \| S_t \nonumber \\
	\le& \frac{\alpha c_2}{2}  	\sum_{t=0}^{k-1} \left \| \nabla f(\bar{\mathbf{x}}_t) \right \|^2 + \frac{c_2}{2\alpha } 	\sum_{t=0}^{k-1} S_t^2\nonumber \\
	\le&  \frac{\alpha c_2}{2} 	\sum_{t=0}^{k-1}\left \| \nabla f(\bar{\mathbf{x}}_t) \right \|^2 + \frac{c_2e_0S_e}{2\alpha (1-\lambda)^2},
\end{align}
where the first inequality follows from Young's inequality and the second inequality follows from~\eqref{eq:St2}. Combining~\eqref{eq:bound_s3_x_0}--\eqref{eq:s3_x_2} and~\eqref{eq:s3_x_3} gives the bound on $\sum_{t=0}^{k-1} \left \|\nabla f (\bar{\mathbf{x}}_t)\right \| \left \| \mathbf{X}_{t} - \mathbf{1} \bar{\mathbf{x}}_{t}^{\top} \right \|_{\mathbf{R}}$. Similarly, we can bound $ \sum_{t=0}^{k-1} \left \|\nabla f (\bar{\mathbf{x}}_t)\right \| \left \| \mathbf{Y}_{t} - \pi_{\mathbf{C}} \hat{\mathbf{y}}_t^{\top} \right \|_{\mathbf{C}} $ and obtain~\eqref{eq:bound_s3_matrix}.

\subsection{Proof of Lemma 7}\label{apdx:lem7}
We first define
\begin{align*}
	\mathbf{v}_{k} =& \left[c_0, c_1 \alpha \left \| \nabla f(\bar{\mathbf{x}}_0) \right \|, \ldots, c_1 \alpha \left \| \nabla f(\bar{\mathbf{x}}_{k-2}) \right \| \right]^{\top} \in \mathbb{R}^{k}, \\
	\boldsymbol{\epsilon}_{k} =& \left[0, c_2e_0, \ldots, c_2e_{k-2}\right]^{\top} \in \mathbb{R}^{k}, \\
	\boldsymbol{\phi}_t =& \left[\lambda^{t}, \lambda^{t-1}, \ldots, \lambda, 1, 0, \ldots, 0 \right]^{\top} \in \mathbb{R}^{k}, \\
	\Phi_k =& \sum_{t=0}^{k-1}\boldsymbol{\phi}_t\boldsymbol{\phi}_t^{\top} \in \mathbb{R}^{{k}\times {k}}.
\end{align*}

Combining~\eqref{eq:bound_s1}, we have that $\left \| \mathbf{X}_{t} - \mathbf{1} \bar{\mathbf{x}}_{t}^{\top} \right \|_{\mathbf{R}} \le \mathbf{v}_{k}^{\top} \boldsymbol{\phi}_t + \boldsymbol{\epsilon}_{k}^{\top} \boldsymbol{\phi}_t$ and thus,

\begin{align}
	\sum_{t=0}^{k-1} \left \| \mathbf{X}_{t} - \mathbf{1} \bar{\mathbf{x}}_{t}^{\top} \right \|_{\mathbf{R}} ^2 \le& 2 \mathbf{v}_{k}^{\top}\Phi_k \mathbf{v}_{k} + 2 \boldsymbol{\epsilon}_k^{\top}\Phi_k \boldsymbol{\epsilon}_k \nonumber \\
	\le& 2 \left\| \Phi_k \right\| \left(\left \| \mathbf{v}_{k} \right \|^2 + \left \| \boldsymbol{\epsilon}_{k} \right \|^2\right).
\end{align}
To bound $\left\| \Phi_k \right\|$, let $[\Phi_k]_{ij}$ be the $(i,j)$-th element. For any $0<i\le j \le k$, we have,
\begin{align}
	[\Phi_k]_{ij} =& \sum_{t=j-1}^{k-1} \lambda^{t-i+1} \lambda^{t-j+1} =  \frac{\lambda^{j-i}\left(1-\lambda^{2(k-j+1)}\right)}{1-\lambda^2}.\nonumber
\end{align}
Since $\Phi_k$ is symmetric, $\left\| \Phi_k \right\| = \rho(\Phi_k)$. By using the Gershgorin circle theorem, we have

\begin{align*}
	\begin{aligned}
		\left\|\Phi_{k}\right\| \leq & \max _{j} \sum_{i=1}^{k}\left[\Phi_{k}\right]_{i j} \\
		= & \max _{j}\left[\sum_{i=1}^{j}\left[\Phi_{k}\right]_{i j}+\sum_{i=j+1}^{k}\left[\Phi_{k}\right]_{j i}\right] \\
		= & \max _{j} \left[ \frac{\left(1-\lambda^{j}\right)\left(1-\lambda^{2(k-j+1)}\right)} {(1-\lambda)\left(1-\lambda^{2}\right)} \right. \\
		& \left.+\frac{\lambda\left(1-\lambda^{k+1-j}\right)+\lambda^{2(k-j+1)}\left(1-\lambda^{j-k-1}\right)}{(1-\lambda)\left(1-\lambda^{2}\right)}\right] \\
		\le & \frac{1}{(1-\lambda)^2}.
	\end{aligned}
\end{align*}
Thus we can obtain~\eqref{eq:bound_s4_matrix}.

\subsection{Proof of Theorem 1}\label{apdx:thm1}

Since $f_i$ is $L-$Lipschitz smooth, we have
\begin{align}\label{eq:L_smt_f}
	&f(\bar{\mathbf{x}}_{k+1})\nonumber \\
	\le &f(\bar{\mathbf{x}}_{k}) + \nabla f(\bar{\mathbf{x}}_{k})^{\top}(\bar{\mathbf{x}}_{k+1} - \bar{\mathbf{x}}_{k}) + \frac{L}{2}||\bar{\mathbf{x}}_{k+1}-\bar{\mathbf{x}}_{k}||^2 \nonumber \\
	= &f(\bar{\mathbf{x}}_{k}) - \alpha \nabla f(\bar{\mathbf{x}}_{k})^{\top} \bar{\mathbf{y}}_k + \frac{\alpha^2 L}{2}||\bar{\mathbf{y}}_k||^2.
\end{align}

For the term related to $\bar{\mathbf{y}}_k$, we have
\begin{align}\label{eq:y_bar}
	\bar{\mathbf{y}}_k 
	=& (\mathbf{Y}_k - \pi_{\mathbf{C}} \hat{\mathbf{y}}_k^{\top} )^{\top} \pi_{\mathbf{R}} + \nabla\mathbf{f} (\mathbf{X}_{k})^{\top}\mathbf{1} \pi_{\mathbf{C}}^{\top}\pi_{\mathbf{R}} \nonumber \\
	&+\nabla f (\bar{\mathbf{x}}_k)\pi_{\mathbf{C}}^{\top} \pi_{\mathbf{R}},
\end{align}
and hence,
\begin{align}\label{eq:y2_ub}
	\left \|\bar{\mathbf{y}}_k\right \|^2 \le & 3 \delta_{2,C}^2 \left \| \mathbf{Y}_k - \pi_{\mathbf{C}} \hat{\mathbf{y}}_k^{\top} \right \|_{\mathbf{C}}^2 \nonumber \\
	& + 3 L^2 n (\pi_{\mathbf{C}}^{\top} \pi_{\mathbf{R}})^2 \delta_{2,R}^2 \left \| \mathbf{X}_{k} - \mathbf{1} \bar{\mathbf{x}}_{k}^{\top} \right \|_{\mathbf{R}}^2 \nonumber \\
	& + 3 (\pi_{\mathbf{C}}^{\top} \pi_{\mathbf{R}})^2 \left \|\nabla f (\bar{\mathbf{x}}_k)\right \|^2.
\end{align}

It follows form~\eqref{eq:y_bar} that the term $-\nabla f(\bar{\mathbf{x}}_{k})^{\top} \bar{\mathbf{y}}_k$ in~\eqref{eq:L_smt_f} can be calculated as
\begin{align}\label{eq:bar_g_bar_y}
	& -\nabla f(\bar{\mathbf{x}}_{k})^{\top} \bar{\mathbf{y}}_k \nonumber \\
	\le	& -\left \|\nabla f (\bar{\mathbf{x}}_k)\right \|^2\pi_{\mathbf{C}}^{\top} \pi_{\mathbf{R}} +  \left \| \nabla f(\bar{\mathbf{x}}_k) \right \| \left \|\mathbf{Y}_k-\pi_{\mathbf{C}}\hat{\mathbf{y}}_k^{\top} \right \|_{\mathbf{C}} \delta_{2,C}  \nonumber \\
	& + \left \| \nabla f(\bar{\mathbf{x}}_k) \right \| \left \| \mathbf{X}_{k} - \mathbf{1} \bar{\mathbf{x}}_{k}^{\top} \right \|_{\mathbf{R}} L \sqrt{n} \pi_{\mathbf{C}}^{\top} \pi_{\mathbf{R}} \delta_{2,R}.	
\end{align}

Combining Equations~\eqref{eq:L_smt_f}--\eqref{eq:bar_g_bar_y}, we can derive that
\begin{align}\label{eq:f_update}
	f(\bar{\mathbf{x}}_{k+1}) 
	\le &  f(\bar{\mathbf{x}}_{k}) - \alpha \pi_{\mathbf{C}}^{\top} \pi_{\mathbf{R}} \left(1 - \frac{3\alpha L \pi_{\mathbf{C}}^{\top} \pi_{\mathbf{R}} }{2}\right)\left \|\nabla f (\bar{\mathbf{x}}_k)\right \|^2 \nonumber\\
	& + \alpha \delta_{2,C} \left \| \nabla f(\bar{\mathbf{x}}_k) \right \| \left \|\mathbf{Y}_k-\pi_{\mathbf{C}}\hat{\mathbf{y}}_k^{\top} \right \|_{\mathbf{C}}  \nonumber \\
	& + \alpha L \sqrt{n} \pi_{\mathbf{C}}^{\top} \pi_{\mathbf{R}} \delta_{2,R} \left \| \nabla f(\bar{\mathbf{x}}_k) \right \| \left \| \mathbf{X}_{k} - \mathbf{1} \bar{\mathbf{x}}_{k}^{\top} \right \|_{\mathbf{R}} \nonumber \\
	& + \frac{3\alpha^2 L}{2} \delta_{2,C}^2 \left \| \mathbf{Y}_k - \pi_{\mathbf{C}} \hat{\mathbf{y}}_k^{\top} \right \|_{\mathbf{C}}^2 \nonumber \\
	& + \frac{3\alpha^2 L^3}{2} n (\pi_{\mathbf{C}}^{\top} \pi_{\mathbf{R}})^2 \delta_{2,R}^2 \left \| \mathbf{X}_{k} - \mathbf{1} \bar{\mathbf{x}}_{k}^{\top} \right \|_{\mathbf{R}}^2.
\end{align}
Summing both sides of~\eqref{eq:f_update} over $0,\ldots,k-1$ and utilizing Lemma~\ref{lem6} and Lemma~\ref{lem7}, we have
\begin{align}\label{eq:induct_sum_f}
	& \alpha \pi_{\mathbf{C}}^{\top} \pi_{\mathbf{R}} \left(1 - \frac{3\alpha L \pi_{\mathbf{C}}^{\top} \pi_{\mathbf{R}} }{2}\right) \sum_{t=0}^{k-1}\left \|\nabla f (\bar{\mathbf{x}}_k)\right \|^2  \nonumber \\
	\le & f(\bar{\mathbf{x}}_{0})-f^* + \alpha \left[ \frac{ (c_0 b_1 + c_3b_2)}{1-\lambda} + \frac{(c_2 b_1+c_5b_2)e_0S_e}{2{\alpha}(1-\lambda)^2} \right]\nonumber\\
	& + \alpha^2 \left[ \frac{(c_0^2b_3 + c_3^2b_4)}{(1-\lambda)^2} + \frac{(c_2^2b_3+c_5^2b_4) e_0 S_e}{(1-\lambda)^2} \right] \nonumber\\
	& + \alpha^2 \left[ \frac{c_2 b_1+c_5b_2}{2} + \frac{c_1 b_1 + c_4b_2}{1-\lambda}\right]\sum_{t=0}^{k-1}\left \| \nabla f(\bar{\mathbf{x}}_t) \right \|^2 \nonumber\\
	& + \frac{\alpha^4 (c_1^2b_3 + c_4^2b_4)}{(1-\lambda)^2} \sum_{t=0}^{k-2}\left \| \nabla f(\bar{\mathbf{x}}_t) \right \|^2 \nonumber\\
	& + \alpha (c_0 b_1 + c_3b_2) \sum_{t=0}^{k-1}\lambda^{t} \left \| \nabla f (\bar{\mathbf{x}}_t) \right \|^2.	
\end{align}
If $\lambda^{\top} \le \frac{\alpha}{1-\lambda}$, i.e., $ t \ge k_0 \triangleq \frac{\ln{\alpha}-\ln{(1-\lambda)}}{\ln{\lambda}} $
It then follows that
\begin{align}
	& \sum_{t=0}^{k-1}\lambda^{t} \left \| \nabla f (\bar{\mathbf{x}}_t) \right \|^2 \nonumber \\
	= & \sum_{t=0}^{k_0} \lambda^{t} \left \| \nabla f (\bar{\mathbf{x}}_t) \right \|^2 + \sum_{t=k_0+1}^{k-1} \lambda^{t} \left \| \nabla f (\bar{\mathbf{x}}_t) \right \|^2 \nonumber \\
	\le & \sum_{t=0}^{k_0} \left \| \nabla f (\bar{\mathbf{x}}_t) \right \|^2 + \frac{\alpha}{1-\lambda} \sum_{t=0}^{k-1}\left \| \nabla f (\bar{\mathbf{x}}_t) \right \|^2. \nonumber \\
\end{align}
Thus for~\eqref{eq:induct_sum_f} we have
\begin{align}
	& \alpha \pi_{\mathbf{C}}^{\top} \pi_{\mathbf{R}} \left(1 - \frac{3\alpha L \pi_{\mathbf{C}}^{\top} \pi_{\mathbf{R}} }{2} \right) \sum_{t=0}^{k-1}\left \|\nabla f (\bar{\mathbf{x}}_k)\right \|^2 \nonumber \\
	\le & f(\bar{\mathbf{x}}_{0})-f^* + \frac{(c_2 b_1+c_5b_2)e_0S_e}{2(1-\lambda)^2} + \alpha \left[ \frac{ (c_0 b_1 + c_3b_2)}{1-\lambda} \right] \nonumber \\
	& + \alpha^2 \left[ \frac{c_0^2b_3 + c_3^2b_4+ (c_2^2b_3+c_5^2b_4) e_0 S_e}{(1-\lambda)^2}  \right] \nonumber \\
	& + \alpha^2 \left[\frac{ c_1 b_1 + c_4b_2+c_0 b_1 + c_3b_2}{1-\lambda} + \frac{c_1^2b_3 + c_4^2b_4}{(1-\lambda)^2} \right. \nonumber \\
	& \left.+ \frac{(c_2 b_1+c_5b_2)}{2}\right] \sum_{t=0}^{k-1}\left \| \nabla f (\bar{\mathbf{x}}_t) \right \|^2 \nonumber \\
	& + \alpha (c_0 b_1 + c_3b_2) \sum_{t=0}^{k_0} \left \| \nabla f (\bar{\mathbf{x}}_t) \right \|^2
\end{align}

Move all terms related to $\left \| \nabla f(\bar{\mathbf{x}}_t) \right \|^2$ to the left hand side and we have
\begin{align}\label{eq:sum_norm2_f_barx}
		& \gamma \sum_{t=0}^{k-1}\left \|\nabla f (\bar{\mathbf{x}}_t)\right \|^2  \nonumber\\
		\le & f(\bar{\mathbf{x}}_{0})-f^*  + \frac{(c_2 b_1+c_5b_2)e_0S_e}{2(1-\lambda)^2}\nonumber\\
		& +  \alpha  \frac{ (c_0 b_1 + c_3b_2)}{1-\lambda}\left(1+\sum_{t=0}^{k_0} \left \| \nabla f (\bar{\mathbf{x}}_t) \right \|^2\right) \nonumber\\
		& + \alpha^2 \left[ \frac{c_0^2b_3 + c_3^2b_4+ (c_2^2b_3+c_5^2b_4) e_0 S_e}{(1-\lambda)^2}  \right],	
\end{align}
where
\begin{align}
	\begin{aligned}
		\gamma = & \alpha  \pi_{\mathbf{C}}^{\top} \pi_{\mathbf{R}} -\alpha^2 \left[ \frac{3L (\pi_{\mathbf{C}}^{\top} \pi_{\mathbf{R}})^2 }{2} + \frac{ c_1 b_1 + c_4b_2+c_0 b_1 + c_3b_2}{1-\lambda} \nonumber \right. \nonumber \\
		& \left. + \frac{c_1^2b_3 + c_4^2b_4}{(1-\lambda)^2} + \frac{(c_2 b_1+c_5b_2)}{2}\right]> 0.
	\end{aligned}
\end{align}
Thus the step size $\alpha$ should satisfy
\begin{align}
	\alpha <& \pi_{\mathbf{C}}^{\top} \pi_{\mathbf{R}}\left[ \frac{3L (\pi_{\mathbf{C}}^{\top} \pi_{\mathbf{R}})^2 }{2} + \frac{ c_1 b_1 + c_4b_2+c_0 b_1 + c_3b_2}{1-\lambda} \nonumber \right. \nonumber \\
	& \left. + \frac{c_1^2b_3 + c_4^2b_4}{(1-\lambda)^2} + \frac{(c_2 b_1+c_5b_2)}{2}\right]^{-1}.
\end{align}
We can obtain~\eqref{eq:bound_s5} from~\eqref{eq:sum_norm2_f_barx} and~\eqref{eq:bound_s6} from Lemma~\ref{lem7}.
According to~\eqref{eq:bound_s5} and~\eqref{eq:bound_s6}, we have 
\begin{align*}
	\sum_{k=0}^{\infty} \left \|\nabla f (\bar{\mathbf{x}}_k)\right \|^2 < \infty,~\text{and}~ \sum_{t=0}^{\infty} \left \| \mathbf{X}_{t} - \mathbf{1} \bar{\mathbf{x}}_{t}^{\top} \right \|_{\mathbf{R}} ^2 < \infty.
\end{align*}
Hence, both the squared norm of consensus error $\left \| \mathbf{X}_{t} - \mathbf{1} \bar{\mathbf{x}}_{t}^{\top} \right \|_{\mathbf{R}} ^2$ and the squared gradient norm of average dual variable $\left \|\nabla f (\bar{\mathbf{x}}_k)\right \|^2$ converge to $0$ at an ergodic rate of $\mathcal{O}(1/k) $.

\subsection{Proof of Lemma 8}
To prove the linear convergence of $f$, we first establish the following linear inequality systems w.r.t. $\left \| \mathbf{X}_{k} - \mathbf{1} \bar{\mathbf{x}}_{k}^{\top} \right \|_{\mathbf{R}}^2$, $\left \| \mathbf{Y}_k - \pi_{\mathbf{C}} \hat{\mathbf{y}}_k^{\top} \right \|_{\mathbf{C}}^2$ and $f(\bar{\mathbf{x}}_k) - f^*$. For simplicity of notation, we denote $G_k = f(\bar{\mathbf{x}}_k) - f^*$.

\textbf{Step 1: The first inequality w.r.t. $\left \| \mathbf{X}_{k} - \mathbf{1} \bar{\mathbf{x}}_{k}^{\top} \right \|_{\mathbf{R}}^2$:}\\
Using~\eqref{eq:x_con_err_1} we obtain
\begin{align*}
	& \left\|  \mathbf{X}_{k+1} - \mathbf{1} \bar{\mathbf{x}}_{k+1}^{\top} \right \|_{\mathbf{R}}^2 \\
	\le & \left( \sigma_{\mathbf{R}}\left\| \mathbf{X}_{k} - \mathbf{1} \bar{\mathbf{x}}_{k}^{\top}\right\|_{\mathbf{R}} + \alpha\left\| \mathbf{Y}_k\right \|_{\mathbf{R}} \right)^2 + (1+\sigma_{\mathbf{R}})^2ne_k^2 \\
	&+ 2\left( \sigma_{\mathbf{R}}\left\| \mathbf{X}_{k} - \mathbf{1} \bar{\mathbf{x}}_{k}^{\top} \right\| + \alpha\left\| \mathbf{Y}_k \right \|_{\mathbf{R}}\right)(1+\sigma_{\mathbf{R}})\sqrt{n}e_k.
\end{align*}
Based on the Cauchy-Schwarz inequality, we have
\begin{align*}
	& \left( \sigma_{\mathbf{R}}\left\| \mathbf{X}_{k} - \mathbf{1} \bar{\mathbf{x}}_{k}^{\top}\right\|_{\mathbf{R}} + \alpha\left\| \mathbf{Y}_k\right \|_{\mathbf{R}} \right)^2 \\
	\le & \frac{2\sigma_{\mathbf{R}}^2}{1+\sigma_{\mathbf{R}}^2} \left\| \mathbf{X}_{k} - \mathbf{1} \bar{\mathbf{x}}_{k}^{\top}\right\|_{\mathbf{R}}^2 + \frac{2}{1-\sigma_{\mathbf{R}}^2} \alpha ^2 \left\| \mathbf{Y}_k\right \|_{\mathbf{R}}^2,
\end{align*}
and
\begin{align*}
	&2\left( \sigma_{\mathbf{R}}\left\| \mathbf{X}_{k} - \mathbf{1} \bar{\mathbf{x}}_{k}^{\top} \right\|_{\mathbf{R}} + \alpha\left\| \mathbf{Y}_k \right \|_{\mathbf{R}}\right)(1+\sigma_{\mathbf{R}})\sqrt{n}e_k \\
	\le & \frac{(1-\sigma_{\mathbf{R}}^2)\sigma_{\mathbf{R}}^2}{1+\sigma_{\mathbf{R}}^2} \left\|\mathbf{X}_{k} - \mathbf{1} \bar{\mathbf{x}}_{k}^{\top} \right\|_{\mathbf{R}}^2 + \alpha^2 \left\| \mathbf{Y}_k \right \|_{\mathbf{R}}^2 \\
	& + \frac{2(1+\sigma_{\mathbf{R}})^2}{1+\sigma_{\mathbf{R}}^2}ne_k^2.
\end{align*} 
For the terms regarding $\left \| \mathbf{Y}_k \right \|_{\mathbf{R}}^2$, we have
\begin{align*}
	\left \| \mathbf{Y}_k \right \|_{\mathbf{R}}^2
	& \le 2\delta_{\mathbf{R},\mathbf{C}}^2\left \| \mathbf{Y}_k - \pi_{\mathbf{C}} \hat{\mathbf{y}}_k^{\top} \right \|_{\mathbf{C}}^2 + 2\left \| \hat{\mathbf{y}}_k^{\top} \right \|_{\mathbf{R}}^2.
\end{align*}
And based on the $L$-Lipschitz smoothness of $f_i$, we have
\begin{align*}
	\left \| \hat{\mathbf{y}}_k^{\top} \right \|_{\mathbf{R}}^2 &= \left \| \mathbf{1}^{\top} \nabla\mathbf{f} (\mathbf{X}_{k})- \mathbf{1}^{\top} \nabla\mathbf{f} (\mathbf{1}\bar{\mathbf{x}}_k^{\top}) + \mathbf{1}^{\top} \nabla\mathbf{f} (\mathbf{1}\bar{\mathbf{x}}_k^{\top})\right \|_{\mathbf{R}}^2 \nonumber \\
	& \le 2  n L^2 \left\| \mathbf{X}_{k} - \mathbf{1} \bar{\mathbf{x}}_{k}^{\top} \right \|_{\mathbf{R}}^2 + 2  \left \| \nabla f(\bar{\mathbf{x}}_k) \right \|_2^2.
\end{align*}
Thus now we can obtain the inequality w.r.t. $\left\| \mathbf{X}_{k} - \mathbf{1} \bar{\mathbf{x}}_{k}^{\top} \right \|_{\mathbf{R}}^2$ in~\eqref{eq:inqeasys_PL}.

\textbf{Step 2: The second inequality w.r.t. $\left \| \mathbf{Y}_{k+1}-\pi_{\mathbf{C}}\hat{\mathbf{y}}_{k+1}^{\top} \right \|_{\mathbf{C}}^2$:}\\
From~\eqref{eq:Y_k+1}, we obtain that
\begin{align*}
	\begin{aligned}
		&\left\| \mathbf{Y}_{k+1} - \pi_{\mathbf{C}} \hat{\mathbf{y}}_{k+1}^{\top}  \right \|_{\mathbf{C}}^2  \\
		\le & \frac{1+\sigma_{\mathbf{C}}^2}{2} \left \| \mathbf{Y}_k - \pi_{\mathbf{C}} \hat{\mathbf{y}}_k^{\top}  \right \|_{\mathbf{C}}^2 + 2 \frac{1+\sigma_{\mathbf{C}}^2}{1-\sigma_{\mathbf{C}}^2} L^2 \left \| \mathbf{X}_{k+1}-\mathbf{X}_{k} \right \|_{\mathbf{C}}^2\\
		& + 2 \frac{1+\sigma_{\mathbf{C}}^2}{1-\sigma_{\mathbf{C}}^2} (1+ \sigma_{\mathbf{C}})^2 ne_k^2.
	\end{aligned}
\end{align*}

For $\left \| \mathbf{X}_{k+1}-\mathbf{X}_{k} \right \|_{\mathbf{C}}^2$, from~\eqref{eq:x_k+1 - x_k} we have 
\begin{align*}
	& \left \| \mathbf{X}_{k+1}-\mathbf{X}_{k} \right \|_{\mathbf{C}}^2  \\
	\le & \left( 4 \delta_{\mathbf{C},\mathbf{R}}^2 (1+\sigma_{\mathbf{R}})^2 + 8\alpha^2 \delta_{\mathbf{C},\mathbf{R}}^2nL^2 \right)\left\| \mathbf{X}_{k} - \mathbf{1} \bar{\mathbf{x}}_{k}^{\top} \right \|_{\mathbf{R}}^2 \\
	& + 4 \alpha^2 \left \| \mathbf{Y}_k - \pi_{\mathbf{C}} \hat{\mathbf{y}}_k^{\top}  \right \|_{\mathbf{C}}^2 + 8\alpha^2 \left \| \nabla f(\bar{\mathbf{x}}_k) \right \|_2^2\\
	&+ 4\delta_{\mathbf{C},\mathbf{R}}^2 (1+\sigma_{\mathbf{R}})^2ne_k^2.
\end{align*}
Thus we can obtain the inequality w.r.t. $\left\| \mathbf{Y}_{k} - \pi_{\mathbf{C}} \hat{\mathbf{y}}_{k}^{\top}  \right \|_{\mathbf{C}}^2 $ in~\eqref{eq:inqeasys_PL}.

\textbf{Step 3: The third inequality w.r.t. $f(\bar{\mathbf{x}}_k) - f^*$:}\\
By integrating P-Ł condition into~\eqref{eq:f_update} we have
\begin{align*}
	\begin{aligned}
		G_{k+1} \le & \left(1-2\mu\pi_{\mathbf{C}}^{\top} \pi_{\mathbf{R}}\alpha + 2L\alpha^2+3(\pi_{\mathbf{C}}^{\top} \pi_{\mathbf{R}})^2 L^2\alpha^2 \right )G_{k}\\ &+\frac{(1+3\alpha^2L)}{2}L^2 n (\pi_{\mathbf{C}}^{\top} \pi_{\mathbf{R}})^2 \delta_{2,R}^2 \left \| \mathbf{X}_{k} - \mathbf{1} \bar{\mathbf{x}}_{k}^{\top}\right \|_{\mathbf{R}}^2 \\
		& +  \frac{(1+3\alpha^2L)}{2} \delta_{2,C}^2 \left \| \mathbf{Y}_k - \pi_{\mathbf{C}} \hat{\mathbf{y}}_k^{\top}\right \|_{\mathbf{C}}^2.
	\end{aligned}
\end{align*}
Since $f_i$ is $L$-Lipschitz smooth for all $i \in \mathcal{N}$, the global dual objective $f$ is $nL$-Lipschitz smooth, i.e., $\forall \mathbf{x}, \mathbf{y} $ in $\mathbb{R}^m$,
\begin{align*}
	f(\mathbf{y}) \le f(\mathbf{x}) + \nabla f(\mathbf{x}) ^{\top}(\mathbf{y} - \mathbf{x}) + \frac{nL}{2} \left\| \mathbf{y} - \mathbf{x} \right \|^2.
\end{align*}
Substitute $\mathbf{y} = \mathbf{x} - \frac{1}{nL}\nabla f(\mathbf{x})$ into the equation, we obtain that
\begin{align*}
	f^* \le f(\mathbf{x} - \frac{1}{nL}\nabla f(\mathbf{x})) \le f(\mathbf{x}) - \frac{1}{2nL}\left \|\nabla f (\mathbf{x}_k)\right \|^2.
\end{align*}
Thus, by using the $nL$-Lipschitz smoothness of $f$, i.e., $\left \| f(\bar{\mathbf{x}}_k) \right \|^2 \le 2nLG_{k}$, we can obtain the inequality system~\eqref{eq:inqeasys_PL}.

\subsection{Proof of Theorem 2}
\label{apdx:thm2}
In light of Lemma~\ref{lem9}, we first provide conditions on $\alpha$ that ensure the spectral radius $\rho(\mathbf{P})<1$. Assume $\alpha < \frac{1}{2\beta\pi_{\mathbf{C}}^{\top} \pi_{\mathbf{R}}}$, it sufficies to ensure that diagonal entries $p_{11}, p_{22}, p_{33} <1$ and $\det(\mathbf{I}-\mathbf{P})>0$, or more aggressively,
\begin{align}\label{eq: det I-P}
	\begin{aligned}
		& \det(\mathbf{I}-\mathbf{P}) \\
		= &(1-p_{11})(1-p_{22})(1-p_{33})- p_{12}p_{23}p_{31}\\
		&- p_{13}p_{21}p_{32} - p_{13}(1-p_{22})p_{31} \\
		&- (1-p_{11} )p_{23}p_{32}- p_{12}p_{21}(1-p_{33})\\
		> & \frac{1}{2} (1-p_{11})(1-p_{22})(1-p_{33}).
	\end{aligned}
\end{align}
Now we consider sufficient conditions under wihch $p_{11}, p_{22}, p_{33}<1$ and~\eqref{eq: det I-P} holds.
First, $p_{11}, p_{22}<1$ is ensured respectively by
\begin{align*}
	1-p_{11} > \frac{(1-\sigma_{\mathbf{R}}^2)^2}{2(1+\sigma_{\mathbf{R}}^2)}, \quad 1-p_{22} > \frac{1-\sigma_{\mathbf{C}}^2}{4},
\end{align*}
which requires
\begin{align*}
	\alpha < \min \left \{ \frac{1}{2\beta\pi_{\mathbf{C}}^{\top} \pi_{\mathbf{R}}}, \frac{1-\sigma_{\mathbf{R}}^2}{4\sqrt{n}L\sqrt{1+\sigma_{\mathbf{R}}^2}},  \frac{1-\sigma_{\mathbf{C}}^2}{4\sqrt{2}L\sqrt{1+\sigma_{\mathbf{C}}^2}}\right \}.
\end{align*}
And $p_{33} < 1$ is ensured by
\begin{align*}
	\alpha < & \frac{2\beta \pi_{\mathbf{C}}^{\top} \pi_{\mathbf{R}}}{2L + 3L^2(\pi_{\mathbf{C}}^{\top} \pi_{\mathbf{R}})^2}.
\end{align*}

Then, notice that $1-p_{11} < \frac{(1-\sigma_{\mathbf{R}}^2)^2}{1+\sigma_{\mathbf{R}}^2}$ and $1-p_{22} < \frac{1-\sigma_{\mathbf{C}}^2}{2}$. Substituting the lower and upper bounds of $1-p_{11}$ and $1-p_{22}$ into~\eqref{eq: det I-P}, we obtain:
\begin{align*}\label{eq:lower bound det}
	&\frac{1}{2} (1-p_{11})(1-p_{22})(1-p_{33})- p_{12}p_{23}p_{31} - p_{13}p_{21}p_{32} \\ 
	&> h_1 \alpha -h_2 \alpha^2 -h_3 \alpha^3 -h_4 \alpha^4 -h_5 \alpha^5 -h_6\alpha^6 \\
	&> h_1 \alpha -h_2 \alpha^2 -(h_3+h_4+h_5+h_6)\alpha^3
\end{align*}
The final inequality is established by truncating the higher-order terms to $\alpha^3$. Its validity is ensured by the positivity of all coefficients and the assumption that $\alpha \in (0,1)$.
In order for~\eqref{eq: det I-P} to hold, it is sufficient that the lower bound in~\eqref{eq:lower bound det} be strictly positive:
\begin{align*}
	h_1\alpha - h_2\alpha^2 - (h_3+h_4+h_5+h_6) \alpha^3 > 0.
\end{align*}
Hence,
\begin{align*}
	\alpha <\frac{-h_2 + \sqrt{h_2^2 + 4h_1(h_3+h_4+h_5+h_6)}}{2(h_3+h_4+h_5+h_6)}.
\end{align*}
The following equation is established by induction to support the convergence analysis:
\begin{align}
	\mathbf{z}_k \le \mathbf{P}^k \mathbf{z}_{0} + \sum_{t=0}^{k-1}\mathbf{P}^{t} \mathbf{Q}e_{k-1-t}^2,
\end{align}
where $\mathbf{z}_k \triangleq [	\left \| \mathbf{X}_{k}-\mathbf{1}\bar{\mathbf{x}}_{k}^{\top} \right \|_{\mathbf{R}}^2, \left \| \mathbf{Y}_{k}-\pi_{\mathbf{C}}\hat{\mathbf{y}}_{k}^{\top} \right \|_{\mathbf{C}}^2, f(\bar{\mathbf{x}}_k) - f^*]^{\top}$. Assuming $e_t \le Es^{t}$, it can be obtained that
\begin{align}
	\left \| \mathbf{Q}e_t^2 \right \| \le h_7 E^2s^{2t}.
\end{align}

Since~\eqref{eq:PL_a_bd} provides sufficient condition on $\alpha$ that ensures $\lambda<1$, it holds that:
\begin{align}
	\left \| \mathbf{P}^{t}\right\| \le h_8 \rho^{t}(\mathbf{P}),
\end{align}
where $h_8 = \sqrt{3}\frac{\max_{1\le i\le 3}\nu_i}{\min_{1\le i\le 3}\nu_i}$ and $\nu = [\nu_1, \nu_2, \nu_3]^{\top}$ is an eigenvector of $\mathbf{P}$ associated with the eigenvalue $\lambda = \rho(\mathbf{P})$. If we choose the geometrically decaying error as $s\in (\sqrt{\lambda}, 1)$ and $E \ge \frac{\left\| \mathbf{z}_0 \right \| (s^2-\lambda)}{h_7}$, it holds that
\begin{align}
		\left\| \mathbf{z}_k \right \| \le & \left\| \mathbf{P}^k \mathbf{z}_{0} \right \| + \sum_{t=0}^{k-1}\left \|  \mathbf{P}^{t}\right\| 	\left \| \mathbf{Q}e_{k-1-t}^2 \right \|\nonumber \\
		\le & h_8\lambda^k\left\| \mathbf{z}_0 \right \| + \sum_{t=0}^{k-1} h_7h_8E^2\lambda^{t}s^{2(k-1-t)} \nonumber\\
		= & h_8\lambda^k \left(\left\| \mathbf{z}_0 \right \| -\frac{h_7E^2}{s^2 - \lambda} \right) + \frac{h_7h_8E^2 s^{2k}}{s^2 - \lambda}\nonumber \\
		\le & \frac{h_7h_8E^2 s^{2k}}{s^2 - \lambda}.
\end{align}
Therefore, we can directly obtain~\eqref{eq:PL_linear_converge} which concludes that both the consensus error $\left \| \mathbf{X}_{k}-\mathbf{1}\bar{\mathbf{x}}_{k}^{\top} \right \|_{\mathbf{R}}$ and the optimality gap $ f(\bar{\mathbf{x}}_k) - f^*$ converges to $0$ linearly.

\section*{References}
\bibliographystyle{IEEEtran}
\bibliography{main}

\begin{thebibliography}{10}
\providecommand{\url}[1]{#1}
\csname url@samestyle\endcsname
\providecommand{\newblock}{\relax}
\providecommand{\bibinfo}[2]{#2}
\providecommand{\BIBentrySTDinterwordspacing}{\spaceskip=0pt\relax}
\providecommand{\BIBentryALTinterwordstretchfactor}{4}
\providecommand{\BIBentryALTinterwordspacing}{\spaceskip=\fontdimen2\font plus
\BIBentryALTinterwordstretchfactor\fontdimen3\font minus \fontdimen4\font\relax}
\providecommand{\BIBforeignlanguage}[2]{{%
\expandafter\ifx\csname l@#1\endcsname\relax
\typeout{** WARNING: IEEEtran.bst: No hyphenation pattern has been}%
\typeout{** loaded for the language `#1'. Using the pattern for}%
\typeout{** the default language instead.}%
\else
\language=\csname l@#1\endcsname
\fi
#2}}
\providecommand{\BIBdecl}{\relax}
\BIBdecl

\bibitem{zhang2012convergence}
Z.~Zhang and M.~Y. Chow, ``Convergence analysis of the incremental cost consensus algorithm under different communication network topologies in a smart grid,'' \emph{IEEE Transactions on power systems}, vol.~27, no.~4, pp. 1761--1768, 2012.

\bibitem{xiao2004simultaneous}
L.~Xiao, M.~Johansson, and S.~P. Boyd, ``Simultaneous routing and resource allocation via dual decomposition,'' \emph{IEEE Transactions on Communications}, vol.~52, no.~7, pp. 1136--1144, 2004.

\bibitem{camisa2022multi}
A.~Camisa, A.~Testa, and G.~Notarstefano, ``Multi-robot pickup and delivery via distributed resource allocation,'' \emph{IEEE Transactions on Robotics}, vol.~39, no.~2, pp. 1106--1118, 2022.

\bibitem{lakshmanan2008decentralized}
H.~Lakshmanan and D.~P. De~Farias, ``Decentralized resource allocation in dynamic networks of agents,'' \emph{SIAM Journal on Optimization}, vol.~19, no.~2, pp. 911--940, 2008.

\bibitem{nedic2018improved}
A.~Nedi{\'c}, A.~Olshevsky, and W.~Shi, ``Improved convergence rates for distributed resource allocation,'' in \emph{Proceedings of 57th IEEE Conference on Decision and Control}, 2018, pp. 172--177.

\bibitem{wu2021new}
X.~Wu, S.~Magn{\'u}sson, and M.~Johansson, ``A new family of feasible methods for distributed resource allocation,'' in \emph{Proceedings of the 60th IEEE Conference on Decision and Control}, 2021, pp. 3355--3360.

\bibitem{xu2017distributed}
Y.~Xu, T.~Han, K.~Cai, Z.~Lin, G.~Yan, and M.~Fu, ``A distributed algorithm for resource allocation over dynamic digraphs,'' \emph{IEEE Transactions on Signal Processing}, vol.~65, no.~10, pp. 2600--2612, 2017.

\bibitem{falsone2020tracking}
A.~Falsone, I.~Notarnicola, G.~Notarstefano, and M.~Prandini, ``Tracking-{ADMM} for distributed constraint-coupled optimization,'' \emph{Automatica}, vol. 117, p. 108962, 2020.

\bibitem{jiang2022distributed}
W.~Jiang, M.~Doostmohammadian, and T.~Charalambous, ``Distributed resource allocation via {ADMM} over digraphs,'' in \emph{Proceedings of the 61st IEEE Conference on Decision and Control}, 2022, pp. 5645--5651.

\bibitem{yang2016distributed}
T.~Yang, J.~Lu, D.~Wu, J.~Wu, G.~Shi, Z.~Meng, and K.~H. Johansson, ``A distributed algorithm for economic dispatch over time-varying directed networks with delays,'' \emph{IEEE Transactions on Industrial Electronics}, vol.~64, no.~6, pp. 5095--5106, 2016.

\bibitem{zhang2020distributed}
J.~Zhang, K.~You, and K.~Cai, ``Distributed dual gradient tracking for resource allocation in unbalanced networks,'' \emph{IEEE Transactions on Signal Processing}, vol.~68, pp. 2186--2198, 2020.

\bibitem{yang2019survey}
T.~Yang, X.~Yi, J.~Wu, Y.~Yuan, D.~Wu, Z.~Meng, Y.~Hong, H.~Wang, Z.~Lin, and K.~H. Johansson, ``A survey of distributed optimization,'' \emph{Annual Reviews in Control}, vol.~47, pp. 278--305, 2019.

\bibitem{boyd2011distributed}
S.~Boyd, N.~Parikh, E.~Chu, B.~Peleato, and J.~Eckstein, ``Distributed optimization and statistical learning via the alternating direction method of multipliers,'' \emph{Foundations and Trends in Machine learning}, vol.~3, no.~1, pp. 1--122, 2011.

\bibitem{nedic2014distributed}
A.~Nedi{\'c} and A.~Olshevsky, ``Distributed optimization over time-varying directed graphs,'' \emph{IEEE Transactions on Automatic Control}, vol.~60, no.~3, pp. 601--615, 2014.

\bibitem{nedic2017achieving}
A.~Nedic, A.~Olshevsky, and W.~Shi, ``Achieving geometric convergence for distributed optimization over time-varying graphs,'' \emph{SIAM Journal on Optimization}, vol.~27, no.~4, pp. 2597--2633, 2017.

\bibitem{zeng2017extrapush}
J.~Zeng and W.~Yin, ``Extrapush for convex smooth decentralized optimization over directed networks,'' \emph{Journal of Computational Mathematics}, vol.~35, no.~4, pp. 383--396, 2017.

\bibitem{xi2017dextra}
C.~Xi and U.~A. Khan, ``{DEXTRA}: A fast algorithm for optimization over directed graphs,'' \emph{IEEE Transactions on Automatic Control}, vol.~62, no.~10, pp. 4980--4993, 2017.

\bibitem{pu2020push}
S.~Pu, W.~Shi, J.~Xu, and A.~Nedi{\'c}, ``Push--pull gradient methods for distributed optimization in networks,'' \emph{IEEE Transactions on Automatic Control}, vol.~66, no.~1, pp. 1--16, 2020.

\bibitem{wu2012event}
J.~Wu, Q.~S. Jia, K.~H. Johansson, and L.~Shi, ``Event-based sensor data scheduling: Trade-off between communication rate and estimation quality,'' \emph{IEEE Transactions on automatic control}, vol.~58, no.~4, pp. 1041--1046, 2012.

\bibitem{deng2021distributed}
C.~Deng, C.~Wen, J.~Huang, X.~M. Zhang, and Y.~Zou, ``Distributed observer-based cooperative control approach for uncertain nonlinear mass under event-triggered communication,'' \emph{IEEE Transactions on Automatic Control}, vol.~67, no.~5, pp. 2669--2676, 2021.

\bibitem{solowjow2020event}
F.~Solowjow and S.~Trimpe, ``Event-triggered learning,'' \emph{Automatica}, vol. 117, p. 109009, 2020.

\bibitem{huang2024distributed}
Y.~Huang, X.~Zeng, J.~Sun, and Z.~Meng, ``Distributed event-triggered algorithm for convex optimization with coupled constraints,'' \emph{Automatica}, vol. 170, p. 111877, 2024.

\bibitem{chen2024efficient}
X.~Chen, W.~Huo, Y.~Wu, S.~Dey, and L.~Shi, ``An efficient distributed {N}ash equilibrium seeking with compressed and event-triggered communication,'' \emph{IEEE Transactions on Automatic Control}, vol.~70, no.~3, pp. 2035--2042, 2024.

\bibitem{liu2019distributed}
C.~Liu, H.~Li, Y.~Shi, and D.~Xu, ``Distributed event-triggered gradient method for constrained convex minimization,'' \emph{IEEE Transactions on Automatic Control}, vol.~65, no.~2, pp. 778--785, 2019.

\bibitem{wan2021adaptive}
Y.~Wan, C.~Long, R.~Deng, G.~Wen, and X.~Yu, ``Adaptive event-triggered strategy for economic dispatch in uncertain communication networks,'' \emph{IEEE Transactions on Control of Network Systems}, vol.~8, no.~4, pp. 1881--1891, 2021.

\bibitem{li2022distributed}
R.~Li and G.-H. Yang, ``Distributed event-triggered algorithm designs for resource allocation problems via a universal scalar function-based analysis,'' \emph{IEEE Transactions on Cybernetics}, vol.~54, no.~4, pp. 2224--2234, 2022.

\bibitem{guo2022distributed}
Z.~Guo and G.~Chen, ``Distributed dynamic event-triggered and practical predefined-time resource allocation in cyber--physical systems,'' \emph{Automatica}, vol. 142, p. 110390, 2022.

\bibitem{dai2020distributed}
H.~Dai, X.~Fang, and W.~Chen, ``Distributed event-triggered algorithms for a class of convex optimization problems over directed networks,'' \emph{Automatica}, vol. 122, p. 109256, 2020.

\bibitem{dong2022distributed}
Z.~Dong, S.~Mao, M.~Perc, W.~Du, and Y.~Tang, ``A distributed dynamic event-triggered algorithm with linear convergence rate for the economic dispatch problem,'' \emph{IEEE Transactions on Network Science and Engineering}, vol.~10, no.~1, pp. 500--513, 2022.

\bibitem{yuan2023distributed}
Y.~Yuan, W.~He, Y.~C. Tian, W.~Du, and F.~Qian, ``Distributed discrete-time optimization over directed networks: A dynamic event-triggered algorithm,'' \emph{Information Sciences}, vol. 642, p. 119168, 2023.

\bibitem{chen2025distributed}
B.~Chen, J.~Yang, W.~Lu, W.~Pedrycz, and C.~Sun, ``A distributed algorithm with dynamic event-triggered mechanism for resource allocation problems,'' \emph{Cluster Computing}, vol.~28, no.~1, p.~17, 2025.

\bibitem{dong2024distributed}
Z.~Dong, Y.~Jin, S.~Mao, W.~Ren, W.~Du, and Y.~Tang, ``Distributed optimization with asynchronous computation and event-triggered communication,'' \emph{IEEE Transactions on Automatic Control}, vol.~70, no.~2, pp. 1084--1099, 2024.

\bibitem{bertsekas1997nonlinear}
D.~P. Bertsekas, \emph{Nonlinear programming}.\hskip 1em plus 0.5em minus 0.4em\relax Athena Scientific, 2016.

\bibitem{zhou2018fenchel}
\BIBentryALTinterwordspacing
X.~Zhou, ``On the {F}enchel duality between strong convexity and {L}ipschitz continuous gradient,'' \emph{arXiv preprint arXiv:1803.06573}, 2018. [Online]. Available: \url{https://doi.org/10.48550/arXiv.1803.06573}
\BIBentrySTDinterwordspacing

\bibitem{pu2021distributed}
S.~Pu and A.~Nedi{\'c}, ``Distributed stochastic gradient tracking methods,'' \emph{Mathematical Programming}, vol. 187, no.~1, pp. 409--457, 2021.

\bibitem{wang2021push}
Z.~Wang, D.~Wang, C.~Wen, F.~Guo, and W.~Wang, ``Push-based distributed economic dispatch in smart grids over time-varying unbalanced directed graphs,'' \emph{IEEE Transactions on Smart Grid}, vol.~12, no.~4, pp. 3185--3199, 2021.

\bibitem{yang2013consensus}
S.~Yang, S.~Tan, and J.~Xu, ``Consensus based approach for economic dispatch problem in a smart grid,'' \emph{IEEE Transactions on Power Systems}, vol.~28, no.~4, pp. 4416--4426, 2013.

\bibitem{fu2006ac}
Y.~Fu, M.~Shahidehpour, and Z.~Li, ``{AC} contingency dispatch based on security-constrained unit commitment,'' \emph{IEEE Transactions on Power Systems}, vol.~21, no.~2, pp. 897--908, 2006.

\bibitem{xing2014distributed}
H.~Xing, Y.~Mou, M.~Fu, and Z.~Lin, ``Distributed bisection method for economic power dispatch in smart grid,'' \emph{IEEE Transactions on power systems}, vol.~30, no.~6, pp. 3024--3035, 2014.

\bibitem{huo2024differentially}
\BIBentryALTinterwordspacing
W.~Huo, X.~Chen, L.~Huang, K.~H. Johansson, and L.~Shi, ``Differentially private dual gradient tracking for distributed resource allocation,'' \emph{arXiv preprint arXiv:2403.18275}, 2024. [Online]. Available: \url{https://arxiv.org/abs/2403.18275}
\BIBentrySTDinterwordspacing

\bibitem{zimmerman2010matpower}
R.~D. Zimmerman, C.~E. Murillo-S{\'a}nchez, and R.~J. Thomas, ``{MATPOWER}: Steady-state operations, planning, and analysis tools for power systems research and education,'' \emph{IEEE Transactions on power systems}, vol.~26, no.~1, pp. 12--19, 2010.

\bibitem{matpower2025}
\BIBentryALTinterwordspacing
R.~D. Zimmerman and C.~E. Murillo-S{\'a}nchez, ``{MATPOWER} (version 8.1),'' 2025, software. [Online]. Available: \url{https://matpower.org}
\BIBentrySTDinterwordspacing

\end{thebibliography}

\end{document}